\documentclass[11pt,reqno,english]{amsart}
\PassOptionsToPackage{natbib=true}{biblatex}
\usepackage{lmodern}
\usepackage{lmodern}
\usepackage[T1]{fontenc}
\usepackage[utf8]{inputenc}
\usepackage{babel}
\usepackage{amsbsy}
\usepackage{amstext}
\usepackage{amsthm}
\usepackage{amssymb}
\usepackage{graphicx}
\usepackage{geometry}
\usepackage{setspace}
\usepackage[pdfusetitle,
 bookmarks=false,
 breaklinks=false,pdfborder={0 0 1},backref=false,colorlinks=false]
 {hyperref}

\makeatletter
\numberwithin{equation}{section}
\numberwithin{figure}{section}

\usepackage{babel}
\usepackage{csquotes}   

\usepackage{amsfonts}

\newcommand{\indep}{\mathop{\perp \! \! \! \perp}}

\theoremstyle{plain}
	\newtheorem{thm}{Theorem}[section]
	\newtheorem{lem}[thm]{Lemma}\newtheorem{prop}[thm]{Proposition}

\theoremstyle{definition}

\theoremstyle{remark}

\AddToHook{package/biblatex/after}{

\renewbibmacro*{name:andothers}{
  \ifboolexpr{
    test {\ifnumequal{\value{listcount}}{\value{liststop}}}
    and
    test \ifmorenames
  }
    {\ifnumgreater{\value{liststop}}{1}
       {\finalandcomma}
       {}%
     \andothersdelim\bibstring[\emph]{andothers}}
    {}}
  
  \DeclareFieldFormat{pages}{#1} 
  \renewbibmacro{in:}{\ifentrytype{article}{}{\printtext{\bibstring{in}\intitlepunct}}} 
}

\hypersetup{colorlinks,
            bookmarksdepth=3,
            citecolor=[RGB]{46,126,42},
            linkcolor=[RGB]{128,0,6},
            urlcolor=[RGB]{138,0,135}}
\usepackage{bookmark}

\usepackage[hyphenbreaks]{breakurl}

\makeatother

\theoremstyle{plain}
\newtheorem{assumption}[thm]{\protect\assumptionname}
\theoremstyle{remark}
\newtheorem{rem}[thm]{\protect\remarkname}
\usepackage[style=authoryear,sorting=nyt,dashed=false,maxbibnames=999,maxcitenames=2,uniquename=init,uniquelist=false,giveninits=true,date=year]{biblatex}
\providecommand{\assumptionname}{Assumption}

\providecommand{\remarkname}{Remark}

\begin{document}
\title[Nonseparable first stage]{Closed-form estimation and uniform inference in additively separable
triangular models with a nonseparable first stage}
\author[K.~Sunada]{Keita Sunada}
\address{Aarhus Center for Econometrics, Aarhus University, Universitetsbyen
51, 8000 Aarhus C, Denmark.}
\email{\href{mailto:ksunada@econ.au.dk}{ksunada@econ.au.dk}}
\thanks{I am grateful to Nese Yildiz and Bin Chen for their helpful comments.
This research was supported in part by the Aarhus Center for Econometrics
(ACE) funded by the Danish National Research Foundation grant number
DNRF186.}
\date{August 23, 2026}
\begin{abstract}
This paper studies the nonparametric identification and estimation
of additively separable triangular models with continuous endogenous
and instrumental variables, allowing for a nonseparable first-stage
equation. Under the independence of instrumental variables and unobservables,
we show that the outcome function possesses a closed-form expression
as a functional of conditional cumulative distribution functions.
The resulting plug-in estimators require no regularization and converge
at the rate $n^{-m/(2m+1)}$, where $m$ is the order of smoothness
the model imposes. Also, no estimator of the outcome function converges
faster. We use the empirical bootstrap to construct a uniform confidence
band that covers the outcome function at every point of a compact
set simultaneously.
\end{abstract}

\maketitle

\section{Introduction}

This paper studies the identification and estimation of a system of
structural equations that takes the following form: 
\begin{align}
Y & =g\left(X\right)+\varepsilon,\label{eq:model}\\
X & =h\left(Z,\eta\right),\nonumber 
\end{align}
where $X$ is a continuous scalar explanatory variable (treatment),
$\varepsilon$ and $\eta$ are scalar unobservables, $Z$ is a continuous
scalar instrumental variable, and $g$ and $h$ are unknown functions.
We assume that instrument $Z$ and unobservables $\left(\varepsilon,\eta\right)$
are independent.

A key feature of (\ref{eq:model}) is that the first-stage equation
$X=h(Z,\eta)$ allows for nonseparable interactions between the instrument
and the unobservable. This is empirically relevant because in many
applications, the validity of an instrument is derived from economic
theory, and the resulting first-stage relationship is inherently nonseparable.
For instance, in demand estimation, researchers commonly use demand
shifters or product characteristics as instruments for price. These
instruments affect price through the markup implied by the firm's
first-order condition, and without strong functional-form assumptions,
the instrument and the unobservable can interact in a nonseparable
way. More generally, when instruments arise from equilibrium conditions
of structural models, separability of the first stage is the exception
rather than the rule. 

Our contributions are threefold. First, we show that $g$ has a closed-form
expression: it is identified as a functional of the conditional distribution
functions $F_{Y|X,Z}$ and $F_{X|Z}$. Second, replacing those two
functions by kernel estimators gives a plug-in estimator of $g$ that
needs no regularization and converges at the rate $n^{-m/(2m+1)}$
at a point, and is asymptotically normal at a slightly smaller bandwidth.
We also show that no estimator of $g$ converges faster, so the nonseparable
first stage does not lower the rate: it is the rate that would be
available if the endogeneity were absent. Third, we use the empirical
bootstrap to construct a uniform confidence band that covers $g$
at every point of a compact subset of the interior of the support
of $X$ simultaneously, the subset omitting the normalization point.

In the literature on identification and estimation of nonparametric
structural models, a closely related work is presented by \citet{NPV99-ECTA}.
Their estimation method relies on the additive separability of first-stage
equation $X=h\left(Z\right)+\eta$ under the mean independence condition
$\mathbb{E}\left[\varepsilon|\eta,Z\right]=\mathbb{E}\left[\varepsilon|\eta\right]$.
Thus their results are not applicable to (\ref{eq:model}). Their
estimator attains the slower of the two rates that \citet{Sto82-AoS}
shows to be optimal for the second stage and for the reduced form
separately. In model (\ref{eq:model}) both stages are univariate
and the same $m$ derivatives are assumed of each, so those two rates
coincide: theirs is $n^{-2m/(2m+1)}$ in mean square, whose square
root is the rate our estimator attains at a point. Separability of
the first stage therefore does not improve the rate at which $g$
is estimated.\footnote{Assumption \ref{assu:rate} allows a nonempty bandwidth window only
when the model has $m\ge4$ derivatives and the kernel is of order
$m$. The comparison therefore holds at $m\ge4$. Below that no bandwidth
sequence meets the assumption.} Fully nonseparable systems have also been investigated \citep{Che03-ECTA,IN09-ECTA,DF15-ECTA,Tor15-ECTA,Ish21-ET}.
Notably, \citet{IN09-ECTA} focus on quantile, average, and policy
effects rather than the direct estimation of $g$. \citet{DF15-ECTA}
and \citet{Tor15-ECTA} provide sufficient conditions for point-identification
of the nonseparable function $g(x,w,e)$ using binary instruments.
However, they do not offer clear guidance for estimation. Additionally,
\citet[Theorem S1]{Tor15-ECTA} establishes point-identification of
the function $g$ using continuous $Z$ in a non-constructive manner.
Since the model (\ref{eq:model}) emerges as a special case of the
model in \citet[Theorem S1]{Tor15-ECTA} under a level-normalization,
our first contribution lies not in establishing the identification
of $g$ but in providing the closed-form expression for $g$. \citet{torgovitsky2017minimum}
provides an estimation method for nonseparable triangular systems,
but he restricts his attention to the case that the function $g(x,w,e)$
is unknown up to a finite dimensional parameter.

Our identification result is close in spirit to those obtained by
\citet{Che03-ECTA} and \citet{CKK15-JoE}. \citet{Che03-ECTA} provides
the identification of the derivative of $g$ under a nonseparable
triangular system, which shares similarities with the form presented
in Proposition \ref{prop:identification} below.

Our estimator is more directly indebted to \citet{CKK15-JoE}. They
study nonparametric transformation models, a different model identified
under different conditions, but the estimator they build from their
identification result is one we can reuse. The difference is that
their identification result gives $g$ itself while ours gives $g^{\prime}$,
so that $g(x)$ is recovered by integrating from 0 to $x$. The endpoint
of that integral is not averaged away, and Proposition \ref{prop:lower_bound}
shows that no estimator of $g(x)$ reaches the parametric rate. Their
estimator is asymptotically linear, and ours cannot be, since asymptotic
linearity would give the parametric rate. The two asymptotic arguments
therefore share little beyond the averaging step. What is new here
is the identification result that supplies the expression to average,
and the bootstrap and the band of Section \ref{sec:Estimation-and-bootstrap}.

In the absence of assistance from the first-stage equations, one can
estimate $g$ by solving an integral equation $\mathbb{E}\left[Y|Z\right]=\mathbb{E}\left[g(X)|Z\right]$
with respect to $g$ under the mean independence condition $\mathbb{E}\left[\varepsilon|Z\right]=0$.
Given a well-known completeness condition, there exists a unique solution
to this problem. However, this approach necessitates a regularization
scheme to handle the ill-posed inverse problem, which leads to a slower
convergence rate of estimators in general \citep{NP03-ECTA,BCK07-ECTA,CFR07-Handbook,DFFR11-ECTA,Hor14-AnnRevEcon}.
Uniform confidence bands are available in that literature: \citet{HL12-JoE}
construct them from bootstrap critical values, \citet{CC18-QE} obtain
them for nonlinear functionals of a sieve estimator, and \citet{CCK25-ReStud}
choose the sieve dimension adaptively; see also \citet{Babii20-ET}.
All of them inherit the ill-posedness of the inverse problem, so the
bands rest on estimators whose rate it degrades; the band of Section
\ref{sec:Estimation-and-bootstrap} does not.

The remainder of the paper is organized as follows. Section \ref{sec:Identification}
provides the identification result. Section \ref{sec:Estimation-and-bootstrap}
proposes nonparametric estimators based on this identification result,
and studies their asymptotic behaviors. Section \ref{sec:Monte-carlo}
reports a Monte Carlo study. All proofs are deferred to the Appendix.

\section{Identification }\label{sec:Identification}

Let $\mathcal{A}$ be the support of random variable $A$, and $\mathcal{A}_{b}$
the conditional support of $A$ given $B=b$. Let $\mathcal{A}^{\circ}$
be the interior of $\mathcal{A}$. We assume that $\mathcal{Y},\mathcal{X},\mathcal{Z}\subset\mathbb{R}$.
We consider the model in (\ref{eq:model}) and impose the following
assumptions:
\begin{assumption}
\label{assu:rv}(i) The random variables $Y|X=x,Z=z$, $X|Z=z$, $Z$
are absolutely continuously distributed for each $x$, $z$. (ii)
$\mathcal{X}$ is convex. 
\end{assumption}

This assumption is imposed by \citet[Assumption C]{Tor15-ECTA}. The
first condition requires $Y$, $X$ and $Z$ to be continuously distributed.
It covers $\eta$ as well, though it does not name it: $h(z,\cdot)$
is one-to-one under Assumption \ref{assu:IV}(ii), so an atom of $\eta$
at $a$ would put one of $X$ given $Z=z$ at $h(z,a)$. The proof
of Proposition \ref{prop:identification} uses $V(\cdot|z)$, which
is continuous for that reason.
\begin{assumption}
\label{assu: func_g} (i) $g$ is continuously differentiable on $\mathcal{X}$,
the derivative being one-sided at a boundary point. (ii) $0\in\mathcal{X}^{\circ}$
such that $g(0)=0$. 
\end{assumption}

The first condition is imposed by \citet[Assumption G]{Tor15-ECTA}.
The second condition is a level normalization, which requires that
the constant be zero when the model is linear. This type of normalization
is also necessary for identification in nonseparable models.
\begin{assumption}
\label{assu:IV} (i) The instrument is independent of the unobservables:
$\left(\varepsilon,\eta\right)\indep Z$. (ii) The function $h(z,\cdot)$
is strictly increasing for each $z$. 
\end{assumption}

The first condition requires a full independence between instruments
and error terms, which is imposed by \citet{IN09-ECTA}, \citet{Tor15-ECTA},
and \citet{DF15-ECTA}. \citet{NPV99-ECTA} impose a weaker mean independence
condition $\mathbb{E}\left[\varepsilon|Z,\eta\right]=\mathbb{E}\left[\varepsilon|\eta\right]$.
Our identification result exploits the full independence, and cannot
be weakened to this mean independence. The monotonicity imposed in
the second condition is standard in the literature.
\begin{assumption}
\label{assu: local} Let $V(x|z)=F_{X|Z}(x|z)$. For every $x$ in
a dense subset $\mathcal{X}_{d}$ of $\mathcal{X}$ and $y\in\mathcal{Y}^{\circ}_{x}$,
there exists $z\in\mathcal{Z}^{\circ}$ such that (i) $y\in\mathcal{Y}^{\circ}_{x,z}$
and $x\in\mathcal{X}^{\circ}_{z^{\prime}}$ for every $z^{\prime}$
in a neighborhood of $z$, (ii) $V(x|\cdot)$ is continuously differentiable
on a neighborhood of $z$ with $\nabla_{z}V(x|z)\neq0$, and $\nabla_{x}V(x|z)$
exists, (iii) $\nabla_{y}F_{Y|X,Z}(\cdot|x,\cdot)$ and $\nabla_{z}F_{Y|X,Z}(\cdot|x,\cdot)$
exist and are continuous on a neighborhood of $(y,z)$, and $\nabla_{x}F_{Y|X,Z}(y|x,z)$
exists; write $f_{Y|X,Z}(y|x,z):=\nabla_{y}F_{Y|X,Z}(y|x,z)$. (iv)
$f_{Y|X,Z}(y|x,z)>0$. 
\end{assumption}

The first and second conditions strengthen those imposed by \citet[Theorem S1]{Tor15-ECTA}.
The third and fourth are additionally needed. Conditional distributions
are determined only up to null sets, while the four conditions above
and the conclusion of Proposition \ref{prop:identification} are statements
at a point. Throughout, $F_{Y|X,Z}$ and $V$ denote the versions
continuous in their conditioning arguments, which is what parts (ii)
and (iii) are conditions on. Assumption \ref{assu: density} asks
the same on the region where estimation takes place. The conditions
are placed on a neighborhood of $z$ rather than at $z$ alone because
the proof inverts $z\mapsto V(x|z)$ near $z$ and then differentiates
a composition in which $x$ enters through both arguments; the two
partial derivatives at the point would not suffice for that. 
\begin{assumption}
\label{assu:support} The supports satisfy $\mathcal{Y}_{x,z}=\mathcal{Y}$
and $\mathcal{X}_{z}=\mathcal{X}$ for each $x$ and $z$.
\end{assumption}

Proposition \ref{prop:identification} itself does not need this assumption.
Under it the support conditions of Assumption \ref{assu: local} (i)
hold at every $(x,y,z)$ with $x\in\mathcal{X}^{\circ}$ and $y\in\mathcal{Y}^{\circ}$,
so the identified expression holds at every $(y,z)$ at which the
derivatives in parts (ii) and (iii) of that assumption exist and $\nabla_{z}V(x|z)$
and $f_{Y|X,Z}(y|x,z)$ are nonzero. Assumption \ref{assu: density}
below imposes those conditions on the region over which the estimators
of Section \ref{sec:Estimation-and-bootstrap} average. Averaging
over $\left(y,z\right)$ reduces the asymptotic variance. 

Our first result is stated as follows:
\begin{prop}
\label{prop:identification}Suppose Assumptions \ref{assu:rv}--\ref{assu:IV}
hold. Let $(x,y,z)\in\mathcal{X}\times\mathcal{Y}\times\mathcal{Z}$
satisfy conditions (i) to (iii) of Assumption \ref{assu: local}.
Then
\begin{equation}
\nabla_{x}g(x)f_{Y|X,Z}(y|x,z)=\frac{\nabla_{x}V(x|z)}{\nabla_{z}V(x|z)}\nabla_{z}F_{Y|X,Z}(y|x,z)-\nabla_{x}F_{Y|X,Z}(y|x,z).\label{eq:identification_prod}
\end{equation}
If condition (iv) also holds, the derivative of the outcome function
$g$ is identified at $x$ as 
\begin{equation}
\nabla_{x}g(x)=\frac{\frac{\nabla_{x}V(x|z)}{\nabla_{z}V(x|z)}\nabla_{z}F_{Y|X,Z}(y|x,z)-\nabla_{x}F_{Y|X,Z}(y|x,z)}{f_{Y|X,Z}(y|x,z)}.\label{eq:identification}
\end{equation}
\end{prop}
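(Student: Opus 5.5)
The proof is a control-function argument built on $V(x|z)=F_{X|Z}(x|z)$. By Assumption \ref{assu:IV}, $V(x|z)=\Pr(h(z,\eta)\le x)=F_\eta(h^{-1}(z,x))$, where $h^{-1}(z,\cdot)$ is the inverse of the strictly increasing map $h(z,\cdot)$. Consequently the event $\{X=x,Z=z\}$ coincides with $\{\eta=h^{-1}(z,x),Z=z\}$, and the continuity of $V(\cdot|z)$ (Assumption \ref{assu:rv}) makes $U:=V(X|Z)=F_\eta(\eta)$ a strictly monotone transform of $\eta$. Full independence $(\varepsilon,\eta)\indep Z$ then gives
\begin{equation*}
F_{Y|X,Z}(y|x,z)=\Pr\bigl(\varepsilon\le y-g(x)\,\big|\,U=V(x|z)\bigr)=:G\bigl(y-g(x),V(x|z)\bigr),
\end{equation*}
where $G(e,u)=F_{\varepsilon|U}(e|u)$ does not depend on $z$. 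The key structural point is that $x$ enters only through the index $y-g(x)$ and through $u=V(x|z)$, while $z$ enters only through $u$. Since conditional distributions are defined only almost everywhere, I would state this identity for the continuous versions fixed after Assumption \ref{assu: local} and check that it holds pointwise on the neighbourhood in Assumption \ref{assu: local}(i), where the supports cover $y$ and $x$.

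Next, fix $x$ and use Assumption \ref{assu: local}(ii). The map $z'\mapsto V(x|z')$ is $C^1$ near $z$ with nonzero derivative, so by the inverse function theorem it has a local $C^1$ inverse $\zeta(\cdot)$ on a neighbourhood of $u_0=V(x|z)$. This yields the representation $G(e,u)=F_{Y|X,Z}(e+g(x)\,|\,x,\zeta(u))$, which expresses $G$ near $(y-g(x),u_0)$ as a composition of functions that are differentiable by Assumption \ref{assu: local}(iii). From it I obtain $\partial_e G=f_{Y|X,Z}(y|x,z)$ and $\partial_u G=\nabla_z F_{Y|X,Z}(y|x,z)/\nabla_z V(x|z)$, both at the base point, together with enough joint continuity to apply a chain rule in $x$.

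Finally, differentiate $F_{Y|X,Z}(y|x,z)=G(y-g(x),V(x|z))$ with respect to $x$ at fixed $(y,z)$:
\begin{equation*}
\nabla_x F_{Y|X,Z}=-\nabla_x g(x)\,\partial_e G+\partial_u G\,\nabla_x V(x|z).
\end{equation*}
Substituting the two partials of $G$ and rearranging gives (\ref{eq:identification_prod}). Dividing by $f_{Y|X,Z}>0$ under condition (iv) gives (\ref{eq:identification}).

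The main obstacle is justifying this chain rule. Assumption \ref{assu: local}(iii) gives only existence of $\nabla_x F$ and $\nabla_x V$ at the point, not joint differentiability of $G$. I would avoid needing it by differentiating $G$ only along its two arguments, which are continuously differentiable near the base point because they are compositions of $\nabla_y F$, $\nabla_z F$ and $\zeta$. Continuous partials make $G$ Fréchet differentiable there. The inner map $x\mapsto(y-g(x),V(x|z))$ is differentiable at $x$, by Assumption \ref{assu: func_g} and because $\nabla_x V$ exists, so the one-variable chain rule applies. The existence of $\nabla_x F$ is then consistent with this computation but is not needed as an input.
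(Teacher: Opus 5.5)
Your proposal is correct and follows the paper's proof essentially step for step. The control function $V(X|Z)=F_{\eta}(\eta)$ gives $F_{Y|X,Z}(y|x,z)=F_{\varepsilon|V(X|Z)}(y-g(x)|V(x|z))$, the local $C^{1}$ inverse $\zeta$ of $z\mapsto V(x|z)$ shows that this map has continuous partials in both arguments, and the chain rule in $x$, combined with the $z$-derivative to eliminate $\nabla_{v}F_{\varepsilon|V(X|Z)}$, gives the result.

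Two small remarks. First, what justifies replacing the conditioning event $\{X=x\}$ by $\{V(X|Z)=V(x|z)\}$ is that $V(\cdot|z)$ is injective at $x$. That comes from $x\in\mathcal{X}^{\circ}_{z}$ in Assumption \ref{assu: local}(i), not from continuity, since continuity alone does not make $F_{\eta}$ strictly monotone. Second, you are right that the existence of $\nabla_{x}F_{Y|X,Z}$ is an output of the chain rule rather than a needed input.
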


\begin{proof}
See Appendix \ref{sec:lemma1}.
\end{proof}

\begin{rem}
If Assumption \ref{assu: local} holds, such a $(y,z)$ exists for
every $x\in\mathcal{X}_{d}$. Since $\nabla_{x}g$ is continuous and
$\mathcal{X}_{d}$ is dense in $\mathcal{X}$, the derivative is determined
on all of $\mathcal{X}$ by its values on $\mathcal{X}_{d}$. Determination
is not the same as an expression: (\ref{eq:identification}) holds
only where conditions (i) to (iv) hold, and condition (i) puts those
$x$ in $\mathcal{X}^{\circ}$, so it is never available at the edge
of the support. $\blacksquare$
\end{rem}

\begin{rem}
Under Assumptions \ref{assu:rv}--\ref{assu:IV} and parts (i) and
(ii) of Assumption \ref{assu: local}, the additively separable model
considered in this setting is a special case of the nonseparable models
studied by \citet{Tor15-ECTA}. Since \citet[Theorem S1]{Tor15-ECTA}
provides the identification of $g(x,e)$, the identification of $\nabla_{x}g(x)$
is also known where his conditions hold. However, his identification
of $g(x,e)$ is not constructive. Thus the contribution of Proposition
\ref{prop:identification} lies in providing the closed-form expression.
$\blacksquare$
\end{rem}

\begin{rem}
Model (\ref{eq:model}) has no covariates. Discrete covariates $W$
need no change: everything below applies within each cell $\{W=w\}$,
so $g(\cdot,w)$ is identified and estimated cell by cell, with the
rate governed by the cell sizes. Continuous covariates are a different
matter, since they enter the kernel estimators and raise their dimension;
the rate falls with each one added. $\blacksquare$
\end{rem}

As a corollary of Proposition \ref{prop:identification}, we have
$g(x)=S(y,x,z)$ where 
\begin{equation}
S(y,x,z)=\int^{x}_{0}\frac{\frac{\nabla_{x}V(u|z)}{\nabla_{z}V(u|z)}\nabla_{z}F_{Y|X,Z}(y|u,z)-\nabla_{x}F_{Y|X,Z}(y|u,z)}{f_{Y|X,Z}(y|u,z)}\mathrm{d}u\label{eq:g_functional}
\end{equation}
at every $(y,z)$ such that conditions (i) to (iv) of Assumption \ref{assu: local}
hold at $(u,y,z)$ for every $u$ between $0$ and $x$. Assumption
\ref{assu:support} makes the supports invariant, and Assumption \ref{assu: density}
of Section \ref{sec:Estimation-and-bootstrap} places the remaining
conditions on the region over which the estimators average. The step
from $\nabla_{u}g$ to $g$ uses Assumption \ref{assu:rv}(ii), so
that the segment from $0$ to $x$ lies in $\mathcal{X}$, and Assumption
\ref{assu: func_g}, which makes the integrand continuous and fixes
the level. Note that the closed form holds at any triple satisfying
conditions (i) to (iv), whether or not $u$ lies in $\mathcal{X}_{d}$;
the dense subset enters Assumption \ref{assu: local} only through
the existence of a suitable $z$. We shall propose nonparametric estimators
based on this expression.

\section{Estimation and bootstrap inference }\label{sec:Estimation-and-bootstrap}

\subsection{Estimation}

Suppose we have a random sample $\left\{ Y_{i},X_{i},Z_{i}\right\} ^{n}_{i=1}$
drawn from the model. Let $\widehat{S}(y,x,z)$ be a preliminary nonparametric
estimator of $S(y,x,z)$. We follow \citet{CKK15-JoE} and consider
two estimators. The first one is a least-squares type estimator: 
\[
\widehat{g}_{\mathrm{LS}}(x):=\int\int\mu(y,z)\widehat{S}(y,x,z)\mathrm{d}y\mathrm{d}z,
\]
where $\mu(y,z)$ is a weighting function. The second one is a smoothed
version of the least absolute deviation (LAD) type estimator by \citet{horowitz1998bootstrap}:
\begin{align*}
\hat{g}_{\mathrm{LAD}}(x) & =\arg\min_{q}Q_{b}(q|\hat{S}(\cdot,x,\cdot)),\\
Q_{b}(q|\hat{S}(\cdot,x,\cdot)) & =\int\int\mu(y,z)\left\{ \hat{S}(y,x,z)-q\right\} \left\{ 2\Lambda_{b}\left[\hat{S}(y,x,z)-q\right]-1\right\} \mathrm{d}y\mathrm{d}z,
\end{align*}
with $\Lambda_{b}:=\Lambda(\cdot/b)$ for a known distribution function
$\Lambda$ with median zero and some bandwidth $b>0$. $\hat{g}_{\mathrm{LAD}}$
is expected to be more robust to outliers in the sample than $\widehat{g}_{\mathrm{LS}}$.
\begin{rem}
\label{rmk:mu} Under Assumptions \ref{assu:support} and \ref{assu: density},
conditions (i) to (iv) of Assumption \ref{assu: local} hold at $(u,y,z)$
for every $y\in\mathcal{Y}_{\mu}$, $z\in\mathcal{Z}_{\mu}$ and $u\in\widetilde{\mathcal{X}}_{0}$,
so $S(y,x,z)=g(x)$ at every $(y,z)\in\mathcal{Y}_{\mu}\times\mathcal{Z}_{\mu}$
and every $x\in\mathcal{X}_{0}$. With $\int\int\mu=1$ from Assumption
\ref{assu: density}(ii) this gives $\int\int\mu(y,z)S(y,x,z)\mathrm{d}y\mathrm{d}z=g(x)$.
The estimand does not depend on $\mu$, which enters only through
the variance, much as the weight matrix does in generalized method
of moments. The same holds for $\hat{g}_{\mathrm{LAD}}$: the population
objective is $\varrho(g(x)-q)$ for $\varrho(t)=t\{2\Lambda_{b}(t)-1\}$,
which Assumption \ref{assu: kernel}(ii) makes uniquely minimized
at $q=g(x)$, whatever the weight. $\blacksquare$
\end{rem}

To estimate $S(y,x,z)$ nonparametrically, we use a kernel-based estimator.
First, observe that 
\[
F_{Y\mid X,Z}(y|x,z)=\frac{\int^{y}_{-\infty}f_{Y,X,Z}(u,x,z)du}{f_{X,Z}(x,z)}=:\frac{\Phi(y,x,z)}{f(x,z)}.
\]
Then a natural estimator of the conditional distribution $F_{Y\mid X,Z}(y|x,z)$
is $\widehat{F}(y|x,z)=\widehat{\Phi}(y,x,z)/\widehat{f}(x,z)$ where
\begin{align*}
\widehat{\Phi}(y,x,z) & =\frac{1}{n}\sum^{n}_{i=1}\frac{1}{h_{x}h_{z}}\boldsymbol{1}\left\{ Y_{i}\le y\right\} K\left(\frac{X_{i}-x}{h_{x}}\right)K\left(\frac{Z_{i}-z}{h_{z}}\right),\\
\widehat{f}(x,z) & =\frac{1}{n}\sum^{n}_{i=1}\frac{1}{h_{x}h_{z}}K\left(\frac{X_{i}-x}{h_{x}}\right)K\left(\frac{Z_{i}-z}{h_{z}}\right).
\end{align*}
As an estimator for the density $\frac{\partial}{\partial y}F_{Y|X,Z}=f_{Y|X,Z}$,
we consider $\widehat{\Phi}_{y}(y,x,z)/\widehat{f}(x,z)$ where 
\[
\widehat{\Phi}_{y}(y,x,z)=\frac{1}{n}\sum^{n}_{i=1}\frac{1}{h_{y}h_{x}h_{z}}K\left(\frac{Y_{i}-y}{h_{y}}\right)K\left(\frac{X_{i}-x}{h_{x}}\right)K\left(\frac{Z_{i}-z}{h_{z}}\right).
\]
A kernel-based estimator of the conditional distribution $V(x|z)$
is obtained by $\widehat{V}(x|z)=\widehat{\Psi}(x,z)/\widehat{p}(z)$
where 
\[
\widehat{\Psi}(x,z)=\frac{1}{n}\sum^{n}_{i=1}\frac{1}{h_{z}}\boldsymbol{1}\left\{ X_{i}\le x\right\} K\left(\frac{Z_{i}-z}{h_{z}}\right),\qquad\widehat{p}(z)=\frac{1}{n}\sum^{n}_{i=1}\frac{1}{h_{z}}K\left(\frac{Z_{i}-z}{h_{z}}\right).
\]

Finally a kernel based estimator of $S(y,x,z)$ is obtained by 
\begin{align*}
\widehat{S}(y,x,z) & =\int^{x}_{0}\frac{\frac{\nabla_{x}\widehat{V}(u|z)}{\nabla_{z}\widehat{V}(u|z)}\nabla_{z}\widehat{F}(y|u,z)-\nabla_{x}\widehat{F}(y|u,z)}{\nabla_{y}\widehat{F}(y|u,z)}\mathrm{d}u.
\end{align*}

We employ the following assumptions on the model, the kernel function,
and the weight function.
\begin{assumption}
\label{assu: kernel} (i) The univariate kernel $K$ is differentiable,
and there exist constants $C>0$ and $\nu>m+1$ such that $\left|K^{(i)}(z)\right|\le C\left|z\right|^{-\nu}$,
$\left|K^{(i)}(z)-K^{(i)}(z^{\prime})\right|\le C\left|z-z^{\prime}\right|$,
and $K^{(i)}$ is of bounded variation, for $i=0,1$, where $K^{(i)}(z)$
denotes $i$-th derivative of $K$. Furthermore, there exists $m\ge2$
such that $\int_{\mathbb{R}}K(z)\mathrm{d}z=1$, $\int_{\mathbb{R}}z^{j}K(z)\mathrm{d}z=0$
for $1\le j\le m-1$, and $\int_{\mathbb{R}}\left|z\right|^{m}\left|K(z)\right|\mathrm{d}z<\infty$.
(ii) The distribution function $\Lambda$ is strictly increasing and
three times continuously differentiable with bounded derivatives,
its density satisfies $\Lambda^{\prime}(0)>0$, and the bandwidth
$b>0$ for $\Lambda_{b}=\Lambda(\cdot/b)$ is held fixed. 
\end{assumption}

The differentiability condition is required to ensure the existence
of derivatives of $\widehat{F}(y|x,z)$ and $\widehat{V}(x|z)$. Hence
we rule out uniform and Epanechnikov kernels. Part (ii) concerns $\hat{g}_{\mathrm{LAD}}$
alone. The condition $\Lambda^{\prime}(0)>0$ makes the smoothed objective
behave like an absolute-deviation criterion at its minimum, and it
is the only feature of $\Lambda$ that survives into the first-order
asymptotics: as the end of Appendix \ref{sec:thm1} shows, $\Lambda^{\prime}(0)/b$
enters the score and the Hessian in the same way and cancels between
them, so $b$ does not appear in the first-order asymptotics and need
not shrink with $n$. \citet{CKK15-JoE} hold $b$ fixed for the same
reason. Strict monotonicity is used elsewhere: with median zero it
makes the population objective in Appendix \ref{sec:thm1} uniquely
minimized at $g(x)$. The standard normal distribution used in Section
\ref{sec:Monte-carlo} satisfies both conditions.
\begin{assumption}
\label{assu: density} Let $\mathcal{X}_{0}\subset\mathcal{X}^{\circ}$
be a compact set with $0\notin\mathcal{X}_{0}$, let $\widetilde{\mathcal{X}}_{0}$
be the smallest interval containing $\mathcal{X}_{0}\cup\{0\}$, and
write $\mathcal{Y}_{\mu}\times\mathcal{Z}_{\mu}$ for the support
of $\mu$. (i) The joint density $f_{Y,X,Z}(y,x,z)$ is bounded and
$m$-times differentiable with bounded derivatives. There are open
sets $\mathcal{U}\supset\widetilde{\mathcal{X}}_{0}$ and $\mathcal{W}\supset\mathcal{Z}_{\mu}$
whose closures are compact subsets of $\mathcal{X}^{\circ}$ and $\mathcal{Z}^{\circ}$
such that $\int\sup_{(x,z)\in\mathcal{U}\times\mathcal{W}}\left|\partial^{j_{1}}_{x}\partial^{j_{2}}_{z}f_{Y,X,Z}(y,x,z)\right|\mathrm{d}y<\infty$
for $j_{1}+j_{2}\le m$. (ii) The weight function $\mu(y,z)$ is $m$-times
continuously differentiable for both arguments with compact support
$\mathcal{Y}_{\mu}\times\mathcal{Z}_{\mu}$ contained in $\mathcal{Y}^{\circ}\times\mathcal{Z}^{\circ}$
and with nonempty interior, and $\int\int\mu(y,z)\mathrm{d}y\mathrm{d}z=1$.
(iii) On a neighborhood of $\mathcal{Y}_{\mu}\times\widetilde{\mathcal{X}}_{0}\times\mathcal{Z}_{\mu}$
the derivatives $\nabla_{x}V$, $\nabla_{z}V$, $\nabla_{x}F_{Y|X,Z}$,
$\nabla_{y}F_{Y|X,Z}$ and $\nabla_{z}F_{Y|X,Z}$ exist and are continuous,
and the denominators are bounded away from zero: $\inf f(x,z)>0$,
$\inf f_{Y|X,Z}(y|x,z)>0$, $\inf\left|\nabla_{z}V(x|z)\right|>0$
and $\inf p(z)>0$, the infima taken over $(y,x,z)\in\mathcal{Y}_{\mu}\times\widetilde{\mathcal{X}}_{0}\times\mathcal{Z}_{\mu}$. 
\end{assumption}

The integrability in (i) allows differentiating under the integral
sign, which the proofs of Lemma \ref{lem:boundary} and Lemma \ref{lem:plugin}
both do. Part (iii) makes conditions (ii) to (iv) of Assumption \ref{assu: local}
hold at every point of the region, which the discussion of Assumption
\ref{assu:support} defers to it. The tail exponent is taken above
$m+1$. It makes $\int K^{2}$, $\int\left|v\right|K(v)^{2}\mathrm{d}v$,
$\int\left|K\right|^{3}$ and $\int\left|K\right|^{4}$ finite, the
first two appearing in the asymptotic variance and in its distance
from the variance at a finite $n$, and it bounds the kernel away
from its centre wherever Appendix \ref{sec:thm1} needs that. The
kernel is not assumed to have compact support, so a remainder is left
from outside the region on which the density is smooth, and $\nu\ge m+1$
keeps it within the $O(h^{m}_{x})$ of the smoothing bias. Note that
the number of derivatives $m$ must align with the order of the kernel
function. Part (iii) collects the denominators. The estimators below
are ratios: $\widehat{F}=\widehat{\Phi}/\widehat{f}$ and $\widehat{V}=\widehat{\Psi}/\widehat{p}$
are ratios of kernel averages, and the integrand of (\ref{eq:g_functional})
divides in turn by $f_{Y|X,Z}$ and by $\nabla_{z}V$. A lower bound
on each turns uniform convergence of the kernel estimators into uniform
convergence of the ratios, on $\mathcal{Y}_{\mu}\times\widetilde{\mathcal{X}}_{0}\times\mathcal{Z}_{\mu}$.
The domain is $\widetilde{\mathcal{X}}_{0}$ rather than $\mathcal{X}_{0}$
because (\ref{eq:g_functional}) integrates from the normalization
point to $x$: estimating $g$ at a point of $\mathcal{X}_{0}$ uses
the integrand at every $u$ between $0$ and $x$, and those $u$
need not lie in $\mathcal{X}_{0}$. Assumption \ref{assu:rv} (ii)
makes $\mathcal{X}$ convex and Assumption \ref{assu: func_g} (ii)
places the normalization point in its interior, so $\widetilde{\mathcal{X}}_{0}$
is again a compact subset of $\mathcal{X}^{\circ}$. The restriction
to a compact $\mathcal{X}_{0}$ strictly inside $\mathcal{X}$ is
not a technicality: at the edge of the support of $X$ the bounds
on $\left|\nabla_{z}V\right|$ and on $f$ both fail.
\begin{assumption}
\label{assu:rate}$h_{v}=c_{v}n^{-a_{v}}\ (a_{v}>0)\text{ for }v\in\{y,x,z\}$,
$\sqrt{nh_{x}}h^{m}_{y}\to0$, $\sqrt{nh_{x}}h^{m}_{z}\to0$, $\log n/(\sqrt{n}h_{y}h^{1/2}_{x}h_{z})\to0$,
$\log n/(\sqrt{n}h^{5/2}_{x}h_{z})\to0$, and $\log n/(\sqrt{n}h^{1/2}_{x}h^{3}_{z})\to0$. 
\end{assumption}

Write $\bar{h}:=\max\{h_{y},h_{x},h_{z}\}$ for the largest of the
three. The first restriction fixes the bandwidths at powers of $n$.
Section \ref{sec:Monte-carlo} uses that, and it makes $\log(1/h_{x})$
of order $\log n$, which the entropy bounds of Appendix \ref{sec:proof_band}
need. The next two make the smoothing bias of $(\hat{\Phi},\hat{\Psi},\hat{f},\hat{p})$
and of their derivatives $o((nh_{x})^{-1/2})$. Nothing is imposed
on $h^{m}_{x}$: the bias in the direction of $x$ is the estimator's
own rather than a nuisance error, and Theorems \ref{thm:pointwise}
and \ref{thm:band} control it by undersmoothing. The last three conditions
make the squares and products of the estimation errors $\widehat{\Phi}-\Phi,\widehat{\Psi}-\Psi,\widehat{f}-f,\widehat{p}-p$
and of the errors in their derivatives $o((nh_{x})^{-1/2})$. Squares
and products are the form in which these errors enter: $\widehat{S}$
is built from the ratios $\widehat{\Phi}/\widehat{f}$ and $\widehat{\Psi}/\widehat{p}$,
and expanding a ratio leaves a remainder quadratic in them. No condition
is placed on the errors themselves, which are of order $\sqrt{\log n/(nh_{y}h_{x}h_{z})}$
or larger and so are not $o_{p}((nh_{x})^{-1/2})$. They also enter
linearly, and the linear terms produce the limit distribution of Theorem
\ref{thm:pointwise}. Write $a_{n}\asymp b_{n}$ when $a_{n}/b_{n}$
is bounded above and away from zero. With $h_{y}\asymp h_{x}\asymp h_{z}\asymp n^{-a}$
these conditions, together with the undersmoothing that Theorems \ref{thm:pointwise}
and \ref{thm:band} add, leave $a\in(1/(2m+1),1/7)$, which is nonempty
once $m\ge4$.\footnote{With $h_{y}\asymp h_{x}\asymp h_{z}\asymp n^{-a}$ the last three
conditions are $n^{-1/2+5a/2}$, $n^{-1/2+7a/2}$ and $n^{-1/2+7a/2}$
up to the logarithm, so the last two bind and give $a<1/7$; $\sqrt{nh_{x}}h^{m}_{x}\rightarrow0$
gives $a>1/(2m+1)$.} \footnote{The three exponents need not agree, and the conditions bind on them
differently: $h_{x}$ enters the second with a high power because
two of the smoothed quantities, $\nabla_{x}\widehat{\Phi}$ and $\nabla_{x}\widehat{f}$,
are derivatives in $x$, while $h_{y}$ and $h_{z}$ are constrained
mainly through the bias. With $a_{x}=0.15$, for instance, the conditions
hold for $a_{y},a_{z}\in(0.106,0.125)$ at $m=4$. The lower endpoint
is the exponent at which the estimator attains the optimal rate, and
it is excluded: inference requires undersmoothing, and with it a rate
arbitrarily close to the optimum rather than equal to it. Section
\ref{sec:Monte-carlo} uses a fourth-order kernel.} 

The following gives the limit distribution at a point for the nonparametric
estimators $\hat{g}_{\mathrm{LS}}$ and $\hat{g}_{\mathrm{LAD}}$.
\begin{thm}
\label{thm:pointwise} Suppose that Assumptions \ref{assu:rv}--\ref{assu:support}
and \ref{assu: kernel}--\ref{assu:rate} hold, that $\sigma^{2}(0)>0$,
and that the bandwidth undersmooths in the sense that $\sqrt{nh_{x}}h^{m}_{x}\rightarrow0$.
Then for each $x\in\mathcal{X}_{0}$
\[
\sqrt{nh_{x}}\left(\widehat{g}_{\mathrm{LS}}(x)-g(x)\right)\rightsquigarrow N\left(0,\sigma^{2}(x)+\sigma^{2}(0)\right),
\]
where $\sigma^{2}(a)$ is defined below. The same limit holds for
$\widehat{g}_{\mathrm{LAD}}(x)$.
\end{thm}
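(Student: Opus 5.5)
The plan is to linearize $\widehat{g}_{\mathrm{LS}}(x)-g(x)=\int\int\mu(y,z)\{\widehat{S}(y,x,z)-S(y,x,z)\}\,\mathrm{d}y\,\mathrm{d}z$ in the estimation errors of the four kernel quantities $(\widehat{\Phi},\widehat{f},\widehat{\Psi},\widehat{p})$ and their derivatives. Remark \ref{rmk:mu} gives $\int\int\mu S=g(x)$, so no approximation error arises at the population level. Writing $\widehat{S}$ as a smooth functional of $(\Phi,f,\Psi,p)$ and their first derivatives in $(y,x,z)$, I would take a first-order (pathwise) expansion around the truth. The denominators are bounded away from zero on $\mathcal{Y}_{\mu}\times\widetilde{\mathcal{X}}_{0}\times\mathcal{Z}_{\mu}$ by Assumption \ref{assu: density}(iii), and standard uniform rates for kernel estimators and their derivatives give errors of order $\sqrt{\log n/(nh_{y}h_{x}h_{z}^{3})}$, $\sqrt{\log n/(nh_{x}^{3}h_{z})}$, and so on. The last three conditions of Assumption \ref{assu:rate} are then exactly what makes the quadratic remainder $o_p((nh_x)^{-1/2})$ uniformly in $(y,u,z)$. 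The smoothing bias in the $y$ and $z$ directions is $o((nh_x)^{-1/2})$ by the second and third conditions. The $x$-direction bias is $O(h_x^m)$ and vanishes after scaling by undersmoothing; the tail condition $\nu>m+1$ controls the part of the kernel outside the smooth region.

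The key step is to evaluate the linear term $L_n=\int\int\int_0^x\mu(y,z)\,D(y,u,z)[\widehat{\Delta}](u)\,\mathrm{d}u\,\mathrm{d}y\,\mathrm{d}z$, where $D$ is the Gateaux derivative and $\widehat{\Delta}$ collects centered kernel errors. Each component is a sample average of terms like $\mathbf{1}\{Y_i\le y\}K_{h_x}'(X_i-u)K_{h_z}(Z_i-z)$ and derivatives. I would integrate by parts in $y$ and $z$ to move the derivatives onto the smooth weights $\mu$ and the smooth coefficient functions (using compact support of $\mu$ and Assumption \ref{assu: density}(i)), and then integrate in $y$ and $z$ against $K_{h_y}$ and $K_{h_z}$. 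After this, the smoothing in $y$ and $z$ is absorbed and each summand becomes $n^{-1}\sum_i\int_0^x a(u,W_i)\,\partial_u K_{h_x}(X_i-u)\,\mathrm{d}u$ plus terms without $x$-derivatives, where $W_i=(Y_i,Z_i)$. The $\partial_u$ on the kernel integrates out in $u$ to boundary terms $a(x,W_i)K_{h_x}(X_i-x)-a(0,W_i)K_{h_x}(X_i-0)$, plus an interior integral that is $O_p(n^{-1/2})$, a negligible regular term. Thus
\[
\sqrt{nh_x}\,L_n=\frac{1}{\sqrt{n}}\sum_{i=1}^n\sqrt{h_x}\left\{\psi_x(W_i)K_{h_x}(X_i-x)-\psi_0(W_i)K_{h_x}(X_i)\right\}+o_p(1),
\]
and this defines $\sigma^2(a)=\int K^2\cdot f_X(a)\,\mathbb{E}[\psi_a(W)^2\mid X=a]$. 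This is the step I expect to be the main obstacle: one must track every piece of $D$, including the $\widehat{V}$ ratio and $\nabla_x\widehat V$, and verify that only these two endpoint contributions survive at rate $(nh_x)^{-1/2}$.

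Given the display, a Lyapunov CLT for the triangular array applies. Fourth moments of $K$ are finite by $\nu>m+1$, and the variance converges to $\sigma^2(x)+\sigma^2(0)$, since the cross term involves $K_{h_x}(X_i-x)K_{h_x}(X_i)$ with $x\neq0$ (recall $0\notin\mathcal{X}_0$). That cross term is therefore $O(h_x)$ after scaling, by the kernel tail bound. With $\sigma^2(0)>0$ the limit is nondegenerate. For $\widehat{g}_{\mathrm{LAD}}$, consistency follows from uniform convergence of $\widehat S$ and the unique minimizer from Remark \ref{rmk:mu}. The first-order condition linearized around $g(x)$ then has score $\int\int\mu\,2\Lambda'(0)b^{-1}(\widehat S-g(x))$ and Hessian $2\Lambda'(0)b^{-1}$, up to $o_p$ terms from the smoothness of $\Lambda$. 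Hence $\widehat g_{\mathrm{LAD}}(x)-g(x)=\widehat g_{\mathrm{LS}}(x)-g(x)+o_p((nh_x)^{-1/2})$, and the same limit follows.
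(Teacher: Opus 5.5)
This is correct and is essentially the paper's proof. The shared steps are the linearization of $\widehat{S}-S$, absorbing the $y$ and $z$ smoothing against $\mu$, integrating by parts in $u$ so that only kernel terms at $u=0$ and $u=x$ survive at rate $(nh_{x})^{-1/2}$, a Lyapunov CLT with the cross-covariance removed by the kernel tail, and an LAD-to-LS reduction. For that last step, your direct Taylor expansion of the first-order condition, with remainder $O_{p}(\lVert\widehat{S}-S\rVert_{\infty}^{2})=O_{p}(R_{n})$, suffices pointwise; the paper instead adapts \citet{CKK15-JoE} to get a bound uniform in $x$ for later use.

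For the step you flag as the main obstacle: $\nabla_{x}\widehat{V}$ produces no boundary term, because $\widehat{\Psi}_{x}=\widehat{f}$ leaves the $x$-kernel undifferentiated. So only the $\widehat{\Phi}_{x}$ and $\widehat{f}_{x}$ blocks of $\nabla_{x}\widehat{F}$ contribute, and $D^{*}_{\Phi_{x}}=-1/(F_{y}f)$, $D^{*}_{f_{x}}=F/(F_{y}f)$ make your $\psi_{a}$ exactly the paper's $c_{a}$. Also correct one listed rate: the rates entering the remainder are $\sqrt{\log n/(nh_{y}h_{x}h_{z})}$, $\sqrt{\log n/(nh^{3}_{x}h_{z})}$ and $\sqrt{\log n/(nh_{x}h^{3}_{z})}$. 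There is no $\sqrt{\log n/(nh_{y}h_{x}h^{3}_{z})}$ term, and its square would not be controlled by the last three conditions of Assumption \ref{assu:rate}.
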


\begin{proof}
See Appendix \ref{sec:thm1}.
\end{proof}

Lemma \ref{lem:boundary} decomposes the deviation as
\[
\widehat{g}_{\mathrm{LS}}(x)-g(x)=T_{n}(x)-T_{n}(0)+o_{p}((nh_{x})^{-1/2}),
\]
where
\begin{align*}
T_{n}(a) & =\frac{1}{nh_{x}}\sum^{n}_{i=1}K\left(\frac{X_{i}-a}{h_{x}}\right)c_{a}(W_{i}),\\
c_{a}(W_{i}) & =\int\frac{\mu(y,Z_{i})\left[F_{Y|X,Z}(y|a,Z_{i})-\boldsymbol{1}\left\{ Y_{i}\le y\right\} \right]}{f_{Y|X,Z}(y|a,Z_{i})f(a,Z_{i})}\mathrm{d}y,\\
\sigma^{2}(a) & =\left[\int K(v)^{2}\mathrm{d}v\right]f_{X}(a)\mathbb{E}\left[c_{a}(W_{i})^{2}|X_{i}=a\right].
\end{align*}
Each $T_{n}(a)$ is a kernel average in the single argument $x$,
centered at $a$, and $\sigma^{2}(a)$ is the limit of $nh_{x}$ times
its variance. The term at $a=0$ is present because the estimator
integrates from the normalization point. Assumption \ref{assu: func_g}(ii)
sets $g(0)=0$, $\widehat{S}$ integrates from $0$ to $x$, and the
error made at the lower endpoint is the same at every $x$. For $x$
bounded away from the origin the two averages are asymptotically uncorrelated,
so their variances add. Section \ref{sec:Monte-carlo} measures the
component that the evaluation points share.

We need to exclude the normalization point from the evaluation set,
though not because anything degenerates there. Assumption \ref{assu: func_g}
(ii) sets $g(0)=0$, so $\hat{g}(0)=0$ by construction and the deviation
at that point is identically zero. Away from it, the two terms Lemma
\ref{lem:boundary} leaves are kernel averages centered at $x$ and
at $0$, and their covariance is $O\big((h_{x}/\left|x\right|)^{\nu}\big)$
after rescaling, which makes their variances add to $\sigma^{2}(x)+\sigma^{2}(0)$.
For that to hold uniformly, $\mathcal{X}_{0}$ has to be bounded away
from the origin. Theorem \ref{thm:pointwise} and the band of Theorem
\ref{thm:band} are therefore available on any compact set omitting
a neighborhood of the normalization point, but not on an interval
straddling it; in Section \ref{sec:Monte-carlo}, $\mathcal{X}_{0}$
is a union of two intervals bounded away from the origin.

The rate is $(nh_{x})^{-1/2}$ rather than $n^{-1/2}$ because $\widehat{S}$
differentiates in the direction it integrates over. A kernel in $x$
that stays inside $\int^{x}_{0}\mathrm{d}u$ is averaged by it and
contributes at $n^{-1/2}$; but $\nabla_{x}\widehat{V}$ and $\nabla_{x}\widehat{F}$
are derivatives in $u$, and integrating them by parts returns the
kernel at $u=0$ and $u=x$, where nothing averages it. $T_{n}(x)$
and $T_{n}(0)$ are those two terms, and a kernel at a point converges
at $(nh_{x})^{-1/2}$. Only $h_{x}$ appears in the limit, although
$\widehat{F}$ smooths in three arguments and $\widehat{V}$ in two.
The estimator integrates $\widehat{S}$ against $\mu$ in $y$ and
$z$, so those two directions are averaged rather than evaluated at
a point, and $h_{y}$ and $h_{z}$ enter through the bias and the
remainder alone. The leading term is therefore a kernel average in
$x$ alone, and the rate is the one-dimensional rate.

The rate $n^{-m/(2m+1)}$ and the limit distribution of Theorem \ref{thm:pointwise}
are attained at different bandwidths. At $h_{x}\asymp n^{-1/(2m+1)}$
the two terms in the bound $|\widehat{g}_{\mathrm{LS}}(x)-g(x)|=O_{p}(h^{m}_{x}+(nh_{x})^{-1/2})$
of Lemma \ref{lem:boundary} are of the same order, and the bound
is $O_{p}(n^{-m/(2m+1)})$. At that bandwidth the estimator attains
the rate Proposition \ref{prop:lower_bound} shows below cannot be
improved. Note that Theorem \ref{thm:pointwise} holds at smaller
bandwidths, those satisfying $\sqrt{nh_{x}}h^{m}_{x}\rightarrow0$.
That condition removes the bias from the limit distribution, and it
excludes $h_{x}\asymp n^{-1/(2m+1)}$. Appendix \ref{sec:thm1} is
written under the conditions of Theorem \ref{thm:pointwise}, so Lemma
\ref{lem:boundary} carries that condition too. The bound does not
use it. Parts (ii) to (iv) retain the bias terms rather than discarding
them, and the undersmoothing condition enters their proofs only in
remarks that a retained term is negligible, so they hold under Assumptions
\ref{assu:rv}--\ref{assu:support} and \ref{assu: kernel}--\ref{assu:rate}
alone. Thus, the bandwidths it allows give a rate slower than $n^{-m/(2m+1)}$.
Taking $a_{x}$ arbitrarily close to $1/(2m+1)$ makes the difference
as small as desired.

The next result bounds what any estimator of $g(x)$ can achieve,
using a result of \citet{Sto80-AoS}.
\begin{prop}
\label{prop:lower_bound} Let $\mathcal{P}$ be the set of distributions
of $\left(Y,X,Z\right)$ that (\ref{eq:model}) generates under Assumptions
\ref{assu:rv}--\ref{assu:support}, with the joint density bounded
and $m$ times differentiable with bounded derivatives, that bound
being held fixed across the set. Fix $x\in\mathcal{X}^{\circ}$ with
$x\neq0$. Then there is a $c>0$ such that
\[
\liminf_{n\rightarrow\infty}\inf_{\tilde{g}_{n}}\sup_{P\in\mathcal{P}}P^{n}\left(\left|\tilde{g}_{n}(x)-g_{P}(x)\right|\ge cn^{-m/(2m+1)}\right)>0,
\]
where the infimum runs over all estimators of $g(x)$ based on a sample
of size $n$. 
\end{prop}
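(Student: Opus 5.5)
Reduce to a two-point testing problem in the style of \citet{Sto80-AoS}: construct two distributions $P_{0},P_{1}\in\mathcal{P}$ with $|g_{P_{1}}(x)-g_{P_{0}}(x)|\ge2cn^{-m/(2m+1)}$ and $n\,\mathrm{KL}(P_{1}\|P_{0})$ (or the squared Hellinger distance times $n$) bounded. The standard reduction (Le Cam's two-point lemma) then bounds the minimax probability of an error of size $cn^{-m/(2m+1)}$ below by a positive constant depending only on that bound. Since $\mathcal{P}$ contains the exogenous special case, perturbing only the structural function is enough.

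\textbf{Base model.} Take $Z$ uniform on a compact interval, and let $\eta$ be independent of $\varepsilon$. Take $\varepsilon$ standard normal, or a smooth density with full support truncated to match $\mathcal{Y}$ if bounded supports are wanted. Let $h(z,\eta)=z+\eta$ with $\eta$ uniform-like and with a smooth density, chosen so that $\mathcal{X}_{z}=\mathcal{X}$. A cleaner choice is $h(z,\eta)=\eta$ plus a small smooth $z$-dependence, which keeps $\nabla_{z}V\neq0$. Set $g_{0}\equiv0$. One then checks Assumptions \ref{assu:rv}--\ref{assu:support} and that the joint density is $m$ times differentiable with bounds strictly below the fixed constant.

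\textbf{Perturbation.} Set $g_{1}(u)=g_{0}(u)+\delta_{n}\phi((u-x)/\delta_{n}^{1/m})$ with $\delta_{n}=c_{0}n^{-m/(2m+1)}$. Here $\phi$ is a smooth bump with compact support, $\phi(0)=1$, and support away from $0$ for large $n$; this is possible since $x\ne0$, so $g_{1}(0)=0$ is preserved. Then $g_{1}(x)-g_{0}(x)=\delta_{n}$. The $j$-th derivative of the perturbation is $O(\delta_{n}^{1-j/m})$, which is bounded for $j\le m$, so for $c_{0}$ small the joint density $f_{Y,X,Z}(y,u,z)=f_{\varepsilon\mid X,Z}(y-g(u)\mid u,z)f_{X,Z}(u,z)$ stays in $\mathcal{P}$ with the same bound. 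With $\varepsilon\indep\eta$ this factor is simply $f_{\varepsilon}(y-g(u))$.

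\textbf{Information bound.} Only the conditional law of $Y$ given $X$ changes, and only for $X$ in a window of width $\delta_{n}^{1/m}$. Within that window the KL divergence of a location shift by $\delta_{n}\phi$ is of order $\delta_{n}^{2}$ times the Fisher information of $f_{\varepsilon}$. Hence $\mathrm{KL}(P_{1}\|P_{0})\lesssim\delta_{n}^{2}\cdot\delta_{n}^{1/m}=\delta_{n}^{(2m+1)/m}=c_{0}^{(2m+1)/m}n^{-1}$, and $n\mathrm{KL}$ is bounded and small. Le Cam's inequality then yields the displayed $\liminf$ with $c=c_{0}/2$.

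\textbf{Main obstacle.} Verifying that the perturbed model remains in $\mathcal{P}$ is the delicate part. Assumption \ref{assu: local} must hold, in particular $f_{Y|X,Z}>0$, $\nabla_{z}V\neq0$, and the support invariance of Assumption \ref{assu:support}. The derivative bounds must be uniform with the fixed constant, which is handled by building slack into the base model. Independence $(\varepsilon,\eta)\indep Z$ is automatic, since only $g$ is perturbed. If a full-support $\varepsilon$ conflicts with the boundedness requirements, I would instead use a smooth compactly supported density with $\mathcal{Y}$ equal to a large interval. The small shift $\delta_{n}$ then changes the support only slightly, so a density smoothly tapering to $0$ at the boundary keeps the KL computation valid via the Hellinger distance instead.
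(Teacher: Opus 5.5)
This is essentially the paper's argument: the paper also restricts to the submodel with $\varepsilon$ independent of $(\eta,Z)$ and a fixed first stage, where $g(x)$ becomes a univariate regression function at $x\neq0$ and perturbations localized near $x$ leave the normalization intact; the only difference is that it invokes \citet{Sto80-AoS} where you run Le Cam's two-point bound by hand. Two details of your construction need a routine fix: a uniform $Z$ makes the joint density jump in $z$, so it is not $m$ times differentiable (take $Z$ and $\eta$ normal, which also gives $\mathcal{X}_{z}=\mathcal{X}$ for $h(z,\eta)=z+\eta$), and with height $\delta_{n}$ and width $\delta_{n}^{1/m}$ the $m$-th derivative of your bump is $\phi^{(m)}$ itself whatever $c_{0}$ is, so keep it under the fixed bound by scaling $\phi$ down by a small constant rather than by shrinking $c_{0}$.
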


\begin{proof}
See Appendix \ref{sec:lower}.
\end{proof}

Imposing that the endogeneity is absent makes the problem easier,
so the best rate available under that restriction limits what any
estimator can achieve without it. \citet[p.~576]{NPV99-ECTA} argue
in the same way for their own estimator, comparing instead with the
problem in which the control function is known rather than estimated.
The proof restricts to the subset $\mathcal{P}_{0}$ of distributions
on which $\varepsilon$ is independent of $\left(\eta,Z\right)$.
The step that is not immediate is that $\mathcal{P}_{0}$ is still
inside $\mathcal{P}$. The restriction makes $F_{Y|X,Z}(y|x,z)$ free
of $z$, so $\nabla_{z}F_{Y|X,Z}(y|x,z)$ vanishes. The proof checks
Assumption \ref{assu: local} to confirm that a model with that property
still satisfies the assumptions that define $\mathcal{P}$. The proof
fixes the first stage at one admissible choice, but uses no property
of that choice beyond what Assumptions \ref{assu:rv}--\ref{assu:support}
require. Letting $g$ vary leaves the first stage untouched, and with
it $V(x|z)$, so every member of $\mathcal{P}_{0}$ has the first
stage that was fixed. Write $\mathcal{P}(\lambda)$ for the distributions
in $\mathcal{P}$ whose first stage satisfies $\inf\left|\nabla_{z}V(x|z)\right|\ge\lambda$.
Fixing a first stage that meets the bound puts $\mathcal{P}_{0}$
inside $\mathcal{P}(\lambda)$, and the argument above gives the same
lower bound there. The rate is therefore $n^{-m/(2m+1)}$ over every
nonempty $\mathcal{P}(\lambda)$, so a stronger instrument does not
raise the rate at which $g(x)$ can be estimated. Nothing in the construction
is particular to this way of indexing the first stage, and the same
holds for any restriction imposed on the first stage alone.

\subsection{Bootstrap inference}

An interval that covers $g(x)$ at each $x$ separately need not cover
at every $x$ at once. Only simultaneous coverage makes a band informative
about the shape of $g$: whether it is monotone, whether it is linear,
whether it departs from a candidate specification anywhere on $\mathcal{X}_{0}$.
We study a sup-$t$ band, with the critical value taken from the empirical
bootstrap.

Let $\left\{ W^{*}_{i}\right\} ^{n}_{i=1}$ be drawn i.i.d.\  with
replacement from the original sample $\left\{ W_{i}\right\} ^{n}_{i=1}$,
and let $\hat{F}^{*}$ and $\hat{V}^{*}$ be the kernel estimators
constructed from $\left\{ W^{*}_{i}\right\} ^{n}_{i=1}$ with the
same bandwidths. Let $\widehat{S}^{*}$ be obtained by substituting
$\hat{F}^{*}$ and $\hat{V}^{*}$ for $\hat{F}$ and $\hat{V}$ in
the definition of $\widehat{S}$ above, and let $\hat{g}^{*}(x)$
be $\hat{g}_{\mathrm{LS}}(x)$ or $\hat{g}_{\mathrm{LAD}}(x)$ computed
from $\widehat{S}^{*}$ in place of $\widehat{S}$.

Write $\bar{\sigma}(x):=\left\{ \sigma^{2}(x)+\sigma^{2}(0)\right\} ^{1/2}$
for the standard deviation in Theorem \ref{thm:pointwise}. Let $\hat{s}_{n}(x)>0$
estimate $\bar{\sigma}(x)$ from the sample and $\hat{s}^{*}_{n}(x)>0$
estimate it from the bootstrap sample. They need not differ: Section
\ref{sec:Monte-carlo} uses the same estimator for both. For $\alpha\in(0,1)$
define the conditional quantile of the studentized supremum, 
\begin{equation}
c_{n}(1-\alpha):=\inf\left\{ c\ge0:P^{*}\left(\sup_{x\in\mathcal{X}_{0}}\frac{\sqrt{nh_{x}}\left|\hat{g}^{*}(x)-\hat{g}(x)\right|}{\hat{s}^{*}_{n}(x)}\le c\right)\ge1-\alpha\right\} ,\label{eq:supt_crit}
\end{equation}
 and the band 
\begin{equation}
\mathcal{C}_{n}(x):=\left[\hat{g}(x)-c_{n}(1-\alpha)\frac{\hat{s}_{n}(x)}{\sqrt{nh_{x}}},\;\hat{g}(x)+c_{n}(1-\alpha)\frac{\hat{s}_{n}(x)}{\sqrt{nh_{x}}}\right].\label{eq:supt_band}
\end{equation}

Since $\hat{s}_{n}$ is positive, $g(x)\in\mathcal{C}_{n}(x)$ for
every $x\in\mathcal{X}_{0}$ exactly when 
\[
\sup_{x\in\mathcal{X}_{0}}\frac{\sqrt{nh_{x}}\left|\hat{g}(x)-g(x)\right|}{\hat{s}_{n}(x)}\le c_{n}(1-\alpha).
\]
 The studentizer that sets the half-width is therefore the one that
appears in the coverage event, while the one used in (\ref{eq:supt_crit})
enters only through the critical value. The two can be chosen separately,
and Assumption \ref{assu: band} asks the same thing of each. Studentizing
at all lets the width track the local precision of $\hat{g}$, which
in this model varies by a factor of six across $\mathcal{X}_{0}$,
since $\hat{g}(x)$ is $x$ times the average of $\widehat{\nabla_{u}g}$
over $[0,x]$.

Two choices of the pair are natural. The first takes both from the
bootstrap, 
\begin{equation}
\hat{s}_{n}(x)=\hat{s}^{*}_{n}(x)=\hat{\sigma}(x),\qquad\hat{\sigma}^{2}(x):=nh_{x}\mathbb{E}^{*}\left[\left(\hat{g}^{*}(x)-\mathbb{E}^{*}\hat{g}^{*}(x)\right)^{2}\right],\label{eq:boot_sd}
\end{equation}
 where $\mathbb{E}^{*}$ is expectation with respect to the resampling
distribution conditional on the data. Section \ref{sec:Monte-carlo}
implements this. Since $\hat{\sigma}$ is a function of the sample
alone it is the same in every resample, and multiplying it by a constant
leaves the band unchanged, the critical value absorbing the factor;
what the band responds to is the shape of $\hat{\sigma}$ across $\mathcal{X}_{0}$
and not its level.

The second studentizes each resample separately: form 
\begin{equation}
\hat{\sigma}^{2}_{\mathrm{pl}}(a):=\frac{1}{n}\sum^{n}_{i=1}\frac{1}{h_{x}}K\left(\frac{X_{i}-a}{h_{x}}\right)^{2}\hat{c}_{a}(W_{i})^{2},\label{eq:plugin}
\end{equation}
 with $\hat{c}_{a}$ formed from $\widehat{F}$, $\nabla_{y}\widehat{F}$
and $\hat{f}$. This estimates $\sigma^{2}(a)$: it reuses $h_{x}$
and replaces $K$ by $K^{2}$. Assumption \ref{assu: band} constrains
the studentizer against $\bar{\sigma}(x)$, which combines $\sigma^{2}(x)$
and $\sigma^{2}(0)$, so the plug-in is taken at both points and summed,
\[
\hat{s}_{n}(x)^{2}:=\hat{\sigma}^{2}_{\mathrm{pl}}(x)+\hat{\sigma}^{2}_{\mathrm{pl}}(0).
\]
Evaluating at $0$ is allowed although $0$ is excluded from $\mathcal{X}_{0}$:
it lies in $\widetilde{\mathcal{X}}_{0}$, where Assumption \ref{assu: density}(iii)
keeps the estimated nuisance functions away from zero.

Taking $\hat{s}_{n}$ from the sample and $\hat{s}^{*}_{n}$ from
each resample makes the band a percentile-$t$ band. The denominator
of the bootstrap statistic then comes from the same resample as its
numerator, so a resample that is dispersed inflates both. The plug-in
costs one further pass over each resample. Studentizing each resample
by a bootstrap standard deviation instead would require a bootstrap
inside each resample.

Drawing the critical value from a bootstrap rather than from an extreme-value
limit is standard in nonparametric problems \citep[see, e.g.,][]{CCK14-AoS}.
In NPIV models, \citet{HL12-JoE} and \citet{CC18-QE} provide valid
uniform confidence bands for ill-posed inverse problems. The current
setup avoids ill-posedness by using a (nonseparable) control function.
\citet{HL12-JoE} employ the empirical bootstrap as we do and studentize
with an explicit estimator of the influence-function variance; (\ref{eq:plugin})
is the counterpart here. Their band rests on finitely many points:
they form joint intervals there and interpolate between them, which
requires $g$ to be piecewise monotone or Lipschitz and leaves a term
that vanishes only as the grid gets finer. Theorem \ref{thm:band}
is stated on $\mathcal{X}_{0}$ directly, so no interpolation step
is needed. In practice $\hat{s}_{n}$ and $c_{n}$ are computed from
$B$ bootstrap draws on a finite grid; see Section \ref{sec:Monte-carlo}.
\begin{assumption}
\label{assu: band} (i) $\sup_{x\in\mathcal{X}_{0}}\left|\hat{s}_{n}(x)/\bar{\sigma}(x)-1\right|=o_{p}(1/\log n)$;
and (ii) $\sup_{x\in\mathcal{X}_{0}}\left|\hat{s}^{*}_{n}(x)/\bar{\sigma}(x)-1\right|=o_{p}(1/\log n)$
conditionally on the data, in probability.
\end{assumption}

Assumption \ref{assu: band} needs $\inf_{x\in\mathcal{X}_{0}}\bar{\sigma}(x)$,
which is ensured because Theorem \ref{thm:pointwise} already requires
$\sigma^{2}(0)>0$, and since $\sigma^{2}(x)\ge0$ that gives $\inf_{x\in\mathcal{X}_{0}}\bar{\sigma}(x)\ge\sigma(0)>0$.
Parts (i) and (ii) are the price of studentizing, and they ask for
a rate rather than for consistency alone. The reason is that the class
over which the supremum is taken is not Donsker: its envelope is of
order $h^{-1/2}_{x}$ and dominates its members (Lemma \ref{lem:vc}),
so the supremum grows like $\sqrt{\log n}$ rather than being bounded
in probability. Write $\varepsilon$ for the supremum in parts (i)
and (ii), the largest relative error the studentizer makes on $\mathcal{X}_{0}$.
Dividing by the studentizer rather than by $\bar{\sigma}$ shifts
the statistic by $\varepsilon\sqrt{\log n}$. The anti-concentration
inequality of \citet{CCK14-AoS} turns that shift into a change in
probability, again with a factor of order $\sqrt{\log n}$. The two
factors compound, so $\varepsilon$ contributes a term of order $\varepsilon\log n$
to the coverage error. Parts (i) and (ii) require $\varepsilon=o_{p}(1/\log n)$,
which makes that term vanish. With $\varepsilon=o_{p}(1)$ alone,
it need not.

The plug-in (\ref{eq:plugin}) and the bootstrap standard deviation
(\ref{eq:boot_sd}) differ in what parts (i) and (ii) then require.
For the plug-in they can be verified, which is Lemma \ref{lem:plugin}.
For $\hat{\sigma}$ taken from the bootstrap, (i) asks for the convergence
of a conditional second moment, and a distributional approximation
does not deliver one on its own. It would follow from uniform square-integrability
of $\sqrt{nh_{x}}(\hat{g}^{*}-\hat{g})$ with a rate attached, which
is not established here. Assumption \ref{assu: band} is imposed on
the pair itself rather than on either construction, so the theorem
below holds for whichever of the two is used.
\begin{thm}
\label{thm:band} Suppose the conditions of Theorem \ref{thm:pointwise}
and Assumption \ref{assu: band} hold. Then the band (\ref{eq:supt_band})
covers $g$ at every point of $\mathcal{X}_{0}$ simultaneously with
asymptotic probability $1-\alpha$, 
\[
P\left(g(x)\in\mathcal{C}_{n}(x)\;\text{for all }x\in\mathcal{X}_{0}\right)\rightarrow1-\alpha.
\]
 The difference between the two sides is at most a constant multiple
of 
\[
(h_{x}\log n)^{1/2}+\sqrt{nh_{x}}R_{n}(\log n)^{1/2}+\sqrt{nh_{x}}\bar{h}^{m}(\log n)^{1/2}+(nh_{x})^{-1/8+\upsilon}+\left(\Delta_{n}+\Delta^{*}_{n}\right)\log n,
\]
 for every $\upsilon\in(0,1/8)$, the constant depending on $\upsilon$,
where $\Delta_{n}$ and $\Delta^{*}_{n}$ are deterministic sequences
dominating the two suprema in Assumption \ref{assu: band} (i) and
(ii) with probability tending to one. The sequence $R_{n}$ is the
square of the largest uniform rate the smoothed quantities attain,
given at (\ref{eq:Rn}).
\end{thm}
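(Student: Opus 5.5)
The plan is the standard Gaussian-approximation route of \citet{CCK14-AoS}, applied to the process indexed by $x\in\mathcal{X}_{0}$. I take from Lemma \ref{lem:boundary} and Appendix \ref{sec:thm1} the uniform version of the linearization,
\[
\sqrt{nh_{x}}\bigl(\hat{g}(x)-g(x)\bigr)=\sqrt{nh_{x}}\bigl(T_{n}(x)-T_{n}(0)\bigr)+r_{n}(x),
\]
with $\sup_{x\in\mathcal{X}_{0}}|r_{n}(x)|=O_{p}\bigl(\sqrt{nh_{x}}R_{n}+\sqrt{nh_{x}}\bar{h}^{m}\bigr)$. Here $R_{n}$ collects the quadratic remainders of the ratio expansions and $\bar{h}^{m}$ the smoothing bias; both are uniform over $\widetilde{\mathcal{X}}_{0}$, using Assumption \ref{assu: density}(iii). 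For $\hat{g}_{\mathrm{LAD}}$ I would add the uniform Bahadur-type argument: expand the first-order condition of $Q_{b}$ around $g(x)$. The factor $\Lambda^{\prime}(0)/b$ cancels as in the pointwise case, and the Hessian is bounded away from zero uniformly in $x$, so the same linear term results with a remainder of the same order.

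Next I write the leading term as an empirical process $\mathbb{G}_{n}f_{x}$ over the class $\mathcal{F}_{n}=\{f_{x}:x\in\mathcal{X}_{0}\}$, where
\[
f_{x}(W)=h_{x}^{-1/2}\Bigl[K\bigl((X-x)/h_{x}\bigr)c_{x}(W)-K\bigl(X/h_{x}\bigr)c_{0}(W)\Bigr]\big/\bar{\sigma}(x).
\]
Lemma \ref{lem:vc} gives this class VC-type entropy, with envelope of order $h_{x}^{-1/2}$ and $\log(1/h_{x})\asymp\log n$ by Assumption \ref{assu:rate}. The Gaussian coupling of \citet{CCK14-AoS} for the supremum then approximates $\sup_{x}|\mathbb{G}_{n}f_{x}|$ by $\sup_{x}|G(f_{x})|$ for a tight centered Gaussian process $G$ with the same covariance. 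The coupling error is of order $(nh_{x})^{-1/8+\upsilon}$ up to logs, from the envelope-to-$\sqrt{n}$ ratio, together with a $(h_{x}\log n)^{1/2}$ term. That second term reflects the covariance of $G$ differing from its normalized limit: the unit variance holds only up to $O(h_{x})$, and the cross term between the kernels at $x$ and $0$ is $O((h_{x}/|x|)^{\nu})$ because $\mathcal{X}_{0}$ is bounded away from $0$.

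For the bootstrap side I repeat the linearization conditionally on the data. The bootstrap linear term is $\mathbb{G}_{n}^{*}f_{x}$, the multinomial-weighted empirical process, and the remainders have the same uniform orders in $P^{*}$-probability, because the bootstrap kernel estimators satisfy the same uniform bounds; this follows from the entropy bound and Talagrand's inequality applied conditionally. The empirical-bootstrap coupling of \citet{CCK14-AoS} (their Theorem on the multiplier/empirical bootstrap for suprema) then says that, with probability tending to one, the conditional law of $\sup_{x}|\mathbb{G}_{n}^{*}f_{x}|$ is within $(nh_{x})^{-1/8+\upsilon}$ of the law of $\sup_{x}|G(f_{x})|$. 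This needs the sample covariance of $f_{x}$ to be uniformly close to the population one, which is again an entropy-plus-maximal-inequality bound.

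Studentization and remainders are then absorbed by anti-concentration. Replacing $\bar{\sigma}$ by $\hat{s}_{n}$ or by $\hat{s}_{n}^{*}$ perturbs each supremum by at most $\Delta_{n}$ or $\Delta_{n}^{*}$ times a quantity that is $O_{p}(\sqrt{\log n})$, using Assumption \ref{assu: band}. The remainders perturb the supremum additively by $\sqrt{nh_{x}}(R_{n}+\bar{h}^{m})$. The anti-concentration inequality for $\sup|G|$ contributes $\mathbb{E}\sup|G|\asymp\sqrt{\log n}$, which converts a perturbation $\delta$ into a probability change of order $\delta\sqrt{\log n}$. This yields exactly the listed terms and shows that $c_{n}(1-\alpha)$ is within these errors of the $1-\alpha$ quantile of $\sup|G|$. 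Coverage then follows from the equivalence displayed before Assumption \ref{assu: band}.

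I expect the main obstacle to be the uniform remainder control for the bootstrap, since the nonlinear map through $\nabla_{z}\hat{V}^{*}$ in the denominator and the LAD argmin must be handled conditionally. I would first establish uniform bounds for the bootstrap kernel estimators and their derivatives on $\mathcal{Y}_{\mu}\times\widetilde{\mathcal{X}}_{0}\times\mathcal{Z}_{\mu}$. With those, the deterministic expansion used for the sample applies verbatim on the event where the denominators stay bounded away from zero.
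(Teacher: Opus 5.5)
Your proposal follows essentially the paper's route: reduce to $\sqrt{nh_x}(T_n(x)-T_n(0))$, couple the sample and bootstrap suprema over the VC-type class $\mathcal{F}_n$ to one Gaussian supremum, absorb studentization and remainders at the price of a $\sqrt{\log n}$ factor, and compare quantiles. The paper takes both couplings from \citet{CCK16-SPA}, Theorems 2.1 and 2.3, and uses \citet{CCK14-AoS} only for anti-concentration; your direct Bahadur expansion of the LAD first-order condition is a legitimate substitute for its appeal to \citet{CKK15-JoE}. There is one bookkeeping slip. Lemma \ref{lem:boundary}(ii) also leaves the $O_p(n^{-1/2})$ Donsker part, which after scaling is $\sqrt{h_x}\,\mathbb{G}_n\rho_x=O_p(\sqrt{h_x})$ (and $\sqrt{h_x}\,\mathbb{G}^*_n\rho_x$ on the bootstrap side), and your stated remainder omits it. It is this term times the anti-concentration constant that gives $(h_x\log n)^{1/2}$, whereas the $O(h_x)$ variance discrepancy you credit it to contributes only $h_x\log n$; that discrepancy is most simply handled, as in the paper, by normalizing with $\sigma_n(x)$ so that the Gaussian process has exactly unit variance.
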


\begin{proof}
See Appendix \ref{sec:proof_band}.
\end{proof}

The rescaled process $\sqrt{nh_{x}}(\hat{g}(x)-g(x))$, indexed by
$x\in\mathcal{X}_{0}$, is not tight, because the class the supremum
runs over changes with $n$. The proof therefore cannot draw the critical
value from a tight limit process, or from an extreme-value law for
its supremum, as a band on a continuum usually does. It relies instead
on two results, which do different work. \citet{CCK16-SPA} supplies
couplings: on the sample side and on the bootstrap side, each supremum
is brought close to the supremum of one Gaussian process, indexed
by the same class and at the same $n$, and the two are compared through
that common intermediary. A coupling bounds the difference between
two suprema, and does not by itself bound the difference between their
distribution functions. The anti-concentration inequality of \citet{CCK14-AoS}
supplies that step, because it limits the mass the supremum of a Gaussian
process can place in a short interval, and the same inequality sets
the rate asked of the studentizers in Assumption \ref{assu: band}.
The coverage error is bounded at each $n$ rather than in the limit,
which is why the second display carries a rate.
\begin{rem}
A band is honest over a class of distributions if its coverage error
tends to zero uniformly over the class \citep[p.~1788]{CCK14-AoS}.
Theorem \ref{thm:band} bounds the coverage error at a single distribution,
and the constant in that bound depends on it, so it does not establish
honesty. The obstacle is not the construction of the band. Their Corollary
3.1 treats a band built by undersmoothing rather than by an explicit
bias correction, and when the smoothing parameter is deterministic
its coverage is asymptotically exact uniformly over the class. The
band here is of that construction, with deterministic bandwidths by
Assumption \ref{assu:rate}. What is missing is the class: the conditions
the theorem uses are imposed at one distribution, and honesty asks
them to hold uniformly over a set of them. Three of them would have
to change. First, Assumptions \ref{assu: kernel} and \ref{assu: density}
bound densities and derivatives without naming the bounds, so there
is no class to quantify over. Proposition \ref{prop:lower_bound}
defines one, $\mathcal{P}$, by holding those bounds fixed across
the set of distributions. Second, the condition $\sigma^{2}(0)>0$
is imposed at a single distribution, and would become an infimum over
the class. Third, parts (i) and (ii) of Assumption \ref{assu: band}
would have to hold uniformly over the class and at a polynomial rate,
which is what their Condition H4 asks of the scale estimator, in place
of the $o_{p}(1/\log n)$ in probability asked here. $\blacksquare$
\end{rem}

\section{Monte Carlo }\label{sec:Monte-carlo}

We study the finite-sample behavior of the estimators and of the band
in the design 
\begin{align*}
Y_{i} & =\sin(X_{i})+\varepsilon_{i},\\
X_{i} & =6L(-Z_{i}+\eta_{i})-3,
\end{align*}
 where $L(a)=(1+\exp(-a))^{-1}$. The support of $X_{i}$ is $(-3,3)$
and $g(0)=0$, as Assumption \ref{assu: func_g} requires. The marginal
distributions of $\varepsilon_{i}$ and $\eta_{i}$ are both $N(0,1)$
and their joint distribution is a Frank copula 
\[
F_{\varepsilon,\eta}(e,n)=-\frac{1}{\lambda}\log\left(1+\frac{(\exp(-\lambda\Phi(e))-1)(\exp(-\lambda\Phi(n))-1)}{\exp(-\lambda)-1}\right),
\]
 with $\lambda=2$, following \citet{torgovitsky2017minimum}. The
instrument $Z_{i}$ is drawn independently from $N(0,1)$, so that
$(\varepsilon,\eta)$ is independent of $Z$ as Assumption \ref{assu:IV}
requires. Each experiment uses 100 Monte Carlo replications.

The first stage is nonseparable in $(Z,\eta)$, which is the case
this paper is about. It is also a case in which the control function
used by \citet{NPV99-ECTA} is not valid, so their estimator gives
a natural comparison.

The support of the weight function is chosen in a data-driven way
to avoid the denominator problem: 
\[
\mathcal{Y}_{\mu}\times\mathcal{Z}_{\mu}:=\left\{ (y,z):\begin{array}{c}
\nabla_{y}\widehat{F}(y|\overline{X},z)>\tau\\
\nabla_{y}\widehat{F}(y|\overline{X},z)\nabla_{z}\widehat{V}(\overline{X}|z)>\tau
\end{array}\text{ and }\begin{array}{c}
\hat{q}_{Y}(2.5)\le y\le\hat{q}_{Y}(97.5)\\
\hat{q}_{Z}(2.5)\le z\le\hat{q}_{Z}(97.5)
\end{array}\right\} ,
\]
 where $\overline{X}$ is the sample mean of $X$ and $\hat{q}_{R}(\cdot)$
is the sample quantile function for $R\in\left\{ Y,Z\right\} $. Both
bounds are on denominators of the integrand in (\ref{eq:g_functional}),
which divides by $f_{Y|X,Z}$ and by $\nabla_{z}Vf_{Y|X,Z}$. We take
$\tau=10^{-4}$ in every experiment. The estimators are evaluated
as follows: (i) draw $\{(y_{j},z_{j})\}^{M}_{j=1}$ from $\mathcal{Y}_{\mu}\times\mathcal{Z}_{\mu}$
uniformly at random, (ii) for each $j$, compute 
\[
\widehat{S}^{(j)}(y_{j},x,z_{j})=\int^{x}_{0}\frac{\frac{\nabla_{x}\widehat{V}(u|z_{j})}{\nabla_{z}\widehat{V}(u|z_{j})}\nabla_{z}\widehat{F}(y_{j}|u,z_{j})-\nabla_{x}\widehat{F}(y_{j}|u,z_{j})}{\nabla_{y}\widehat{F}(y_{j}|u,z_{j})}\mathrm{d}u,
\]
 where the integral over $u$ is evaluated on a grid of $M_{u}$ midpoints,
and (iii) compute $\hat{g}_{\mathrm{LS}}(x)=M^{-1}\sum^{M}_{j=1}\widehat{S}^{(j)}(y_{j},x,z_{j})$
or $\hat{g}_{\mathrm{LAD}}(x)=\arg\min_{q}Q_{b}(q|\{\hat{S}^{(j)}\}^{M}_{j=1})$
with 
\[
Q_{b}(q|\{\hat{S}^{(j)}\}^{M}_{j=1})=M^{-1}\sum^{M}_{j=1}\left\{ \hat{S}^{(j)}(y_{j},x,z_{j})-q\right\} \left\{ 2\Lambda_{b}\left[\hat{S}^{(j)}(y_{j},x,z_{j})-q\right]-1\right\} ,
\]
taking $b=0.01$ and $\Lambda$ the standard normal distribution function.
We use $M=300$ and $M_{u}=100$ in the point-estimation experiments
and $M=200$, $M_{u}=50$ in the bootstrap experiment, which costs
$B=500$ times as much per replication. The support of $\mu$ and
the draws $(y_{j},z_{j})$ are selected once from the original sample
and held fixed across bootstrap replications, so that $\hat{g}^{*}-\hat{g}$
does not carry integration noise that $\hat{g}-g$ does not.\footnote{Two features of the implementation lie outside the theory as stated;
a third, the bandwidth sequence, is taken up below. The first is the
data-driven support. Assumption \ref{assu: density} asks that $\mu$
be fixed and $m$-times continuously differentiable, which an indicator
of estimated derivatives crossing a threshold is not, so the theorems
do not cover the rule used here. Appendix \ref{sec:thm1} identifies
what the smoothness provides: it makes the boundary terms of the four
integrations by parts in $z$ vanish. With a weight that is not continuous
those terms survive at order $(nh_{z})^{-1/2}$, the order of the
leading term itself, so this departure is not one that merely moves
a constant. Remark \ref{rmk:mu} gives that the estimand is the same
for every $\mu$, so a data-driven choice cannot move the target;
it does not follow that plugging in a random $\hat{\mu}$ is first-order
negligible. The second feature is the finite $M$. There is no approximation
bias: $S(y,x,z)=g(x)$ at every $(y,z)$ in the support, so $M^{-1}\sum^{M}_{j=1}S(y_{j},x,z_{j})$
is exactly $g(x)$ whatever the draws. The first-order variance is
another matter: the average over $M$ draws replaces the integral
against $\mu$ by one against their empirical measure, changing the
function $c_{a}$ and with it the variance of Theorem \ref{thm:pointwise}.
The simulations condition on the draws.}

The estimators are evaluated at 21 equally spaced points on $[-2.5,2.5]$.
We do not evaluate at $\pm3$. At the edges of the support of $X$
one has $V(x|z)\to1$ for every $z$, so $\nabla_{z}V\to0$ and the
ratio that identifies $\nabla_{x}g$ has a vanishing denominator;
Assumption \ref{assu: density} and Theorem \ref{thm:pointwise} are
stated on a compact $\mathcal{X}_{0}$ strictly inside $\mathcal{X}$
for this reason. The point $x=0$ is the normalization point, where
both estimators are zero by construction; the band excludes it, for
the reason given after Theorem \ref{thm:pointwise}, so the band is
reported on the remaining 20 points.

We set $h_{v}(n)=c\,\hat{\sigma}_{v}\,(n/1000)^{-0.15}$ for $v\in\{y,x,z\}$,
where $\hat{\sigma}^{2}_{v}$ is the sample variance. The exponent
0.15 that the simulations use for all three bandwidths departs deliberately
from what Assumption \ref{assu:rate} allows, and it is worth saying
at once in what respect. With a kernel of order $m$ the common exponent
must lie in $(1/(2m+1),1/7)$, which is $(0.111,0.143)$ at $m=4$,
and that window belongs to the case of three equal exponents. The
value 0.15 meets every condition in which $h_{x}$ appears, including
the undersmoothing one. It fails the conditions on the first-stage
bandwidths, which at $a_{x}=0.15$ ask for $a_{y},a_{z}\in(0.106,0.125)$
rather than 0.15. Using 0.15 for all three is therefore a further
respect in which the implementation departs from the theory. The first-stage
bandwidths it gives differ by a few per cent from those an exponent
inside the window would give, and at $n=1000$ not at all, since the
factor $(n/1000)^{-0.15}$ equals one there. Assumption \ref{assu:rate}
restricts the exponents and leaves the constants free, so $c$ can
be chosen separately for point estimation and for the band. We choose
it by comparing values of $c$ on a grid at $n=1000$. Table \ref{tab:bw}
reports that grid for the LAD estimator. Two criteria pick different
constants. Root mean squared error is smallest at $c=0.60$. The ratio
$\max_{x}|\text{bias}|/\text{sd}$, which governs how far a confidence
interval centered at $\hat{g}$ is displaced relative to its own width,
is smallest at $c=0.40$. We use $c=0.60$ for point estimation and
$c=0.40$ for the band.

\begin{table}[t]
\centering
\begin{tabular}{lcccccccc}
\hline
$c$ & $0.20$ & $0.30$ & $0.40$ & $0.50$ & $0.60$ & $0.75$ & $0.90$ & $1.10$\\
\hline
RMSE & $0.336$ & $0.226$ & $0.175$ & $0.148$ & $\mathbf{0.135}$ & $0.146$ & $0.190$ & $0.248$\\
$\max_x|\text{bias}|/\text{sd}$ & $0.869$ & $0.544$ & $\mathbf{0.461}$ & $0.578$ & $0.869$ & $1.484$ & $3.143$ & $6.025$\\
\hline
\end{tabular}
\caption{Bandwidth selection for the LAD estimator, $n=1000$, 21 evaluation points, 100 replications. Smallest value in each row in bold.}
\label{tab:bw}
\end{table}

The bias is not monotone in $c$. Below $c=0.40$ the trimming in
the display above becomes active: the surviving $(y,z)$ set falls
from about 9,200 of the 10,201 grid points at $c=0.40$ to 6,400 at
$c=0.20$, and the ratio rises again. The choice of the bandwidth
shrinks $\mathcal{Y}_{\mu}\times\mathcal{Z}_{\mu}$ since a smaller
one makes the estimated denominators smaller and noisier.

The kernel is the fourth-order Gaussian $K(u)=\tfrac{1}{2}(3-u^{2})\varphi(u)$,
where $\varphi$ is the standard normal density. A fourth-order kernel
can take negative values, so the estimated densities can go non-positive
and the ratio defining $\nabla_{x}g$ can blow up. In this simulation,
the estimated densities stayed positive in all 100 replications at
every bandwidth in Table \ref{tab:bw}.

\subsection{Point estimation}

Table \ref{tab:point} reports bias, standard deviation and root mean
squared error, averaged over the 21 evaluation points, for the two
estimators of this paper and for the series estimator of \citet{NPV99-ECTA}
at series lengths $p=2,\dots,6$. The series estimator uses a power
series basis with the same $p$ at both stages: $\mathbb{E}[X|Z]$
is fitted by a polynomial of degree $p$ in $Z$, and $Y$ is then
fitted by a polynomial of total degree $p$ in $(X,\hat{\eta})$,
with the estimate of $g$ taken from the terms in $X$ alone. Both
of our estimators use $c=0.60$. For reference, the range of $g$
over the grid is $1.995$.

\begin{table}[t]
\centering
\begin{tabular}{lcccccc}
\hline
 & \multicolumn{3}{c}{$n=500$} & \multicolumn{3}{c}{$n=1000$}\\
 & $|\text{bias}|$ & sd & RMSE & $|\text{bias}|$ & sd & RMSE\\
\hline
LAD & $0.047$ & $0.152$ & $0.179$ & $0.048$ & $0.104$ & $\mathbf{0.133}$\\
LS & $0.075$ & $0.241$ & $0.294$ & $0.070$ & $0.185$ & $0.237$\\
\hline
NPV, $p=2$ & $0.309$ & $0.086$ & $0.352$ & $0.310$ & $0.068$ & $0.348$\\
NPV, $p=3$ & $0.056$ & $0.126$ & $\mathbf{0.162}$ & $0.065$ & $0.093$ & $\mathbf{0.133}$\\
NPV, $p=4$ & $0.052$ & $0.171$ & $0.214$ & $0.065$ & $0.123$ & $0.162$\\
NPV, $p=5$ & $0.073$ & $0.207$ & $0.265$ & $0.065$ & $0.143$ & $0.180$\\
NPV, $p=6$ & $0.077$ & $0.242$ & $0.315$ & $0.067$ & $0.169$ & $0.213$\\
\hline
\end{tabular}
\caption{Bias, standard deviation and RMSE, averaged over 21 points on $[-2.5,2.5]$, 100 replications. Smallest RMSE in each column in bold.}
\label{tab:point}
\end{table}

Three things are visible. First, LAD dominates LS by a wide margin
at the same bandwidth, by a factor of 1.6 in RMSE at $n=500$ and
1.8 at $n=1000$. The gap is variance, not bias. It narrows if LS
is given its own bandwidth --- at $n=1000$ the RMSE-minimizing value
for LS is $c=0.75$, giving $0.245$ --- but LAD remains ahead. We
report LAD as the main estimator from here on.

Second, the series estimator's RMSE varies by a factor of 2.6 across
$p=2,\dots,6$ at $n=1000$. However, one can use a cross-validation
to choose $p$, and it picks $p=3$ in 89 of 100 samples at $n=500$
and 90 at $n=1000$. No equally natural criterion is available for
the bandwidth. Held-out prediction of $Y$ is the closest analogue,
but it targets $\mathbb{E}[Y\mid X]=g(X)+\mathbb{E}[\varepsilon\mid X]$,
which differs from $g$ precisely because $X$ is endogenous. The
difficulty is not specific to this estimator: \citet[Section 2.2]{CCK25-ReStud}
argue that under endogeneity the cross-validation criterion need not
estimate the mean squared error even asymptotically, so it is not
a meaningful criterion for choosing the sieve dimension, and they
replace it by a data-driven rule. Nothing of the kind is available
for a bandwidth here, and its selection remains an open problem.

Third, we do not claim an advantage in root mean squared error. LAD
beats the series estimator at every $p$ except $p=3$, and loses
to $p=3$ by 11 per cent at $n=500$. At $n=1000$ the two are level:
the difference is $0.0008$.

It should be noted that a misspecified estimator often achieves lower
mean squared error than a consistent one at a given sample size \citep[see, e.g.,][]{AKS25-ECTA}.
Lower error at one sample size is therefore not a reason to use the
series estimator here. What separates the two is whether the error
vanishes as the sample grows. Table \ref{tab:cons} in Appendix \ref{sec:Consistency}
follows both estimators from $n=250$ to $n=2000$. The bias of our
estimators falls with the sample: $n^{-0.19}$ for LS and $n^{-0.28}$
for LAD. The bias of the series estimator does not fall, at $n^{-0.05}$
with $p$ chosen by cross-validation. Its standard deviation falls
fastest of all, at $n^{-0.56}$, and by $n=2000$ its bias is larger
than its standard deviation.

\subsection{The band}

We now turn to the band (\ref{eq:supt_band}), computed for the LAD
estimator at $n=1000$ with $c=0.40$ and $B=500$ bootstrap draws.
The critical value $c_{n}(1-\alpha)$ and the scale $\hat{\sigma}$
are computed on the 20 evaluation points other than $x=0$; $\hat{\sigma}$
is the bootstrap standard deviation at each point, not smoothed across
points. This is the first of the two studentizers of Section \ref{sec:Estimation-and-bootstrap},
that is, $\hat{s}_{n}=\hat{s}^{*}_{n}=\hat{\sigma}$.

\begin{figure}[t]
\begin{centering}
\includegraphics[width=0.8\textwidth]{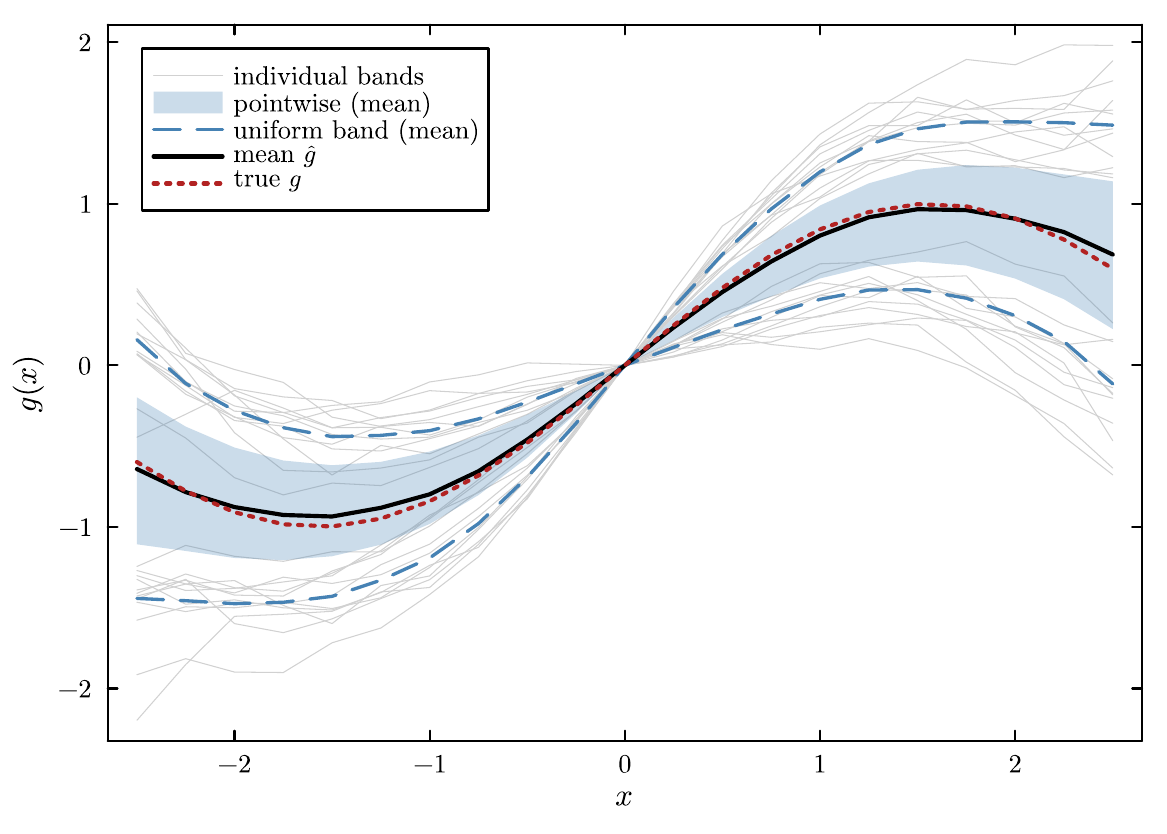}
\par\end{centering}
\caption{Uniform band and pointwise intervals at $\alpha=0.10$, LAD, $n=1000$,
$c=0.40$, $B=500$. Averages over 100 replications, with the bands
from individual replications in grey.}\label{fig:band}
\end{figure}

Figure \ref{fig:band} shows the band averaged over replications,
with the pointwise intervals and the individual replications behind
it. The width varies across $\mathcal{X}_{0}$, because $\hat{g}(x)$
carries the factor $x$: the band is about six times wider at the
outermost evaluation points $\pm2.5$ than at the innermost $\pm0.25$,
and narrows toward the normalization point, which is not itself one
of the evaluation points. That factor multiplies the bias as well,
but the smoothing bias is much smaller, so the mean of $\hat{g}$
still tracks $g$ where the band is widest. The individual replications
drawn behind the average share a common vertical displacement, which
$T_{n}(0)$ of Lemma \ref{lem:boundary} leaves: the deviation at
every $x$ carries the same $-T_{n}(0)$. The shared term makes the
deviations covary: for any two evaluation points, however far apart,
the covariance of $\hat{g}-g$ contains the variance of $T_{n}(0)$.
Without that term the covariance would fall to zero once the kernels
at the two points stop overlapping, at a separation of about two bandwidths.
Averaged over the 97 pairs of points more than 1.5 apart, it is 0.005,
against a variance of 0.029 at each point. About a fifth of the variance
at a point is therefore shared with every other point. The covariance
is positive in 95 of those 97 pairs, so the value is not noise around
zero. The comparison is unaffected by the smoothing bias, which is
common across replications and so drops out of a covariance.

\subsection{Coverage}

Table \ref{tab:cov} reports coverage over 100 replications. With
100 replications a coverage rate has a Monte Carlo standard error
of about $0.03$ at $0.90$. The row marked simultaneous asks how
often the pointwise intervals cover $g$ at all 20 points at once,
which is the quantity the band is designed to control. The two pointwise
rows differ only in centering. Write $q^{*}_{\tau}(x)$ for the $\tau$
quantile of the bootstrap replicates $\hat{g}^{*}(x)$. The row marked
Efron is $[q^{*}_{\alpha/2}(x),q^{*}_{1-\alpha/2}(x)]$. The row marked
percentile is its reflection about $\hat{g}(x)$, $[2\hat{g}(x)-q^{*}_{1-\alpha/2}(x),2\hat{g}(x)-q^{*}_{\alpha/2}(x)]$.
The two have the same width at every point, which is why the width
columns agree.

\begin{table}[t]
\centering
\begin{tabular}{lcccc}
\hline
 & \multicolumn{2}{c}{$\alpha=0.10$} & \multicolumn{2}{c}{$\alpha=0.05$}\\
 & coverage & width & coverage & width\\
\hline
uniform band & $0.950$ & $0.927$ & $0.980$ & $1.018$\\
\hline
pointwise, percentile & $0.808$ & $0.529$ & $0.888$ & $0.632$\\
pointwise, Efron & $0.915$ & $0.529$ & $0.964$ & $0.632$\\
pointwise, Efron, simultaneous & $0.550$ & --- & $0.790$ & ---\\
\hline
\end{tabular}
\caption{Coverage and average width, LAD, $n=1000$, $c=0.40$, $B=500$, 100 replications. Pointwise entries are averaged over the 20 points; the simultaneous row is the fraction of replications in which the pointwise intervals cover at all 20 points.}
\label{tab:cov}
\end{table}

There is no evidence of undercoverage: the band covers in $95$ replications
at $\alpha=0.1$ and 98 replications at $\alpha=0.05$. The pointwise
intervals cover at every point at once in 55 replications. The band
achieves this by being wider, by a factor of 1.75, the ratio of the
two critical values. That width is still useful because the band separates
the peak of the sine from its trough with room to spare. The band
also over-covers, at 0.950 against a nominal 0.90 and 0.980 against
0.95, and Theorem \ref{thm:band} does not predict that. Its error
bound is explicit but not numerically small here: at $n=1000$ with
$c=0.40$, so that $h_{x}\approx0.63$ and $\log n\approx6.9$, the
leading term $(h_{x}\log n)^{1/2}$ is about $2.1$ and the coupling
term about $2.4$, before any constant. As a numerical statement about
the coverage error the bound therefore says nothing at this sample
size; what the theorem gives is that the error vanishes, not that
it is small here.

The bootstrap supremum is larger than the sampling supremum it approximates.
Over 40 replications with 60 resamples, at the same $n$ and bandwidth,
the supremum of $\sqrt{nh_{x}}|\hat{g}^{*}-\hat{g}|$ over the resamples
exceeds the supremum of $\sqrt{nh_{x}}|\hat{g}-g|$ over the replications
by 12 per cent at the $0.95$ quantile (the common $\hat{\sigma}$
divides both suprema and so does not enter the comparison). Thus the
critical value made from the bootstrap distribution becomes larger,
and the band also becomes wider. Recomputing the plug-in standard
error inside each resample, which is the second studentizer of Section
\ref{sec:Estimation-and-bootstrap}, widens the gap to 23 per cent
instead of closing it.

\appendix

\section{Proof of Proposition \ref{prop:identification} }\label{sec:lemma1}

Take any $(x,y,z)$ satisfying conditions (i)--(iv) of Assumption
\ref{assu: local}, and write $e:=y-g(x)$. Note that 
\begin{align*}
F_{Y|X,Z}(y|x,z) & =P\left(g(X)+\varepsilon\le y|X=x,Z=z\right)\\
 & =P\left(\varepsilon\le y-g(x)|V(X|Z)=V(x|z),Z=z\right)\\
 & =F_{\varepsilon|V(X|Z)}\left(y-g(x)|V(x|z)\right)
\end{align*}
The second equality holds because $V(\cdot|z)$ is one-to-one at $x$:
if $V(x^{\prime}|z)=V(x|z)$ for some $x^{\prime}\neq x$, the interval
between the two would carry no mass of $X$ given $Z=z$, which $x\in\mathcal{X}^{\circ}_{z}$
forbids. So $\sigma(X,Z)=\sigma(V(X|Z),Z)$ up to null sets. For the
third equality, let $\tau(z,\cdot)$ be the inverse of $h(z,\cdot)$,
which exists by Assumption \ref{assu:IV}(ii). Assumption \ref{assu:IV}(i)
gives $V(x|z)=P\left(h(z,\eta)\le x\right)=F_{\eta}(\tau(z,x))$,
and $\tau(Z,X)=\eta$, so $V(X|Z)=F_{\eta}(\eta)$ almost surely.
The pair $\left(\varepsilon,V(X|Z)\right)$ is therefore a function
of $\left(\varepsilon,\eta\right)$, which Assumption \ref{assu:IV}(i)
makes independent of $Z$, and the conditioning on it may be dropped.
Now hold $x$ fixed. By Assumption \ref{assu: local}(ii) the map
$z\mapsto V(x|z)$ is continuously differentiable near $z$ with a
derivative that does not vanish there, so the inverse function theorem
supplies a continuously differentiable local inverse $\zeta$. The
display above holds at $(y^{\prime},x,z^{\prime})$ for every $y^{\prime}$
and every $z^{\prime}$ with $x\in\mathcal{X}^{\circ}_{z^{\prime}}$,
which Assumption \ref{assu: local}(i) supplies on a neighborhood
of $z$, so $F_{\varepsilon|V(X|Z)}(e|v)=F_{Y|X,Z}\left(e+g(x)|x,\zeta(v)\right)$
for $(e,v)$ in a neighborhood of $\left(y-g(x),V(x|z)\right)$. Assumption
\ref{assu: local}(iii) makes $\nabla_{y}F_{Y|X,Z}$ and $\nabla_{z}F_{Y|X,Z}$
continuous there, so both partial derivatives of $(e,v)\mapsto F_{\varepsilon|V(X|Z)}(e|v)$
are continuous on that neighborhood and the map is continuously differentiable
in the pair. That allows the chain rule below, in which $x$ moves
both arguments.

Taking derivatives of the previous display evaluating at $(y,x,z)$
gives 
\begin{align}
f_{Y|X,Z}(y|x,z) & =f_{\varepsilon|V(X|Z)}\left(y-g(x)|V(x|z)\right),\label{eq:diff1}\\
\nabla_{x}F_{Y|X,Z}(y|x,z) & =-\nabla_{x}g(x)f_{\varepsilon|V(X|Z)}\left(y-g(x)|V(x|z)\right)\label{eq:diff2}\\
 & \quad+\nabla_{x}V(x|z)\nabla_{v}F_{\varepsilon|V(X|Z)}\left(y-g(x)|V(x|z)\right),\nonumber \\
\nabla_{z}F_{Y|X,Z}(y|x,z) & =\nabla_{z}V(x|z)\nabla_{v}F_{\varepsilon|V(X|Z)}\left(y-g(x)|V(x|z)\right).\label{eq:diff3}
\end{align}
Rearranging (\ref{eq:diff3}) gives 
\[
\nabla_{v}F_{\varepsilon|V(X|Z)}\left(y-g(x)|V(x|z)\right)=\nabla_{z}V(x|z)^{-1}\nabla_{z}F_{Y|X,Z}(y|x,z).
\]
Inserting this expression into (\ref{eq:diff2}) yields 
\begin{align*}
\nabla_{x}g(x)f_{\varepsilon|V(X|Z)}\left(y-g(x)|V(x|z)\right) & =\frac{\nabla_{x}V(x|z)}{\nabla_{z}V(x|z)}\nabla_{z}F_{Y|X,Z}(y|x,z)-\nabla_{x}F_{Y|X,Z}(y|x,z).
\end{align*}
Since $f_{\varepsilon|V(X|Z)}\left(y-g(x)|V(x|z)\right)$ is identified
from (\ref{eq:diff1}), we obtain 
\[
\nabla_{x}g(x)=\frac{\frac{\nabla_{x}V(x|z)}{\nabla_{z}V(x|z)}\nabla_{z}F_{Y|X,Z}(y|x,z)-\nabla_{x}F_{Y|X,Z}(y|x,z)}{f_{Y|X,Z}(y|x,z)}.
\]
The identity before the division is the first claim of the proposition.
Condition (iv) makes the division possible and gives the second. By
Assumption \ref{assu: local} such a $(y,z)$ exists at every $x\in\mathcal{X}_{d}$.
Since $\nabla_{x}g$ is continuous from Assumption \ref{assu: func_g}
and $\mathcal{X}_{d}$ is dense in $\mathcal{X}$, its values on $\mathcal{X}_{d}$
determine it on all of $\mathcal{X}$, which is the claim of the remark
that follows it.

\section{Proof of Theorem \ref{thm:pointwise}}\label{sec:thm1}

Throughout this appendix the conditions of Theorem \ref{thm:pointwise}
are in force. Write $R_{n}$ for the square of the largest uniform
rate the smoothed quantities attain,
\begin{equation}
R_{n}:=\bar{h}^{2m}+\frac{\log n}{nh_{y}h_{x}h_{z}}+\frac{\log n}{nh^{3}_{x}h_{z}}+\frac{\log n}{nh_{x}h^{3}_{z}}.\label{eq:Rn}
\end{equation}
The three variance conditions of Assumption \ref{assu:rate} say that
$\sqrt{nh_{x}}\,R_{n}\rightarrow0$, and the bias conditions give
the same for the first summand.

Lemmas \ref{lem:linear1} and \ref{lem:linear2} provide the linear
approximation of $\widehat{S}(w)-S(w)$ in terms of the nonparametric
estimation errors. Lemma \ref{lem:AL} expands $\hat{g}_{\mathrm{LS}}(x)-g(x)$
in the nuisance errors and splits it into a sample average and two
boundary terms, and Lemma \ref{lem:boundary} evaluates the boundary
terms, which dominate. The theorem then follows from a central limit
theorem for a triangular array at the scale $\sqrt{nh_{x}}$.

Let $W_{i}=(Y_{i},X_{i},Z_{i})^{\prime}$ and $w=(y,x,z)^{\prime}.$
From here on $F$ stands for $F_{Y|X,Z}$, and the arguments of $F(y|x,z)$,
$V(x|z)$, $\Phi(y,x,z)$, $\Psi(x,z)$, $f(x,z)$ and $p(z)$ are
suppressed. A subscript on any of the six denotes a partial derivative
in that argument: $F_{y}=\nabla_{y}F_{Y|X,Z}$, which is the density
$f_{Y|X,Z}$ of (\ref{eq:g_functional}); $V_{x}=\nabla_{x}V$; $\Phi_{x}=\nabla_{x}\Phi$;
and so on.

Let 
\begin{align*}
\widehat{S}_{1}(w) & :=\int^{x}_{0}\widehat{\Sigma}_{1}(w)\mathrm{d}u:=\int^{x}_{0}\frac{\nabla_{x}\widehat{V}(u|z)\nabla_{z}\widehat{F}(y|u,z)}{\nabla_{z}\widehat{V}(u|z)\nabla_{y}\widehat{F}(y|u,z)}\mathrm{d}u,\\
\widehat{S}_{2}(w) & :=\int^{x}_{0}\widehat{\Sigma}_{2}(w)\mathrm{d}u:=\int^{x}_{0}\frac{\nabla_{x}\widehat{F}(y|u,z)}{\nabla_{y}\widehat{F}(y|u,z)}\mathrm{d}u.
\end{align*}
Then $\widehat{S}(w)$ can be written as $\widehat{S}(w)=\widehat{S}_{1}(w)-\widehat{S}_{2}(w)$.
In the same manner, let us denote $S_{1}(w)$ and $S_{2}(w)$ by the
population counterparts for $\widehat{S}_{1}(w)$ and $\widehat{S}_{2}(w)$.

Define the functionals $\nabla_{\Phi}S_{2}(w)$ and $\nabla_{f}S_{2}(w)$
by 
\begin{align*}
\nabla_{\Phi}S_{2}(w)[\Phi] & :=\int^{x}_{0}D^{2}_{\Phi}(y,u,z)\Phi(y,u,z)\mathrm{d}u+\int^{x}_{0}D^{2}_{\Phi_{x}}(y,u,z)\Phi_{x}(y,u,z)\mathrm{d}u\\
 & +\int^{x}_{0}D^{2}_{\Phi_{y}}(y,u,z)\Phi_{y}(y,u,z)\mathrm{d}u,\\
\nabla_{f}S_{2}(w)[f] & :=\int^{x}_{0}D^{2}_{f}(y,u,z)f(u,z)\mathrm{d}u+\int^{x}_{0}D^{2}_{f_{x}}(y,u,z)f_{x}(u,z)\mathrm{d}u,
\end{align*}
where 
\begin{align*}
D^{2}_{\Phi} & :=-\frac{f_{x}}{F_{y}f^{2}},\quad D^{2}_{\Phi_{x}}:=\frac{1}{F_{y}f},\quad D^{2}_{\Phi_{y}}:=-\frac{F_{x}}{F^{2}_{y}f},\quad D^{2}_{f}:=\frac{\Phi f_{x}}{F_{y}f^{3}},\quad D^{2}_{f_{x}}:=-\frac{\Phi}{F_{y}f^{2}}.
\end{align*}
The superscript indexes the two parts of $S=S_{1}-S_{2}$; the subscript
names the function the coefficient multiplies.
\begin{lem}
\label{lem:linear1} The following holds uniformly over $\mathcal{Y}_{\mu}\times\widetilde{\mathcal{X}}_{0}\times\mathcal{Z}_{\mu}$:
\[
\widehat{S}_{2}(w)-S_{2}(w)=\nabla_{\Phi}S_{2}(w)[\widehat{\Phi}-\Phi]+\nabla_{f}S_{2}(w)[\widehat{f}-f]+O_{p}(R_{n}).
\]
\end{lem}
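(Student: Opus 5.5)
The plan is to write the integrand $\widehat{\Sigma}_{2}$ as a smooth function of five smoothed quantities and Taylor-expand it to first order. Since $\widehat{F}=\widehat{\Phi}/\widehat{f}$, the quotient rule gives
\[
\nabla_{x}\widehat{F}=\frac{\widehat{\Phi}_{x}}{\widehat{f}}-\frac{\widehat{\Phi}\widehat{f}_{x}}{\widehat{f}^{2}},\qquad\nabla_{y}\widehat{F}=\frac{\widehat{\Phi}_{y}}{\widehat{f}},
\]
where $\widehat{\Phi}_{y}$ is the three-kernel estimator of the text. Hence
\[
\widehat{\Sigma}_{2}=G(\widehat{\Phi},\widehat{\Phi}_{x},\widehat{\Phi}_{y},\widehat{f},\widehat{f}_{x}),\qquad G(a,b,c,d,e):=\frac{b}{c}-\frac{ae}{cd},
\]
and $\Sigma_{2}=G(\Phi,\Phi_{x},\Phi_{y},f,f_{x})$ with the same $G$. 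Differentiating $G$ at the population point gives the coefficients $D^{2}_{\Phi},D^{2}_{\Phi_{x}},D^{2}_{\Phi_{y}},D^{2}_{f},D^{2}_{f_{x}}$. For example, $\partial_{b}G=1/c=f/\Phi_{y}$; the stated $D^{2}_{\Phi_{x}}=1/(F_{y}f)$ equals $1/\Phi_{y}$ times $1$, and I expect the remaining factors of $f$ to be absorbed by the listed forms through $\Phi_{y}=F_{y}f$ and $\Phi=Ff$. I will verify each coefficient algebraically; this step is a routine check.

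The second step is a uniform second-order Taylor bound. By Assumption \ref{assu: density}(iii), $f$ and $\Phi_{y}=F_{y}f$ are bounded away from zero on a neighbourhood of $\mathcal{Y}_{\mu}\times\widetilde{\mathcal{X}}_{0}\times\mathcal{Z}_{\mu}$, and $\Phi,\Phi_{x},f_{x}$ are bounded there by Assumption \ref{assu: density}(i). So $G$ is $C^{2}$ with bounded second derivatives on a compact neighbourhood of the range of the population arguments. On the event that all five estimators lie uniformly within that neighbourhood, the remainder is at most a constant times the sum of the squared sup-norm errors. That event has probability tending to one once uniform consistency is established.

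The third step, which I expect to be the main obstacle, is to establish uniform rates over the compact region,
\[
\sup|\widehat{\Phi}-\Phi|+\sup|\widehat{f}-f|=O_{p}\Big(\bar h^{m}+\sqrt{\tfrac{\log n}{nh_{x}h_{z}}}\Big),
\]
\[
\sup|\widehat{\Phi}_{x}-\Phi_{x}|+\sup|\widehat{f}_{x}-f_{x}|=O_{p}\Big(\bar h^{m}+\sqrt{\tfrac{\log n}{nh_{x}^{3}h_{z}}}\Big),
\]
\[
\sup|\widehat{\Phi}_{y}-\Phi_{y}|=O_{p}\Big(\bar h^{m}+\sqrt{\tfrac{\log n}{nh_{y}h_{x}h_{z}}}\Big).
\]
Squaring, each of these is $O(R_{n})$ by (\ref{eq:Rn}).

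For the stochastic parts I would use the VC-type entropy bound for kernel classes generated by $K,K'$ of bounded variation (Assumption \ref{assu: kernel}(i)) together with the indicator class $\{\boldsymbol{1}\{Y\le y\}\}$. A Talagrand/Einmahl--Mason type maximal inequality then gives the $\sqrt{\log n/(n\cdot\text{bandwidth product})}$ rates; Assumption \ref{assu:rate} ensures $\log(1/h)\asymp\log n$.

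For the biases, I would substitute $u=(X_{i}-x)/h_{x}$ and expand to order $m$ using the $m$-th order kernel. The integrability in Assumption \ref{assu: density}(i) justifies differentiating under the integral in $y$. Because the kernel is not compactly supported, the tails outside $\mathcal{U}\times\mathcal{W}$ are controlled by $|K(v)|\le C|v|^{-\nu}$ with $\nu>m+1$, contributing $O(\bar h^{m})$.

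Finally, I integrate over $u\in[0,x]\subset\widetilde{\mathcal{X}}_{0}$, a bounded interval. The linear terms integrate exactly to $\nabla_{\Phi}S_{2}(w)[\widehat{\Phi}-\Phi]+\nabla_{f}S_{2}(w)[\widehat{f}-f]$, and the uniform remainder $O_{p}(R_{n})$ times the interval length remains $O_{p}(R_{n})$ uniformly in $w$.
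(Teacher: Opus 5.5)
Your proposal is correct and follows essentially the paper's argument. Both linearize the integrand $\widehat{\Sigma}_{2}$ in the smoothed quantities, bound the quadratic remainder uniformly by the square of the largest uniform rate (VC-type kernel and indicator classes plus an exponential inequality, which the paper takes from Gin\'e and Guillou), and integrate over $u$ between $0$ and $x$. The difference is only bookkeeping: you expand the single map $G(a,b,c,d,e)=b/c-ae/(cd)$, in which $\widehat{f}$ has already partly cancelled, so $D^{2}_{f}$ comes out directly, whereas the paper chains the ratio identity through $\widehat{F}_{x}$ and $\widehat{F}_{y}$ and then collapses the coefficient of $\widehat{f}-f$ via $\Phi_{y}=fF_{y}$. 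One slip to fix: $\partial_{b}G=1/c=1/\Phi_{y}$, not $f/\Phi_{y}$, and this is exactly $D^{2}_{\Phi_{x}}$, so no stray factors of $f$ need absorbing.
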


\begin{proof}
The third term of the identity below divides by an estimated quantity,
so the conclusion is uniform only on the event on which the estimated
denominators are bounded away from zero. They are: Assumption \ref{assu: density}
(iii) bounds $f$, $f_{Y|X,Z}$, $\nabla_{z}V$ and $p$ away from
zero on $\mathcal{Y}_{\mu}\times\widetilde{\mathcal{X}}_{0}\times\mathcal{Z}_{\mu}$,
and the rates below make $\widehat{f}$, $\nabla_{y}\widehat{F}$,
$\nabla_{z}\widehat{V}$ and $\widehat{p}$ uniformly consistent for
them, so each is bounded away from zero there with probability tending
to one. All statements below are on that event. We use the following
identity repeatedly: 
\[
\frac{\hat{a}}{\hat{b}}-\frac{a}{b}=\frac{1}{b}\{\hat{a}-a\}-\frac{a}{b^{2}}\{\hat{b}-b\}+\frac{\{\hat{b}-b\}}{b\hat{b}}\left\{ \hat{a}-a-\frac{a(\hat{b}-b)}{b}\right\} .
\]
Note that $\widehat{\Sigma}_{2}(w)-\Sigma_{2}(w)$ is equal to 
\begin{align*}
 & \frac{1}{F_{y}}\{\widehat{F}_{x}-F_{x}\}-\frac{F_{x}}{F^{2}_{y}}\{\widehat{F}_{y}-F_{y}\}+O_{p}(\{\widehat{F}_{y}-F_{y}\}^{2})+O_{p}(\{\widehat{F}_{x}-F_{x}\}\{\widehat{F}_{y}-F_{y}\}).
\end{align*}
We will calculate $\widehat{F}_{y}-F_{y}$ and $\widehat{F}_{x}-F_{x}$.
Note that 
\begin{align*}
\widehat{F}_{y}-F_{y}= & \frac{1}{f}\{\widehat{\Phi}_{y}-\Phi_{y}\}-\frac{\Phi_{y}}{f^{2}}\{\widehat{f}-f\}+R_{y},
\end{align*}
where 
\[
R_{y}=O_{p}(\{\widehat{f}-f\}^{2})+O_{p}(\{\widehat{f}-f\}\{\widehat{\Phi}_{y}-\Phi_{y}\}).
\]
Also note that 
\begin{align*}
\widehat{F}_{x}-F_{x}= & -\frac{f_{x}}{f^{2}}\{\widehat{\Phi}-\Phi\}+\frac{1}{f}\{\widehat{\Phi}_{x}-\Phi_{x}\}+\left[\frac{2\Phi ff_{x}}{f^{4}}-\frac{\Phi_{x}}{f^{2}}\right]\{\widehat{f}-f\}-\frac{\Phi}{f^{2}}\{\widehat{f}_{x}-f_{x}\}+R_{x},
\end{align*}
where 
\begin{align*}
R_{x} & =O_{p}(\{\widehat{f}-f\}^{2})+O_{p}(\{\widehat{f}_{x}-f_{x}\}\{\widehat{\Phi}-\Phi\})\\
 & +O_{p}(\{\widehat{f}-f\}\{\widehat{\Phi}_{x}-\Phi_{x}\})+O_{p}(\{\widehat{f}-f\}^{2}\{\widehat{\Phi}-\Phi\})\\
 & +O_{p}(\{\widehat{f}-f\}^{2}\{\widehat{f}_{x}-f_{x}\})+O_{p}(\{\widehat{f}-f\}^{2}\{\widehat{\Phi}-\Phi\}\{\widehat{f}_{x}-f_{x}\}).
\end{align*}
Inserting these expressions into $\widehat{\Sigma}_{2}-\Sigma_{2}$
gives 
\begin{align*}
 & -\frac{f_{x}}{F_{y}f^{2}}\{\widehat{\Phi}-\Phi\}-\frac{F_{x}}{F^{2}_{y}}\frac{1}{f}\{\widehat{\Phi}_{y}-\Phi_{y}\}+\frac{1}{F_{y}f}\{\widehat{\Phi}_{x}-\Phi_{x}\}\\
 & +\frac{1}{F_{y}}\left[\frac{2\Phi ff_{x}}{f^{4}}-\frac{\Phi_{x}}{f^{2}}+\frac{F_{x}}{F_{y}}\frac{\Phi_{y}}{f^{2}}\right]\{\widehat{f}-f\}-\frac{1}{F_{y}}\frac{\Phi}{f^{2}}\{\widehat{f}_{x}-f_{x}\}+R_{2}
\end{align*}
where $R_{2}$ collects $F^{-1}_{y}R_{x}-F_{x}F^{-2}_{y}R_{y}$ together
with the two quadratic terms of the display for $\widehat{\Sigma}_{2}-\Sigma_{2}$
above. Every summand of $R_{2}$ is of degree at least two in the
estimation errors, so $\lVert R_{2}\rVert_{\infty}$ is bounded by
the square of the largest of the rates displayed below. The coefficient
of $\{\widehat{f}-f\}$ collapses to $D^{2}_{f}$ because $\Phi_{y}=fF_{y}$,
so that $\frac{F_{x}}{F_{y}}\frac{\Phi_{y}}{f^{2}}=\frac{F_{x}}{f}=\frac{\Phi_{x}}{f^{2}}-\frac{\Phi f_{x}}{f^{3}}$
and the bracket reduces to $\Phi f_{x}/f^{3}$. Using the definitions
of functionals, we have 
\[
\widehat{S}_{2}(w)-S_{2}(w)=\nabla_{\Phi}S_{2}(w)[\widehat{\Phi}-\Phi]+\nabla_{f}S_{2}(w)[\widehat{f}-f]+R_{2}.
\]

It remains to show that the remainder term satisfies $\sup_{w\in\mathcal{Y}_{\mu}\times\widetilde{\mathcal{X}}_{0}\times\mathcal{Z}_{\mu}}R_{2}(w)=O_{p}(R_{n})$.
We claim 
\begin{align*}
\left\lVert \widehat{\Phi}-\Phi\right\rVert _{\infty} & =O_{p}\left(\max\{h_{x},h_{z}\}^{m}\right)+O_{p}\left(\sqrt{\frac{\log n}{nh_{x}h_{z}}}\right),\\
\left\lVert \widehat{\Phi}_{y}-\Phi_{y}\right\rVert _{\infty} & =O_{p}\left(\max\{h_{y},h_{x},h_{z}\}^{m}\right)+O_{p}\left(\sqrt{\frac{\log n}{nh_{y}h_{x}h_{z}}}\right),\\
\left\lVert \widehat{\Phi}_{x}-\Phi_{x}\right\rVert _{\infty} & =O_{p}\left(\max\{h_{x},h_{z}\}^{m}\right)+O_{p}\left(\sqrt{\frac{\log n}{nh^{3}_{x}h_{z}}}\right),\\
\left\lVert \widehat{\Phi}_{z}-\Phi_{z}\right\rVert _{\infty} & =O_{p}\left(\max\{h_{x},h_{z}\}^{m}\right)+O_{p}\left(\sqrt{\frac{\log n}{nh_{x}h^{3}_{z}}}\right),\\
\left\lVert \widehat{f}-f\right\rVert _{\infty} & =O_{p}\left(\max\{h_{x},h_{z}\}^{m}\right)+O_{p}\left(\sqrt{\frac{\log n}{nh_{x}h_{z}}}\right),\\
\left\lVert \widehat{f}_{x}-f_{x}\right\rVert _{\infty} & =O_{p}\left(\max\{h_{x},h_{z}\}^{m}\right)+O_{p}\left(\sqrt{\frac{\log n}{nh^{3}_{x}h_{z}}}\right),\\
\left\lVert \widehat{f}_{z}-f_{z}\right\rVert _{\infty} & =O_{p}\left(\max\{h_{x},h_{z}\}^{m}\right)+O_{p}\left(\sqrt{\frac{\log n}{nh_{x}h^{3}_{z}}}\right),\\
\left\lVert \widehat{\Psi}-\Psi\right\rVert _{\infty} & =O_{p}\left(h^{m}_{z}\right)+O_{p}\left(\sqrt{\frac{\log n}{nh_{z}}}\right),\\
\left\lVert \widehat{\Psi}_{z}-\Psi_{z}\right\rVert _{\infty} & =O_{p}\left(h^{m}_{z}\right)+O_{p}\left(\sqrt{\frac{\log n}{nh^{3}_{z}}}\right),\\
\left\lVert \widehat{p}-p\right\rVert _{\infty} & =O_{p}\left(h^{m}_{z}\right)+O_{p}\left(\sqrt{\frac{\log n}{nh_{z}}}\right),\\
\left\lVert \widehat{p}_{z}-p_{z}\right\rVert _{\infty} & =O_{p}\left(h^{m}_{z}\right)+O_{p}\left(\sqrt{\frac{\log n}{nh^{3}_{z}}}\right).
\end{align*}
 The rates above are claimed uniformly in $y$ and $x$ as well, which
the uniform convergence result by \citet{hansen2008uniform} does
not give. A class $\mathcal{F}$ with envelope $E$ is VC type with
characteristics $(A,v)$, $A\ge e$ and $v\ge1$, when
\begin{equation}
\sup_{Q}N\big(\mathcal{F},L_{2}(Q),\varepsilon\lVert E\rVert_{L_{2}(Q)}\big)\le\left(\frac{A}{\varepsilon}\right)^{v},\qquad0<\varepsilon\le1,\label{eq:VC-dim}
\end{equation}
the supremum being over finitely discrete probability measures $Q$.
The uniformity in those indices follows from a VC argument: the indicators
form a VC class, the kernel families are VC type under Assumption
\ref{assu: kernel}, a pointwise product of VC type classes is VC
type, and the exponential inequality of \citet{GineGuillou02} applies
to the product. Lemma \ref{lem:vc} below uses the same tools for
the class the band is built from.

The display has eleven rows for the twelve coordinates, and its exponents
follow two patterns. A coordinate that is a derivative has one more
power of the bandwidth in its rate. Differentiating $\widehat{\Phi}$
in $x$ puts a factor $h^{-1}_{x}$ in front of a kernel average of
the usual form. The variance of $\widehat{\Phi}_{x}$ is therefore
larger by $h^{-2}_{x}$, and its uniform deviation by $h^{-1}_{x}$.
$\widehat{\Phi}_{y}$ is not a derivative but a three-dimensional
average in its own right, which is why its rate has only the one extra
power of $h_{y}$. The missing row is $\Psi_{x}$: it equals $f$
identically, since $\Psi(x,z)=\int^{x}f_{X,Z}$, so $\widehat{\Psi}_{x}=\widehat{f}$
and the row for $\widehat{f}$ serves for both. The rates above give
$R_{2}(w)=O_{p}(R_{n})$ uniformly over $\mathcal{Y}_{\mu}\times\widetilde{\mathcal{X}}_{0}\times\mathcal{Z}_{\mu}$,
and Assumption \ref{assu:rate} makes it $o_{p}((nh_{x})^{-1/2})$.
\end{proof}

We define functionals $\nabla_{\alpha}S_{1}(w)$ for $\alpha\in\left\{ \Phi,\Psi,f,p\right\} $
by 
\begin{align*}
\nabla_{\Phi}S_{1}(w)[\Phi] & :=\int^{x}_{0}D^{1}_{\Phi}(y,u,z)\Phi(y,u,z)\mathrm{d}u+\int^{x}_{0}D^{1}_{\Phi_{y}}(y,u,z)\Phi_{y}(y,u,z)\mathrm{d}u\\
 & \quad+\int^{x}_{0}D^{1}_{\Phi_{z}}(y,u,z)\Phi_{z}(y,u,z)\mathrm{d}u,\\
\nabla_{\Psi}S_{1}(w)[\Psi] & :=\int^{x}_{0}D^{1}_{\Psi}(y,u,z)\Psi(u,z)\mathrm{d}u+\int^{x}_{0}D^{1}_{\Psi_{x}}(y,u,z)\Psi_{x}(u,z)\mathrm{d}u\\
 & \quad+\int^{x}_{0}D^{1}_{\Psi_{z}}(y,u,z)\Psi_{z}(u,z)\mathrm{d}u\\
\nabla_{f}S_{1}(w)[f] & :=\int^{x}_{0}D^{1}_{f}(y,u,z)f(u,z)\mathrm{d}u+\int^{x}_{0}D^{1}_{f_{z}}(y,u,z)f_{z}(u,z)\mathrm{d}u,\\
\nabla_{p}S_{1}(w)[p] & :=\int^{x}_{0}D^{1}_{p}(y,u,z)p(z)\mathrm{d}u+\int^{x}_{0}D^{1}_{p_{z}}(y,u,z)p_{z}(z)\mathrm{d}u,
\end{align*}
where 
\begin{align*}
D^{1}_{\Phi} & :=-\frac{V_{x}}{V_{z}}\frac{f_{z}}{F_{y}f^{2}},\quad D^{1}_{\Phi_{z}}:=\frac{V_{x}}{V_{z}}\frac{1}{F_{y}f},\quad D^{1}_{\Phi_{y}}:=-\frac{V_{x}F_{z}}{V_{z}F^{2}_{y}f},\\
D^{1}_{f} & :=\frac{V_{x}}{V_{z}}\frac{\Phi f_{z}}{F_{y}f^{3}},\quad D^{1}_{f_{z}}:=-\frac{V_{x}}{V_{z}}\frac{\Phi}{F_{y}f^{2}},\\
D^{1}_{\Psi_{x}} & :=\frac{F_{z}}{F_{y}}\frac{1}{pV_{z}},\quad D^{1}_{\Psi_{z}}:=-\frac{F_{z}}{F_{y}}\frac{V_{x}}{pV^{2}_{z}},\quad D^{1}_{\Psi}:=\frac{F_{z}}{F_{y}}\frac{V_{x}p_{z}}{p^{2}V^{2}_{z}},\\
D^{1}_{p} & :=-\frac{F_{z}}{F_{y}}\frac{V_{x}Vp_{z}}{p^{2}V^{2}_{z}},\quad D^{1}_{p_{z}}:=\frac{F_{z}}{F_{y}}\frac{V_{x}V}{pV^{2}_{z}}.
\end{align*}

These follow from the factorization $\Sigma_{1}=(V_{x}/V_{z})\cdot(F_{z}/F_{y})$,
whose two factors depend on disjoint sets of arguments: $F_{z}/F_{y}$
on $(\Phi,\Phi_{y},\Phi_{z},f,f_{z})$ and $V_{x}/V_{z}$ on $(\Psi,\Psi_{x},\Psi_{z},p,p_{z})$.
Since $F_{z}/F_{y}$ is $\Sigma_{2}$ with $x$ replaced by $z$,
the first two rows are the previous display with that replacement,
multiplied by $V_{x}/V_{z}$.
\begin{lem}
\label{lem:linear2} The following holds uniformly over $\mathcal{Y}_{\mu}\times\widetilde{\mathcal{X}}_{0}\times\mathcal{Z}_{\mu}$:
\begin{align*}
\widehat{S}_{1}(w)-S_{1}(w) & =\nabla_{\Phi}S_{1}(w)[\widehat{\Phi}-\Phi]+\nabla_{\Psi}S_{1}(w)[\widehat{\Psi}-\Psi]\\
 & +\nabla_{f}S_{1}(w)[\widehat{f}-f]+\nabla_{p}S_{1}(w)[\widehat{p}-p]+O_{p}(R_{n}).
\end{align*}
\end{lem}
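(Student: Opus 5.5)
The plan follows the proof of Lemma \ref{lem:linear1}, using the factorization $\Sigma_{1}=(V_{x}/V_{z})\cdot(F_{z}/F_{y})$ recorded after the definition of the $D^{1}$ coefficients. As there, we work on the event on which $\widehat{f}$, $\nabla_{y}\widehat{F}$, $\nabla_{z}\widehat{V}$ and $\widehat{p}$ are bounded away from zero uniformly on $\mathcal{Y}_{\mu}\times\widetilde{\mathcal{X}}_{0}\times\mathcal{Z}_{\mu}$. Assumption \ref{assu: density}(iii) and the uniform rates in the proof of Lemma \ref{lem:linear1} give this event probability tending to one. Write $A=F_{z}/F_{y}$ and $B=V_{x}/V_{z}$, with estimates $\widehat{A}$ and $\widehat{B}$. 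Then
\[
\widehat{A}\widehat{B}-AB=B(\widehat{A}-A)+A(\widehat{B}-B)+(\widehat{A}-A)(\widehat{B}-B),
\]
and the last product is quadratic in the estimation errors.

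\textbf{First factor.} $\widehat{A}-A$ is handled exactly as $\widehat{\Sigma}_{2}-\Sigma_{2}$ was, with the roles of $x$ and $z$ swapped. The ratio identity expands $F_{z}/F_{y}$. Then $\widehat{F}_{z}-F_{z}$ and $\widehat{F}_{y}-F_{y}$ are expanded in $(\widehat{\Phi}-\Phi,\widehat{\Phi}_{y}-\Phi_{y},\widehat{\Phi}_{z}-\Phi_{z},\widehat{f}-f,\widehat{f}_{z}-f_{z})$. The coefficient of $\widehat{f}-f$ simplifies using $\Phi_{y}=fF_{y}$ and $F_{z}=\Phi_{z}/f-\Phi f_{z}/f^{2}$. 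Multiplying by $B$ gives the coefficients $D^{1}_{\Phi}$, $D^{1}_{\Phi_{y}}$, $D^{1}_{\Phi_{z}}$, $D^{1}_{f}$ and $D^{1}_{f_{z}}$.

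\textbf{Second factor.} $\widehat{B}-B$ is expanded by the same route. Write $V=\Psi/p$, so that $V_{x}=\Psi_{x}/p$ and $V_{z}=\Psi_{z}/p-\Psi p_{z}/p^{2}$. Then linearize $V_{x}/V_{z}$ in $(\widehat{\Psi}-\Psi,\widehat{\Psi}_{x}-\Psi_{x},\widehat{\Psi}_{z}-\Psi_{z},\widehat{p}-p,\widehat{p}_{z}-p_{z})$ and multiply by $A$.

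\textbf{Coefficient check.} The step needing care is matching the coefficients of $\widehat{\Psi}-\Psi$ and $\widehat{p}-p$ to the stated $D^{1}_{\Psi}$ and $D^{1}_{p}$. Differentiating $V_{z}$ in $\Psi$ gives $-p_{z}/p^{2}$, which produces $D^{1}_{\Psi}=A\,V_{x}p_{z}/(p^{2}V_{z}^{2})$. The dependence on $p$ enters through both $V_{x}$ and $V_{z}$. The $V_{x}$ contributions cancel, which I will check using $\Psi/p=V$, $V_{x}=\Psi_{x}/p$ and $V_{z}p=\Psi_{z}-Vp_{z}$. What remains should reduce to $D^{1}_{p}$ and $D^{1}_{p_{z}}$ as displayed. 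This algebra is the main obstacle, and I would verify it by computing the Gateaux derivative of $(\Psi,p)\mapsto V_{x}/V_{z}$ directly. Integrating the pointwise expansion over $u\in[0,x]$, an interval of bounded length, yields the four functionals $\nabla_{\alpha}S_{1}$.

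\textbf{Remainder.} The remainder collects the quadratic terms from each ratio expansion, the cross term $(\widehat{A}-A)(\widehat{B}-B)$, and higher products. Each is bounded in sup norm by products of the uniform rates listed in the proof of Lemma \ref{lem:linear1}, including the $\Psi$, $\Psi_{z}$, $p$ and $p_{z}$ rows. Those rows are dominated by the rows already in $R_{n}$, since $h_{z}^{-1}\le h_{x}^{-1}h_{z}^{-1}$ up to constants and $h_{z}^{-3}\le h_{x}^{-1}h_{z}^{-3}$. The coefficients $A$, $B$, $D^{1}$ are bounded on the region by Assumption \ref{assu: density}. Hence the remainder is $O_{p}(R_{n})$ uniformly.
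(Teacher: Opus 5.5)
Your proposal is correct and takes essentially the paper's route. The paper's proof is a short sketch that applies the ratio identity of Lemma \ref{lem:linear1} to each factor of $\Sigma_{1}=(V_{x}/V_{z})\cdot(F_{z}/F_{y})$ and bounds the quadratic remainder by the same uniform rates; your product expansion fills in that sketch, and your coefficient check does reproduce all ten $D^{1}$ coefficients. The one remainder row you leave implicit is $\widehat{\Psi}_{x}-\Psi_{x}$, which the paper covers through its convention $\widehat{\Psi}_{x}=\widehat{f}$, so its square is dominated by the $\log n/(nh_{y}h_{x}h_{z})$ term of $R_{n}$.
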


\begin{proof}
The conclusion can be derived in the same manner as in Lemma \ref{lem:linear1}.
The main difference is that $\widehat{S}_{1}$ involves the ratio
$\widehat{\Sigma}_{1}=\nabla_{x}\widehat{V}\cdot\nabla_{z}\widehat{F}/(\nabla_{z}\widehat{V}\cdot\nabla_{y}\widehat{F})$,
which depends on $(\widehat{\Phi},\widehat{\Psi},\widehat{f},\widehat{p})$
and their derivatives, whereas $\widehat{S}_{2}$ depends only on
$(\widehat{\Phi},\widehat{f})$. The linearization proceeds by the
same ratio identity used in Lemma \ref{lem:linear1}, applied to each
ratio in $\widehat{\Sigma}_{1}$, yielding additional terms in $\widehat{\Psi}-\Psi$
and $\widehat{p}-p$. The remainder is controlled by the same uniform
convergence rates under Assumption \ref{assu:rate}.
\end{proof}

We define the functionals $\nabla_{\alpha}S_{*}(w)$ for $\alpha\in\left\{ \Phi,\Psi,f,p\right\} $
by 
\begin{align*}
\nabla_{\Phi}S_{*}(w)\left[\Phi\right] & :=\left\{ \nabla_{\Phi}S_{1}(w)-\nabla_{\Phi}S_{2}(w)\right\} \left[\Phi\right],\\
\nabla_{\Psi}S_{*}(w)\left[\Psi\right] & :=\nabla_{\Psi}S_{1}(w)[\Psi],\\
\nabla_{f}S_{*}(w)\left[f\right] & :=\left\{ \nabla_{f}S_{1}(w)-\nabla_{f}S_{2}(w)\right\} [f],\\
\nabla_{p}S_{*}(w)\left[p\right] & :=\nabla_{p}S_{1}(w)[p].
\end{align*}
In the same notation $D^{*}_{\alpha}=D^{1}_{\alpha}-D^{2}_{\alpha}$,
with either term taken as zero at an index the corresponding functional
does not carry; in particular $D^{*}_{\Phi_{x}}=-D^{2}_{\Phi_{x}}$
and $D^{*}_{\Psi_{x}}=D^{1}_{\Psi_{x}}$. For $x<0$ the integral
$\int^{x}_{0}$ runs backwards, so the indicator of the interval between
the normalization point and $x$ carries a sign. Write
\[
I_{x}(t):=\boldsymbol{1}\{t\le x\}-\boldsymbol{1}\{t\le0\}=\begin{cases}
\boldsymbol{1}\{0<t\le x\}, & x\ge0,\\
-\boldsymbol{1}\{x<t\le0\}, & x<0,
\end{cases}
\]
which is the limit of $\int^{x}_{0}h^{-1}_{x}K((t-u)/h_{x})\mathrm{d}u$
for either sign of $x$. Also define a function $\rho_{x}$ that collects
the terms of the linearization carrying no bandwidth, by 
\begin{align}
\rho_{x}(W_{i}) & =\int\mu(y,Z_{i})\boldsymbol{1}\left\{ Y_{i}\le y\right\} I_{x}(X_{i})D^{*}_{\Phi}(y,X_{i},Z_{i})\mathrm{d}y\nonumber \\
 & +\mu(Y_{i},Z_{i})I_{x}(X_{i})D^{*}_{\Phi_{y}}(Y_{i},X_{i},Z_{i})\nonumber \\
 & -\int\mu(y,Z_{i})\boldsymbol{1}\left\{ Y_{i}\le y\right\} I_{x}(X_{i})\partial_{u}D^{*}_{\Phi_{x}}(y,X_{i},Z_{i})\mathrm{d}y\nonumber \\
 & -\int\boldsymbol{1}\left\{ Y_{i}\le y\right\} I_{x}(X_{i})\frac{\partial\mu(y,Z_{i})D^{*}_{\Phi_{z}}(y,X_{i},Z_{i})}{\partial z}\mathrm{d}y\nonumber \\
 & +\int\int^{x}_{0}\boldsymbol{1}\{X_{i}\le u\}\mu(y,Z_{i})D^{*}_{\Psi}(y,u,Z_{i})\mathrm{d}u\mathrm{d}y+\int\mu(y,Z_{i})I_{x}(X_{i})D^{*}_{\Psi_{x}}(y,X_{i},Z_{i})\mathrm{d}y\nonumber \\
 & -\int^{x}_{0}\boldsymbol{1}\{X_{i}\le u\}\int\frac{\partial\mu(y,Z_{i})D^{*}_{\Psi_{z}}(y,u,Z_{i})}{\partial z}\mathrm{d}y\mathrm{d}u\nonumber \\
 & +\int\mu(y,Z_{i})I_{x}(X_{i})D^{*}_{f}(y,X_{i},Z_{i})\mathrm{d}y\nonumber \\
 & -\int\mu(y,Z_{i})I_{x}(X_{i})\partial_{u}D^{*}_{f_{x}}(y,X_{i},Z_{i})\mathrm{d}y\nonumber \\
 & -\int I_{x}(X_{i})\left[\frac{\partial\mu(y,Z_{i})D^{*}_{f_{z}}(y,X_{i},Z_{i})}{\partial z}\right]\mathrm{d}y\nonumber \\
 & +\int^{x}_{0}\int\mu(y,Z_{i})D^{*}_{p}(y,u,Z_{i})\mathrm{d}y\mathrm{d}u-\int^{x}_{0}\int\left[\frac{\partial\mu(y,Z_{i})D^{*}_{p_{z}}(y,u,Z_{i})}{\partial z}\right]\mathrm{d}y\mathrm{d}u,\label{eq:AL}
\end{align}
where 
\begin{align*}
D^{*}_{\Phi} & =D^{1}_{\Phi}-D^{2}_{\Phi},\quad D^{*}_{\Phi_{y}}=D^{1}_{\Phi_{y}}-D^{2}_{\Phi_{y}},\\
D^{*}_{\Phi_{x}} & =-D^{2}_{\Phi_{x}},\quad D^{*}_{\Phi_{z}}=D^{1}_{\Phi_{z}};\\
D^{*}_{\Psi} & =D^{1}_{\Psi},\quad D^{*}_{\Psi_{x}}=D^{1}_{\Psi_{x}},\quad D^{*}_{\Psi_{z}}=D^{1}_{\Psi_{z}};\\
D^{*}_{f} & =D^{1}_{f}-D^{2}_{f},\quad D^{*}_{f_{x}}=-D^{2}_{f_{x}},\quad D^{*}_{f_{z}}=D^{1}_{f_{z}};\\
D^{*}_{p} & =D^{1}_{p},\quad D^{*}_{p_{z}}=D^{1}_{p_{z}}.
\end{align*}

The two blocks whose integrand is a derivative in $u$, the variable
the integral runs over, are integrated by parts in $u$ in the proof
below, and each leaves a boundary term at $u=0$ and at $u=x$. For
$a\in\widetilde{\mathcal{X}}_{0}$ write 
\begin{align}
b_{n}(a) & =\frac{1}{n}\sum^{n}_{i=1}\frac{1}{h_{x}}K\left(\frac{X_{i}-a}{h_{x}}\right)\int\int\mu(y,z)\frac{1}{h_{z}}K\left(\frac{Z_{i}-z}{h_{z}}\right)\left\{ \boldsymbol{1}\left\{ Y_{i}\le y\right\} D^{*}_{\Phi_{x}}(y,a,z)+D^{*}_{f_{x}}(y,a,z)\right\} \mathrm{d}z\mathrm{d}y,\nonumber \\
B_{n}(x) & =b_{n}(x)-b_{n}(0).\label{eq:Bn}
\end{align}
so that $B_{n}(x)$ collects the two boundary terms.
\begin{lem}
\label{lem:AL}The following holds uniformly over $\mathcal{X}_{0}$:
\begin{align*}
\widehat{g}_{\mathrm{LS}}(x)-g(x) & =\int\int\mu(y,z)\left\{ \nabla_{\Phi}S_{*}(y,x,z)[\widehat{\Phi}-\Phi]+\nabla_{\Psi}S_{*}(y,x,z)[\widehat{\Psi}-\Psi]\right.\\
 & +\left.\nabla_{f}S_{*}(y,x,z)[\widehat{f}-f]+\nabla_{p}S_{*}(y,x,z)[\widehat{p}-p]\right\} \mathrm{d}y\mathrm{d}z+O_{p}(R_{n})\\
 & =\frac{1}{n}\sum^{n}_{i=1}\rho_{x}(W_{i})+B_{n}(x)+O_{p}(R_{n}+\bar{h}^{m}+\sqrt{h_{x}/n}).
\end{align*}
\end{lem}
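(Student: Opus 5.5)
The first equality follows from Lemmas \ref{lem:linear1} and \ref{lem:linear2}. Subtract the two expansions, using $\widehat{S}=\widehat{S}_{1}-\widehat{S}_{2}$ and the definitions of $\nabla_{\alpha}S_{*}$. Then integrate against $\mu(y,z)$ over the compact set $\mathcal{Y}_{\mu}\times\mathcal{Z}_{\mu}$. By Remark \ref{rmk:mu}, $\int\int\mu S=g(x)$, so the left side is $\widehat{g}_{\mathrm{LS}}(x)-g(x)$. The $O_{p}(R_{n})$ remainders are uniform over $\mathcal{Y}_{\mu}\times\widetilde{\mathcal{X}}_{0}\times\mathcal{Z}_{\mu}$, and $\mu$ is bounded with $\int\int\mu=1$, so integrating keeps them $O_{p}(R_{n})$ uniformly in $x\in\mathcal{X}_{0}$. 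This step is routine.

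For the second equality I would treat each of the twelve linear terms separately. Each has the form $\int\int\mu(y,z)\int_{0}^{x}D^{*}_{\alpha}(y,u,z)\,\{\widehat{\alpha}-\alpha\}(y,u,z)\,\mathrm{d}u\,\mathrm{d}y\,\mathrm{d}z$. I would split each error $\widehat{\alpha}-\alpha$ into the bias $\mathbb{E}\widehat{\alpha}-\alpha$ and the centred part $\widehat{\alpha}-\mathbb{E}\widehat{\alpha}$.

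The bias parts are controlled as follows. After Taylor expansion with the order-$m$ kernel, and using Assumption \ref{assu: density}(i) to differentiate under the integral, each is $O(\bar{h}^{m})$ uniformly. Derivative coordinates are handled by moving the derivative onto the smooth coefficients before expanding, so that no inverse power of $h$ appears. The remote-tail contribution is $O(h_{x}^{m})$ by $\nu>m+1$.

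For the centred parts I would substitute the kernel sums explicitly and rewrite each term as $n^{-1}\sum_{i}$ of a function of $W_{i}$, after interchanging the sum with the integrals. The treatment depends on which variable the derivative acts in.

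In the $y$ direction, $\widehat{\Phi}_{y}$ contains $h_{y}^{-1}K((Y_{i}-y)/h_{y})$. Integrating it against $\mu D^{*}_{\Phi_{y}}$ in $y$ gives $\mu(Y_{i},Z_{i})D^{*}_{\Phi_{y}}(Y_{i},\cdot)+O(h_{y}^{m})$.

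In the $z$ direction, the derivative falls on $K((Z_{i}-z)/h_{z})$. Integrating by parts in $z$ moves it onto $\mu D^{*}$. The boundary terms vanish because $\mu$ is smooth and compactly supported in $\mathcal{Z}^{\circ}$. This produces the $-\partial_{z}(\mu D^{*})$ terms of (\ref{eq:AL}).

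In the $u$ direction, for the undifferentiated terms in $x$, $\int_{0}^{x}h_{x}^{-1}K((X_{i}-u)/h_{x})D(u)\,\mathrm{d}u=I_{x}(X_{i})D(X_{i})+r_{i}$. The remaining $z$-kernel is likewise replaced by evaluation at $Z_{i}$.

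The $\Phi_{x}$ and $f_{x}$ blocks carry $\partial_{u}$ of the kernel in $u$, and I would integrate them by parts in $u$ over $[0,x]$. The interior piece becomes $-\int_{0}^{x}K_{h}\,\partial_{u}D^{*}$, which gives the $\partial_{u}D^{*}$ terms of $\rho_{x}$. The endpoint pieces are exactly the kernel averages at $u=x$ and $u=0$ that define $b_{n}(x)$ and $b_{n}(0)$, so $B_{n}(x)$ appears. The sign comes from $\partial_{u}K((X_{i}-u)/h_{x})=-h_{x}^{-1}K'$, and I would check it against (\ref{eq:Bn}). The $\Psi$ and $p$ blocks need no smoothing in $u$ because they involve $\boldsymbol{1}\{X_{i}\le u\}$; only the $z$-kernel is integrated out there.

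Finally I would recentre. The replacement errors $r_{i}$ come from swapping a smoothed indicator or point evaluation for its limit. They are nonzero only when $X_{i}$ lies within a few bandwidths of $0$ or $x$, or through the kernel tails. I would show their centred average is $O_{p}(\sqrt{h_{x}/n})$ uniformly: the variance is $O(h_{x}/n)$ per point, and a VC argument as in Lemma \ref{lem:linear1} handles the supremum over $x$. Their means are $O(\bar{h}^{m})$. The means of $\rho_{x}$ are absorbed similarly, since $\mathbb{E}$ of each linear term vanishes apart from bias.

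I expect the main obstacle to be this uniform $O_{p}(\sqrt{h_{x}/n})$ bound on the smoothed-indicator remainders near the endpoints. It needs a variance computation showing that the difference between the smoothed indicator and $I_{x}$ is supported, up to tails controlled by $\nu$, on a set of probability $O(h_{x})$, followed by a maximal inequality over $x\in\mathcal{X}_{0}$ whose logarithmic loss must still be absorbed into the stated order. If the logarithm cannot be avoided, I would settle for $O_{p}(\sqrt{h_{x}\log n/n})$, which is still $o_{p}((nh_{x})^{-1/2})$ and suffices downstream.
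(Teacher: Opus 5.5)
The first equality, and the block-by-block mechanics of the second, agree with the paper: integration by parts in $z$ with vanishing boundary terms, integration by parts in $u$ for the $\Phi_{x}$ and $f_{x}$ blocks producing $B_{n}(x)$, and the smoothed-indicator remainder. The gap is in the recentring. Write $\mathcal{L}^{\alpha}_{x}[\kappa]=\int\int\mu\int^{x}_{0}D^{*}_{\alpha}\kappa\,\mathrm{d}u\,\mathrm{d}y\,\mathrm{d}z$ for the block of coordinate $\alpha$, and $\mathcal{L}_{x}=\sum_{\alpha}\mathcal{L}^{\alpha}_{x}$. Since $\widehat{\theta}=P_{n}k$, whatever centring you use, your computation amounts to
\[
\mathcal{L}_{x}[\widehat{\theta}-\theta]=\frac{1}{n}\sum^{n}_{i=1}\mathcal{L}_{x}[k(W_{i})]-\mathcal{L}_{x}[\theta]=\frac{1}{n}\sum^{n}_{i=1}\rho_{x}(W_{i})+B_{n}(x)+\frac{1}{n}\sum^{n}_{i=1}r_{i}-\mathcal{L}_{x}[\theta].
\]
So the lemma needs the deterministic term $\mathcal{L}_{x}[\theta]$ to vanish. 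Your reason, that ``$\mathbb{E}$ of each linear term vanishes apart from bias'', does not give this. The mean of $\mathcal{L}^{\alpha}_{x}[\widehat{\alpha}-\alpha]$ is indeed a bias. But the mean of the $\alpha$-th piece of $\rho_{x}$, together with its boundary piece, is $\mathcal{L}^{\alpha}_{x}[\alpha]$ up to bias, and that is not zero block by block. For instance, the first term of $\rho_{x}$ has mean exactly $\mathcal{L}^{\Phi}_{x}[\Phi]=\int\int\mu\int^{x}_{0}\frac{F}{F_{y}f}\big(f_{x}-\frac{V_{x}}{V_{z}}f_{z}\big)\mathrm{d}u\,\mathrm{d}y\,\mathrm{d}z$, which is nonzero in general. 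Only the sum over the twelve blocks cancels.

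The missing idea is (\ref{eq:zero_equality}). $\Sigma_{1}$ and $\Sigma_{2}$ are homogeneous of degree zero in $(\Phi,\Phi_{x},\Phi_{y},\Phi_{z},f,f_{x},f_{z})$, and $\Sigma_{1}$ is also homogeneous of degree zero in $(\Psi,\Psi_{x},\Psi_{z},p,p_{z})$. This is because scaling either group leaves $F=\Phi/f$, $V=\Psi/p$ and their derivatives unchanged. Euler's relation then gives $\mathcal{L}_{x}[\theta]=0$.

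The paper uses this at the outset to replace $\widehat{\alpha}-\alpha$ by $\widehat{\alpha}$ in every block. Then $\mathcal{L}_{x}[\widehat{\theta}-\theta]=n^{-1}\sum_{i}\mathcal{L}_{x}[k(W_{i})]$ is an uncentred average, and this lemma needs no statement about the means of $\rho_{x}$ or of $B_{n}(x)$. Your plan does not address the mean of $B_{n}(x)$ either. With the identity added, your centred route also closes: the means you add back sum to $\mathcal{L}_{x}[P_{0}k]=\mathcal{L}_{x}[P_{0}k-\theta]=O(\bar{h}^{m})$. Your bias/centred split then becomes a detour rather than an error.
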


\begin{proof}
Because $S(y,x,z)=g(x)$ and $\mu$ integrates to one, $\widehat{g}_{\mathrm{LS}}(x)-g(x)=\int\int\mu(y,z)\{\widehat{S}(y,x,z)-S(y,x,z)\}\mathrm{d}y\mathrm{d}z$.
Lemmas \ref{lem:linear1} and \ref{lem:linear2} linearize the integrand,
which gives the first equality of the statement. 

For the second equality, each of $\Sigma_{1}$ and $\Sigma_{2}$ is
homogeneous of degree zero in $(\Phi,\Phi_{x},\Phi_{y},\Phi_{z},f,f_{x},f_{z})$,
and $\Sigma_{1}$ is homogeneous of degree zero in $(\Psi,\Psi_{x},\Psi_{z},p,p_{z})$
as well: scaling either group by a common factor leaves the ratios
$F=\Phi/f$ and $V=\Psi/p$ and all of their derivatives unchanged.
Euler's relation for a function homogeneous of degree zero therefore
gives
\begin{equation}
\nabla_{\Phi}S_{*}(y,x,z)[\Phi]+\nabla_{\Psi}S_{*}(y,x,z)[\Psi]+\nabla_{f}S_{*}(y,x,z)[f]+\nabla_{p}S_{*}(y,x,z)[p]=0.\label{eq:zero_equality}
\end{equation}
By (\ref{eq:zero_equality}) the four functionals annihilate the population
quantities, so the sum in the first equality is unchanged when $\widehat{\Phi}-\Phi,\widehat{\Psi}-\Psi,\widehat{f}-f,\widehat{p}-p$
are replaced by $\widehat{\Phi},\widehat{\Psi},\widehat{f},\widehat{p}$.
Let 
\begin{align*}
\nabla_{\Phi}S_{*}(y,x,z)[\widehat{\Phi}] & =\nabla_{\Phi}S_{1}(y,x,z)[\widehat{\Phi}]-\nabla_{\Phi}S_{2}(y,x,z)[\widehat{\Phi}]\\
 & =\int^{x}_{0}D^{*}_{\Phi}(y,u,z)\widehat{\Phi}(y,u,z)\mathrm{d}u+\int^{x}_{0}D^{*}_{\Phi_{y}}(y,u,z)\widehat{\Phi}_{y}(y,u,z)\mathrm{d}u\\
 & +\int^{x}_{0}D^{*}_{\Phi_{x}}(y,u,z)\widehat{\Phi}_{x}(y,u,z)\mathrm{d}u+\int^{x}_{0}D^{*}_{\Phi_{z}}(y,u,z)\widehat{\Phi}_{z}(y,u,z)\mathrm{d}u\\
 & =:A_{\widehat{\Phi}}(y,x,z)+A_{\widehat{\Phi}_{y}}(y,x,z)+A_{\widehat{\Phi}_{x}}(y,x,z)+A_{\widehat{\Phi}_{z}}(y,x,z),
\end{align*}
where $D^{*}_{\Phi}=D^{1}_{\Phi}-D^{2}_{\Phi}$, $D^{*}_{\Phi_{y}}=D^{1}_{\Phi_{y}}-D^{2}_{\Phi_{y}}$,
$D^{*}_{\Phi_{x}}=-D^{2}_{\Phi_{x}}$, and $D^{*}_{\Phi_{z}}=D^{1}_{\Phi_{z}}$.
The functionals in $\Psi$, $f$ and $p$ split in the same way, into
$A_{\widehat{\Psi}},A_{\widehat{\Psi}_{x}},A_{\widehat{\Psi}_{z}}$,
into $A_{\widehat{f}},A_{\widehat{f}_{x}},A_{\widehat{f}_{z}}$, and
into $A_{\widehat{p}},A_{\widehat{p}_{z}}$. The linearization therefore
carries one block $A_{\widehat{\alpha}}$ for each of the twelve coordinates
\[
\alpha\in\{\Phi,\Phi_{x},\Phi_{y},\Phi_{z},\Psi,\Psi_{x},\Psi_{z},f,f_{x},f_{z},p,p_{z}\},
\]
and the counts of twelve below refer to these. First, we claim that
\begin{align}
 & \int\int\mu(y,z)A_{\widehat{\Phi}}(y,x,z)\mathrm{d}y\mathrm{d}z\nonumber \\
= & \frac{1}{n}\sum^{n}_{i=1}\int\mu(y,Z_{i})\boldsymbol{1}\left\{ Y_{i}\le y\right\} I_{x}(X_{i})D^{*}_{\Phi}(y,X_{i},Z_{i})\mathrm{d}y+o_{p}(n^{-1/2}).\label{eq:first_term}
\end{align}
Note that the first term of (\ref{eq:first_term}) corresponds to
the first term of $\rho_{x}(W_{i})$. To see this equation, note that
by the definitions of $A_{\widehat{\Phi}}$ and $\widehat{\Phi}$,
we have 
\begin{align*}
 & \int\int\mu(y,z)A_{\widehat{\Phi}}(y,x,z)\mathrm{d}y\mathrm{d}z\\
= & \frac{1}{n}\sum^{n}_{i=1}\int\boldsymbol{1}\left\{ Y_{i}\le y\right\} \int^{x}_{0}\frac{1}{h_{x}}K\left(\frac{X_{i}-u}{h_{x}}\right)\left[\int\mu(y,z)D^{*}_{\Phi}(y,u,z)\frac{1}{h_{z}}K\left(\frac{Z_{i}-z}{h_{z}}\right)\mathrm{d}z\right]\mathrm{d}u\mathrm{d}y.
\end{align*}
By the usual change-of-variable argument, 
\[
\int\mu(y,z)D^{*}_{\Phi}(y,u,z)\frac{1}{h_{z}}K\left(\frac{Z_{i}-z}{h_{z}}\right)\mathrm{d}z=\mu(y,Z_{i})D^{*}_{\Phi}(y,u,Z_{i})+O_{p}(h^{m}_{z}).
\]
Inserting this expression into the previous display yields 
\begin{align}
 & \int\int\mu(y,z)A_{\widehat{\Phi}}(y,x,z)\mathrm{d}y\mathrm{d}z\nonumber \\
= & \frac{1}{n}\sum^{n}_{i=1}\int\mu(y,Z_{i})\boldsymbol{1}\left\{ Y_{i}\le y\right\} \left[\int^{x}_{0}\frac{1}{h_{x}}K\left(\frac{X_{i}-u}{h_{x}}\right)D^{*}_{\Phi}(y,u,Z_{i})\mathrm{d}u\right]\mathrm{d}y+O_{p}(h^{m}_{z}).\label{eq:1}
\end{align}
By applying change-of-variable argument again, we obtain 
\begin{align*}
\mathbb{E}_{X|YZ}\left[\left.\int^{x}_{0}\frac{1}{h_{x}}K\left(\frac{X_{i}-u}{h_{x}}\right)D^{*}_{\Phi}(y,u,Z_{i})\mathrm{d}u\right|Y_{i},Z_{i}\right]=\int^{x}_{0}D^{*}_{\Phi}(y,u,Z_{i})f_{X|YZ}(u|Y_{i},Z_{i})\mathrm{d}u+O(h^{m}_{x}).
\end{align*}
From this expression and the law of iterated expectations, 
\begin{align}
 & \mathbb{E}\left[\frac{1}{n}\sum^{n}_{i=1}\int\mu(y,Z_{i})\boldsymbol{1}\left\{ Y_{i}\le y\right\} \left[\int^{x}_{0}\frac{1}{h_{x}}K\left(\frac{X_{i}-u}{h_{x}}\right)D^{*}_{\Phi}(y,u,Z_{i})\mathrm{d}u\right]\mathrm{d}y\right]\nonumber \\
= & \mathbb{E}_{YZ}\left[\int\mu(y,Z_{i})\boldsymbol{1}\left\{ Y_{i}\le y\right\} \left[\int^{x}_{0}D^{*}_{\Phi}(y,u,Z_{i})f_{X|YZ}(u|Y_{i},Z_{i})\mathrm{d}u\right]\mathrm{d}y\right]+O(h^{m}_{x}).\label{eq:2}
\end{align}
Note that by (\ref{eq:1}) 
\begin{align*}
 & \left|\int\int\mu(y,z)A_{\widehat{\Phi}}(y,x,z)\mathrm{d}y\mathrm{d}z-\frac{1}{n}\sum^{n}_{i=1}\int\mu(y,Z_{i})\boldsymbol{1}\left\{ Y_{i}\le y\right\} I_{x}(X_{i})D^{*}_{\Phi}(y,X_{i},Z_{i})\mathrm{d}y\right|\\
\le & \left|\frac{1}{n}\sum^{n}_{i=1}\delta_{x}(W_{i})\right|+O_{p}(h^{m}_{z}),
\end{align*}
where 
\begin{align*}
\delta_{x}(W_{i}):= & \int\mu(y,Z_{i})\boldsymbol{1}\left\{ Y_{i}\le y\right\} \left\{ \int^{x}_{0}\frac{1}{h_{x}}K\left(\frac{X_{i}-u}{h_{x}}\right)D^{*}_{\Phi}(y,u,Z_{i})\mathrm{d}u\right.\\
 & \left.-I_{x}(X_{i})D^{*}_{\Phi}(y,X_{i},Z_{i})\right\} \mathrm{d}y,
\end{align*}
the two terms inside the braces differing only in that the indicator
$I_{x}(X_{i})$ has been smoothed. By (\ref{eq:2}) and the law of
iterated expectations, $\mathbb{E}\left[\delta_{x}(W_{i})\right]=O(h^{m}_{x})$,
the contributions from the two endpoints cancelling, and this is $o((nh_{x})^{-1/2})$
by the undersmoothing condition of Theorem \ref{thm:pointwise}, Assumption
\ref{assu:rate} imposing nothing on $h^{m}_{x}$. It remains to bound
the centered average. The two terms inside the braces agree except
where the smoothing of the indicator matters, so $\delta_{x}(W_{i})$
is of order one only when $X_{i}$ lies within $O(h_{x})$ of $0$
or of $x$, the tail bound of Assumption \ref{assu: kernel}(i) controlling
it at larger distances. That event has probability $O(h_{x})$, because
$f$ is bounded under Assumption \ref{assu: density}, so $\lVert\delta_{x}\rVert^{2}_{L_{2}(P_{0})}=O(h_{x})$.
The centered average $n^{-1}\sum^{n}_{i=1}\{\delta_{x}(W_{i})-\mathbb{E}\delta_{x}(W_{i})\}$
has variance at most $\lVert\delta_{x}\rVert^{2}_{L_{2}(P_{0})}/n$.
It is therefore $O_{p}(\sqrt{h_{x}/n})$, which is $o_{p}(n^{-1/2})$.
The same computation holds uniformly in $x$. Thus we obtain (\ref{eq:first_term}).
Note that $O_{p}(\max\{h^{m}_{y},h^{m}_{z}\})=o_{p}((nh_{x})^{-1/2})$
by Assumption \ref{assu:rate}, and $O_{p}(h^{m}_{x})=o_{p}((nh_{x})^{-1/2})$
by the undersmoothing condition of Theorem \ref{thm:pointwise}.

The same argument applies to the other blocks. Ten of the twelve pass
to a sample average. Specifically, the six in $\{\Phi,\Phi_{y},\Psi,\Psi_{x},f,p\}$
become asymptotically linear by the change of variable just used,
as $A_{\widehat{\Phi}}$ did. The four carrying a derivative in $z$,
$\{\Phi_{z},\Psi_{z},f_{z},p_{z}\}$, become asymptotically linear
after an integration by parts in $z$, whose boundary terms vanish
because $\mu$ has compact support under Assumption \ref{assu: density}(ii).
These ten terms are ten of the twelve terms of $\rho_{x}$ in (\ref{eq:AL}). 

The last two, $A_{\widehat{\Phi}_{x}}$ and $A_{\widehat{f}_{x}}$,
are differentiated in $u$, the variable the integral runs over, and
they are not asymptotically linear terms. Write $\widehat{\Phi}_{x}(y,u,z)=\partial_{u}\widehat{\Phi}(y,u,z)$
and integrate by parts in $u$:
\begin{align*}
\int^{x}_{0}D^{*}_{\Phi_{x}}(y,u,z)\widehat{\Phi}_{x}(y,u,z)\mathrm{d}u & =\frac{1}{n}\sum^{n}_{i=1}\int\boldsymbol{1}\left\{ Y_{i}\le y\right\} \frac{1}{h_{z}}K\left(\frac{Z_{i}-z}{h_{z}}\right)\\
 & \times\left\{ \left[\frac{1}{h_{x}}K\left(\frac{X_{i}-u}{h_{x}}\right)D^{*}_{\Phi_{x}}(y,u,z)\right]^{x}_{0}-\int^{x}_{0}\frac{1}{h_{x}}K\left(\frac{X_{i}-u}{h_{x}}\right)\partial_{u}D^{*}_{\Phi_{x}}(y,u,z)\mathrm{d}u\right\} .
\end{align*}
In the second term the kernel is inside the integral, so the argument
that produced (\ref{eq:first_term}) applies to it unchanged and returns
$-I_{x}(X_{i})\partial_{u}D^{*}_{\Phi_{x}}(y,X_{i},Z_{i})$ up to
$o_{p}((nh_{x})^{-1/2})$. The second term is one of the two remaining
terms of $\rho_{x}$. The first term is a boundary term. The same
integration by parts applies to $A_{\widehat{f}_{x}}$, with $D^{*}_{f_{x}}$
in place of $D^{*}_{\Phi_{x}}$ and without the indicator. The boundary
terms of $A_{\widehat{\Phi}_{x}}$ and $A_{\widehat{f}_{x}}$ together
are $B_{n}(x)$ of (\ref{eq:Bn}). 
\end{proof}

\begin{lem}
\label{lem:donsker} The class $\left\{ \rho_{x}:x\in\mathcal{X}_{0}\right\} $
is $P_{0}$-Donsker.
\end{lem}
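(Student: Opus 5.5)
The plan is to show that the class is a Lipschitz-in-parameter class, or failing that a finite sum of VC-type classes with square-integrable envelopes, and then to invoke the standard Donsker theorem for such classes (for instance the uniform entropy theorem, Theorem 2.5.2 of van der Vaart and Wellner). I would first split $\rho_{x}$ into its twelve summands in (\ref{eq:AL}). A finite sum of Donsker classes is Donsker, so it suffices to treat each summand separately, indexed by $x\in\mathcal{X}_{0}$. Every summand has one of two forms. In the first, a fixed function of $W_{i}$ is multiplied by $I_{x}(X_{i})$; these are the summands in $D^{*}_{\Phi}$, $D^{*}_{\Phi_{y}}$, $\partial_{u}D^{*}_{\Phi_{x}}$, $D^{*}_{\Phi_{z}}$, $D^{*}_{\Psi_{x}}$, $D^{*}_{f}$, $\partial_{u}D^{*}_{f_{x}}$ and $D^{*}_{f_{z}}$. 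In the second, the summand is an integral $\int_{0}^{x}G(W_{i},u)\,\mathrm{d}u$ with $G$ fixed; these are the summands in $D^{*}_{\Psi}$, $D^{*}_{\Psi_{z}}$, $D^{*}_{p}$ and $D^{*}_{p_{z}}$.

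For the first form, write $I_{x}(t)=\boldsymbol{1}\{t\le x\}-\boldsymbol{1}\{t\le0\}$. The family $\{\boldsymbol{1}\{t\le x\}:x\in\mathbb{R}\}$ is VC, with index 2, and $\boldsymbol{1}\{t\le0\}$ is a single function. The summand is therefore $G_{k}(W_{i})\{\boldsymbol{1}\{X_{i}\le x\}-\boldsymbol{1}\{X_{i}\le0\}\}$, where $G_{k}$ does not depend on $x$. Multiplying a VC class by one fixed function keeps it VC type, with envelope $|G_{k}|$. Donskerness then needs only $\mathbb{E}G_{k}(W)^{2}<\infty$.

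I expect that square integrability to be the main obstacle, because the $G_{k}$ are evaluated at the random $X_{i}$, which can leave $\widetilde{\mathcal{X}}_{0}$. The indicator saves this: $I_{x}(X_{i})\neq0$ only when $X_{i}$ lies between $0$ and $x$, hence in $\widetilde{\mathcal{X}}_{0}$. So I would replace $G_{k}$ by $G_{k}\boldsymbol{1}\{X_{i}\in\widetilde{\mathcal{X}}_{0}\}$, which leaves the class unchanged. On that set, $\mu(\cdot,Z_{i})$ restricts $y$ to the compact set $\mathcal{Y}_{\mu}$ and $Z_{i}$ to $\mathcal{Z}_{\mu}$. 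Assumption \ref{assu: density}(iii) then bounds every denominator ($f$, $F_{y}$, $V_{z}$, $p$) away from zero. The numerators, including $\partial_{u}D^{*}$ and $\partial_{z}(\mu D^{*})$, are continuous on that compact neighbourhood by Assumptions \ref{assu: density}(i)--(iii). This needs $\partial_{u}$ of $F_{x}$ and $f_{x}$ to exist, which the $m\ge2$ bounded derivatives of $f_{Y,X,Z}$ supply. Integrating over the compact $\mathcal{Y}_{\mu}$ therefore gives $|G_{k}|\le C$, and the envelope is bounded.

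For the second form, $f_{x}(w):=\int_{0}^{x}G(w,u)\,\mathrm{d}u$ with $u$ ranging only over $\widetilde{\mathcal{X}}_{0}$. The same compactness argument gives $\sup_{u\in\widetilde{\mathcal{X}}_{0}}|G(w,u)|\le C$. The factor $\boldsymbol{1}\{X_{i}\le u\}$ is bounded by one, and $\mu$ confines $Z_{i}$ to $\mathcal{Z}_{\mu}$. Hence $|f_{x}(w)-f_{x'}(w)|\le C|x-x'|$ for all $w$. The class is Lipschitz in a one-dimensional compact parameter with a constant envelope, so its bracketing numbers are $O(1/\varepsilon)$ (van der Vaart and Wellner, Theorem 2.7.11), and it is Donsker. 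Combining the twelve pieces proves the lemma.
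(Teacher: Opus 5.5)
Your proposal is correct and follows essentially the same route as the paper's proof. You split the twelve summands of $\rho_{x}$ in (\ref{eq:AL}) the same way: the terms carrying $I_{x}(X_{i})$ are a VC class times a fixed function, with the factor $\boldsymbol{1}\{X_{i}\in\widetilde{\mathcal{X}}_{0}\}$ kept so that the envelope is bounded, which is exactly the point the paper stresses. The integral-in-$u$ terms are Lipschitz in $x$, with bracketing numbers bounded through Theorem 2.7.11 of van der Vaart and Wellner, and closure of Donsker classes under finite sums finishes the argument.
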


\begin{proof}
Each term of (\ref{eq:AL}) takes one of the following forms:
\begin{enumerate}
\item[(i)] $\vartheta(w)\boldsymbol{1}\{a(w)\le x\}$, where $\vartheta$ does
not depend on $x$ and $a(w)=X_{i}$;
\item[(ii)] $\int^{x}_{0}\vartheta(u,w)\mathrm{d}u$ or $\int^{x}_{0}\boldsymbol{1}\{X_{i}\le u\}\vartheta(u,w)\mathrm{d}u$,
where $\vartheta$ is bounded.
\end{enumerate}
Type (i) arises from the terms involving $I_{x}(X_{i})D^{*}_{\alpha}(y,X_{i},Z_{i})$,
where $D^{*}_{\alpha}$ is evaluated at $X_{i}$ rather than $x$.
Type (ii) arises from the integrals in $u$ left over after the linearization,
with or without the indicator $\boldsymbol{1}\{X_{i}\le u\}$. The
boundary terms of Lemma \ref{lem:boundary} are not members of $\left\{ \rho_{x}\right\} $:
they carry a kernel, and Lemma \ref{lem:vc} shows that their class
is not Donsker.

We now verify the Donsker property of each type. For type (i), the
class $\{w\mapsto\vartheta(w)\boldsymbol{1}\{X_{i}\le x\}:x\in\mathcal{X}_{0}\}$
is a VC-subgraph class with VC-index at most 3, hence Donsker provided
$\mathbb{E}\left[\vartheta(W)^{2}\right]<\infty$ \citep[Example 2.6.23]{vdvW1996}.
A term $\vartheta(w)I_{x}(X_{i})$ is of this type up to the single
function $\vartheta(w)\boldsymbol{1}\{X_{i}\le0\}$, which does not
depend on $x$; subtracting one square-integrable function from every
member of a Donsker class leaves it Donsker. The splitting has to
keep the factor $\boldsymbol{1}\{X_{i}\in\widetilde{\mathcal{X}}_{0}\}$,
which $I_{x}(X_{i})$ carries and $\boldsymbol{1}\{X_{i}\le x\}$
alone does not: off $\widetilde{\mathcal{X}}_{0}$ the coefficients
are divided by quantities Assumption \ref{assu: density} (iii) bounds
away from zero only on it, and $\nabla_{z}V$ vanishes at the edge
of the support of $X$. Multiplying by a fixed function preserves
the VC-subgraph property, and the envelope is then bounded. For type
(ii), either map has an $x$-derivative bounded in absolute value
by $\left\lVert \vartheta(\cdot,w)\right\rVert _{\infty}$ and is
therefore Lipschitz with that constant, which is bounded under Assumption
\ref{assu: density}. By \citet[Theorem 2.7.11]{vdvW1996}, the bracketing
numbers of this class are bounded by the covering numbers of the compact
set $\mathcal{X}_{0}$, which satisfy $N(\varepsilon,\mathcal{X}_{0})\le C/\varepsilon$.
Hence the bracketing integral is finite and the class is Donsker.

The two classes above have a common square-integrable envelope, and
a finite sum of Donsker classes with such an envelope is Donsker \citep[Theorem 2.10.6 and Example 2.10.7]{vdvW1996}.
\end{proof}

\begin{lem}
\label{lem:boundary} For $a\in\mathcal{X}$ put
\begin{align}
c_{a}(W_{i}) & =\int\frac{\mu(y,Z_{i})\left[F(y|a,Z_{i})-\boldsymbol{1}\left\{ Y_{i}\le y\right\} \right]}{F_{y}(y|a,Z_{i})f(a,Z_{i})}\mathrm{d}y,\nonumber \\
T_{n}(a) & =\frac{1}{nh_{x}}\sum^{n}_{i=1}K\left(\frac{X_{i}-a}{h_{x}}\right)c_{a}(W_{i}).\label{eq:Tn}
\end{align}
Then, uniformly over $\mathcal{X}_{0}$: (i) $B_{n}(x)=T_{n}(x)-T_{n}(0)+O_{p}(h^{m}_{z})$;
(ii) 
\[
\widehat{g}_{\mathrm{LS}}(x)-g(x)=T_{n}(x)-T_{n}(0)+O_{p}(n^{-1/2})+O_{p}(R_{n}+\bar{h}^{m}),
\]
(iii) $\mathbb{E}\left[c_{a}(W_{i})|X_{i}=a,Z_{i}\right]=0$ and $\mathbb{E}\left[T_{n}(a)\right]=O(h^{m}_{x})$;
and (iv) 
\[
nh_{x}\mathrm{Var}\left(T_{n}(x)-T_{n}(0)\right)\rightarrow\sigma^{2}(x)+\sigma^{2}(0),\quad\text{where }\sigma^{2}(a)=\left[\int K(v)^{2}\mathrm{d}v\right]f_{X}(a)\mathbb{E}\left[c_{a}(W_{i})^{2}|X_{i}=a\right].
\]
\end{lem}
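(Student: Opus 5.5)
The plan is to prove the four parts in the order (i), (iii), (iv), (ii). Part (ii) needs the mean-zero property from (iii) to centre the sample average from Lemma \ref{lem:AL}.

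For (i), start from the definition (\ref{eq:Bn}) of $b_{n}(a)$ and handle the inner $z$-integral by the change of variable $z=Z_{i}-h_{z}t$, as in the proof of (\ref{eq:first_term}). Since $\mu$ is $m$-times differentiable with compact support and $K$ is of order $m$, this gives $\mu(y,Z_{i})D^{*}_{\alpha}(y,a,Z_{i})+O(h^{m}_{z})$. The remainder is uniform in $(y,a)$ over $\mathcal{Y}_{\mu}\times\widetilde{\mathcal{X}}_{0}$ by Assumption \ref{assu: density}(iii). Next substitute $D^{*}_{\Phi_{x}}=-D^{2}_{\Phi_{x}}=-1/(F_{y}f)$ and $D^{*}_{f_{x}}=-D^{2}_{f_{x}}=\Phi/(F_{y}f^{2})=F/(F_{y}f)$. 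The $y$-integrand then becomes $\mu(y,Z_{i})[F(y|a,Z_{i})-\boldsymbol{1}\{Y_{i}\le y\}]/(F_{y}f)$, so $b_{n}(a)=T_{n}(a)+O(h^{m}_{z})\cdot(nh_{x})^{-1}\sum_{i}|K((X_{i}-a)/h_{x})|$. The last average is $O_{p}(1)$ uniformly in $a$, by the uniform rates for kernel averages already used in Lemma \ref{lem:linear1}, applied with $|K|$ in place of $K$.

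For (iii), the first claim is immediate: $\mathbb{E}[\boldsymbol{1}\{Y_{i}\le y\}|X_{i}=a,Z_{i}]=F(y|a,Z_{i})$, and Fubini lets us move this expectation inside the $y$-integral, since $\mu$ is bounded with compact support. For the second claim, put $m_{a}(t):=\mathbb{E}[c_{a}(W_{i})|X_{i}=t]f_{X}(t)$. Then $\mathbb{E}T_{n}(a)=\int K(v)m_{a}(a+h_{x}v)\mathrm{d}v$ with $m_{a}(a)=0$. Since $m_{a}$ inherits $m$ derivatives from $f_{Y,X,Z}$ via Assumption \ref{assu: density}(i), a Taylor expansion to order $m$ and the moment conditions on $K$ give $O(h^{m}_{x})$. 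The tail exponent $\nu>m+1$ handles $v$ outside the region where the expansion is valid.

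For (iv), compute $nh_{x}\mathrm{Var}\,T_{n}(a)=h^{-1}_{x}\mathbb{E}[K((X-a)/h_{x})^{2}c_{a}^{2}]-h_{x}(\mathbb{E}T_{n}(a))^{2}$. A change of variable shows this converges to $\int K^{2}\cdot f_{X}(a)\mathbb{E}[c_{a}^{2}|X=a]$. Continuity of $t\mapsto\mathbb{E}[c_{a}^{2}|X=t]$ near $a$ holds because $c_{a}$ is bounded, its denominators being bounded below. The cross term is $h^{-1}_{x}\mathbb{E}[K((X-x)/h_{x})K(X/h_{x})c_{x}c_{0}]$. Using the bound $|K(v)|\le C|v|^{-\nu}$, this is bounded by a constant times $\int|K(v)K(v+x/h_{x})|\mathrm{d}v=O((h_{x}/|x|)^{\nu})$. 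That tends to zero uniformly because $\mathcal{X}_{0}$ is bounded away from $0$.

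For (ii), Lemma \ref{lem:AL} and part (i) give
\[
\widehat{g}_{\mathrm{LS}}(x)-g(x)=n^{-1}\textstyle\sum_{i}\rho_{x}(W_{i})+T_{n}(x)-T_{n}(0)+O_{p}(R_{n}+\bar{h}^{m}+\sqrt{h_{x}/n}).
\]
By Lemma \ref{lem:donsker}, $n^{-1}\sum_{i}\{\rho_{x}(W_{i})-\mathbb{E}\rho_{x}\}=O_{p}(n^{-1/2})$ uniformly over $\mathcal{X}_{0}$. It then remains to show $\sup_{x}|\mathbb{E}\rho_{x}(W)|=O(\bar{h}^{m})$, and I expect this to be the main obstacle. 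The route I would try first takes expectations in the first equality of Lemma \ref{lem:AL}. There each estimator has mean equal to its population value up to $O(\bar{h}^{m})$, and the functionals are linear with bounded coefficients, so the expected linear term is $O(\bar{h}^{m})$. Because the second equality of that lemma is an exact rearrangement up to the stated remainders, $\mathbb{E}\rho_{x}+\mathbb{E}B_{n}(x)=O(\bar{h}^{m})$. Parts (i) and (iii) give $\mathbb{E}B_{n}(x)=O(h^{m}_{x}+h^{m}_{z})$, hence $\mathbb{E}\rho_{x}=O(\bar{h}^{m})$. The delicate point is that the remainders in Lemma \ref{lem:AL} are $O_{p}$ rather than in mean. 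So I would instead compute $\mathbb{E}\rho_{x}$ directly, term by term, as the functional (\ref{eq:zero_equality}) applied to the population quantities, which vanishes. In that computation, each smoothed indicator is replaced by $I_{x}$, and the two boundary contributions at $u=0,x$, whose means are $\mathbb{E}c_{a}=0$ by (iii), are reinserted.
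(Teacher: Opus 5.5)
Your proposal is correct and follows the paper's proof essentially step for step. Parts (i), (iii) and (iv) are the paper's arguments; your $m_{a}(t)$ is the paper's $M_{a}(u)$. In (ii) you supply the justification for $\mathbb{E}\rho_{x}=0$ that the paper only asserts as ``tedious but easy,'' namely Euler's relation (\ref{eq:zero_equality}) applied to the population quantities together with the vanishing boundary means. One small correction: the boundary mean that vanishes is $f_{X}(a)\mathbb{E}[c_{a}(W)\mid X=a]$, which is zero by the first claim of (iii), whereas the unconditional $\mathbb{E}c_{a}(W)$ you wrote is in general not zero.
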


\begin{proof}
Part (i). Convolving in $z$ and integrating against $\mu$ as before,
and using $D^{*}_{\Phi_{x}}=-1/(F_{y}f)$ and $D^{*}_{f_{x}}=F/(F_{y}f)$,
the two boundaries at $u=a$ combine into
\[
\frac{1}{nh_{x}}\sum^{n}_{i=1}K\left(\frac{X_{i}-a}{h_{x}}\right)\int\frac{\mu(y,Z_{i})\left[F(y|a,Z_{i})-\boldsymbol{1}\left\{ Y_{i}\le y\right\} \right]}{F_{y}(y|a,Z_{i})f(a,Z_{i})}\mathrm{d}y=T_{n}(a),
\]
the indicator entering through $\widehat{\Phi}_{x}$ and the distribution
function through $\widehat{f}_{x}$. The error the convolution in
$z$ makes is not stochastic. For each $y$ and $a$ the map $z\mapsto\mu(y,z)D^{*}(y,a,z)$
is $m$ times continuously differentiable with compact support under
Assumption \ref{assu: density}, so 
\[
\sup\left|\int\mu(y,z)D^{*}(y,a,z)\frac{1}{h_{z}}K\left(\frac{Z_{i}-z}{h_{z}}\right)\mathrm{d}z-\mu(y,Z_{i})D^{*}(y,a,Z_{i})\right|=O(h^{m}_{z}),
\]
where the supremum is taken over $y$, $a$ and $Z_{i}$, and the
bound being deterministic. Multiplying it by $(nh_{x})^{-1}\sum^{n}_{i=1}\left|K((X_{i}-a)/h_{x})\right|=O_{p}(1)$
gives $b_{n}(a)=T_{n}(a)+O_{p}(h^{m}_{z})$. Evaluating at $a=x$
and at $a=0$ and subtracting gives $B_{n}(x)=T_{n}(x)-T_{n}(0)+O_{p}(h^{m}_{z})$. 

Part (ii). Combining the second equality of Lemma \ref{lem:AL} and
Part (i) yields, uniformly over $\mathcal{X}_{0}$,
\[
\widehat{g}_{\mathrm{LS}}(x)-g(x)=\frac{1}{n}\sum^{n}_{i=1}\left\{ \rho_{x}(W_{i})-\mathbb{E}\rho_{x}(W_{i})\right\} +\mathbb{E}\rho_{x}(W_{i})+T_{n}(x)-T_{n}(0)+O_{p}(R_{n}+\bar{h}^{m}+\sqrt{h_{x}/n}).
\]
Since the class $\left\{ \rho_{x}:x\in\mathcal{X}_{0}\right\} $ is
Donsker by Lemma \ref{lem:donsker}, $n^{-1}\sum^{n}_{i=1}\{\rho_{x}(W_{i})-\mathbb{E}\rho_{x}(W_{i})\}=O_{p}(n^{-1/2})$
uniformly. It is tedious but easy to see that $\mathbb{E}\rho_{x}(W_{i})=0$
by applying the change-of-variable argument, which proves the claim. 

Part (iii). Let $m_{a}(u,z)=\mathbb{E}\left[c_{a}(W_{i})|X_{i}=u,Z_{i}=z\right]$.
Since $\mathbb{E}\left[\boldsymbol{1}\left\{ Y_{i}\le y\right\} |X_{i},Z_{i}\right]=F(y|X_{i},Z_{i})$,
\[
m_{a}(u,z)=\int\frac{\mu(y,z)\left[F(y|a,z)-F(y|u,z)\right]}{F_{y}(y|a,z)f(a,z)}\mathrm{d}y,
\]
which vanishes at $u=a$. Writing $M_{a}(u)=\int m_{a}(u,z)f_{X,Z}(u,z)\mathrm{d}z$,
so that $M_{a}(a)=0$, taking the conditional expectation and changing
variables gives
\begin{align*}
\mathbb{E}\left[T_{n}(a)\right] & =\frac{1}{h_{x}}\mathbb{E}_{XZ}\left[K\left(\frac{X_{i}-a}{h_{x}}\right)m_{a}(X_{i},Z_{i})\right]\\
 & =\frac{1}{h_{x}}\int K\left(\frac{u-a}{h_{x}}\right)M_{a}(u)\mathrm{d}u\\
 & =\int K(v)M_{a}(a+h_{x}v)\mathrm{d}v=O(h^{m}_{x}),
\end{align*}
because the moments of $K$ up to order $m-1$ vanish under Assumption
\ref{assu: kernel} and $M_{a}$ is $m$ times differentiable under
Assumption \ref{assu: density}.

Part (iv). The same change of variable applied to the second moment
gives
\[
\mathbb{E}\left[\frac{1}{h^{2}_{x}}K\left(\frac{X_{i}-a}{h_{x}}\right)^{2}c_{a}(W_{i})^{2}\right]=\frac{1}{h_{x}}\int K(v)^{2}\mathbb{E}\left[c_{a}(W_{i})^{2}|X_{i}=a+h_{x}v\right]f_{X}(a+h_{x}v)\mathrm{d}v=\frac{\sigma^{2}(a)}{h_{x}}+o(h^{-1}_{x}),
\]
and the squared mean is $O(h^{2m}_{x})$, so that $nh_{x}\mathrm{Var}(T_{n}(a))\rightarrow\sigma^{2}(a)$.
It remains to show that the two summands are asymptotically uncorrelated.
The covariance comes from a single observation entering both averages.
Writing $\lambda=x/h_{x}$ and substituting $u=x+h_{x}v$ gives 
\[
nh_{x}\left|\mathrm{Cov}(T_{n}(x),T_{n}(0))\right|\le C\lVert f_{X}\rVert_{\infty}\int\left|K(v)K(v+\lambda)\right|\mathrm{d}v+O(h^{2m+1}_{x}).
\]
Since $\left|v\right|+\left|v+\lambda\right|\ge\lambda$, the two
regions $\left|v+\lambda\right|\ge\lambda/2$ and $\left|v\right|\ge\lambda/2$
cover the line, and on each of them the tail bound of Assumption \ref{assu: kernel}
applies to the factor that is far from its own center:
\begin{align*}
\int\left|K(v)K(v+\lambda)\right|\mathrm{d}v & \le\left(\int_{\left|v+\lambda\right|\ge\lambda/2}+\int_{\left|v\right|\ge\lambda/2}\right)\left|K(v)K(v+\lambda)\right|\mathrm{d}v\\
 & \le C\left(\lambda/2\right)^{-\nu}\left(\int\left|K(v)\right|\mathrm{d}v+\int\left|K(v+\lambda)\right|\mathrm{d}v\right)=2^{\nu+1}C\lambda^{-\nu}\int\left|K\right|.
\end{align*}
Hence $nh_{x}\mathrm{Cov}(T_{n}(x),T_{n}(0))=O\big((h_{x}/\left|x\right|)^{\nu}\big)\rightarrow0$,
uniformly on any set bounded away from the origin. The point $x=0$
is excluded from $\mathcal{X}_{0}$, so the two never coincide.
\end{proof}

\begin{proof}[Proof of Theorem \ref{thm:pointwise}]
 By Lemma \ref{lem:boundary},
\[
\sqrt{nh_{x}}\left(\widehat{g}_{\mathrm{LS}}(x)-g(x)\right)=\sqrt{nh_{x}}\left(T_{n}(x)-T_{n}(0)\right)+O_{p}(\sqrt{h_{x}}),
\]
and the remainder is $o_{p}(1)$ because $h_{x}\to0$. Put
\[
d_{i}=K\left(\frac{X_{i}-x}{h_{x}}\right)c_{x}(W_{i})-K\left(\frac{X_{i}}{h_{x}}\right)c_{0}(W_{i}),\qquad\sqrt{nh_{x}}\left(T_{n}(x)-T_{n}(0)\right)=\frac{1}{\sqrt{nh_{x}}}\sum^{n}_{i=1}d_{i}.
\]
The $d_{i}$ are independent and identically distributed within a
row and depend on the row through $h_{x}$. Lemma \ref{lem:boundary}
gives $\mathbb{E}\left[K((X_{i}-a)/h_{x})c_{a}(W_{i})\right]=O(h^{m+1}_{x})$,
so that
\[
\frac{1}{\sqrt{nh_{x}}}\sum^{n}_{i=1}\mathbb{E}\left[d_{i}\right]=\sqrt{\frac{n}{h_{x}}}O(h^{m+1}_{x})=O\left(\sqrt{nh_{x}}h^{m}_{x}\right)\rightarrow0
\]
by the undersmoothing condition, and it gives $\mathrm{Var}(d_{i})/h_{x}\rightarrow\sigma^{2}(x)+\sigma^{2}(0)$.
Next we check Lyapunov's condition. The change of variable used in
Lemma \ref{lem:boundary} gives
\[
\mathbb{E}\left[\left|K\left(\frac{X_{i}-a}{h_{x}}\right)c_{a}(W_{i})\right|^{3}\right]=h_{x}\left[\int\left|K(v)\right|^{3}\mathrm{d}v\right]f_{X}(a)\mathbb{E}\left[\left|c_{a}(W_{i})\right|^{3}|X_{i}=a\right]+o(h_{x}),
\]
which is $O(h_{x})$ because $\mu$ is bounded with compact support
and $F_{y}$ and $f$ are bounded away from zero on $\mathcal{Y}_{\mu}\times\widetilde{\mathcal{X}}_{0}\times\mathcal{Z}_{\mu}$
under Assumptions \ref{assu: local} and \ref{assu: density}. The
summand $d_{i}$ is a difference of two terms of that form, and centering
costs at most a constant factor, so $\mathbb{E}\left|d_{i}-\mathbb{E}d_{i}\right|^{3}=O(h_{x})$
as well. Hence
\[
\frac{\sum^{n}_{i=1}\mathbb{E}\left|d_{i}-\mathbb{E}d_{i}\right|^{3}}{\left(\sum^{n}_{i=1}\mathrm{Var}(d_{i})\right)^{3/2}}=\frac{nO(h_{x})}{\left(nh_{x}\right)^{3/2}\left\{ \sigma^{2}(x)+\sigma^{2}(0)+o(1)\right\} ^{3/2}}=O\left((nh_{x})^{-1/2}\right)\rightarrow0.
\]
The brace is bounded away from zero by the hypothesis $\sigma^{2}(0)>0$.
The condition holds and the central limit theorem for triangular arrays
applies. Collecting the three displays gives the stated limit. For
$\widehat{g}_{\mathrm{LAD}}$, the equivalence established below bounds
the difference between the two estimators, and what it has to be smaller
than has weakened from $o_{p}(n^{-1/2})$ to $o_{p}((nh_{x})^{-1/2})$.
The argument is unchanged and the same limit follows.

Finally, for the LAD estimator, the argument of \citet[Theorem 2]{CKK15-JoE}
gives $\sup_{x\in\mathcal{X}_{0}}|\hat{g}_{\mathrm{LAD}}(x)-\hat{g}_{\mathrm{LS}}(x)|=O_{p}(\Delta^{\mathrm{LAD}}_{n})$
uniformly over $\mathcal{X}_{0}$, adapted so that each of its steps
holds uniformly in $x$. Here $r_{n}=\lVert\widehat{S}-S\rVert_{\infty}$
and 
\[
\Delta^{\mathrm{LAD}}_{n}:=\frac{\log n}{nh_{x}}+r_{n}\sqrt{\frac{\log n}{nh_{x}}}+r^{2}_{n}.
\]
By Lemma \ref{lem:linear1} and the definition of $R_{n}$, $r_{n}=O_{p}(\sqrt{R_{n}})$.
Multiplied by $\sqrt{nh_{x}\log n}$, the three terms are $(\log n)^{3/2}(nh_{x})^{-1/2}$,
$O_{p}(\sqrt{R_{n}}\log n)$ and $O_{p}(\sqrt{nh_{x}}R_{n}\sqrt{\log n})$.
Assumption \ref{assu:rate} fixes the bandwidths at powers of $n$
and makes $\sqrt{nh_{x}}R_{n}\rightarrow0$, so each of the three
is a power of $n$ times a fixed power of $\log n$, and $\sqrt{nh_{x}}\Delta^{\mathrm{LAD}}_{n}\sqrt{\log n}\rightarrow0$.
\end{proof}

\section{Proof of Proposition \ref{prop:lower_bound}}\label{sec:lower}

Take the subset $\mathcal{P}_{0}\subset\mathcal{P}$ on which $\varepsilon$
is independent of $\left(\eta,Z\right)$. Fix the first-stage function
$h$ and the distributions of $\eta$, $Z$ and $\varepsilon$ at
one choice that satisfies Assumptions \ref{assu:rv}--\ref{assu:support},
with $\varepsilon$ normally distributed, and let $g$ alone vary
over the set $\mathcal{G}$ of functions that are $m$ times differentiable
with derivatives bounded by a constant chosen so that the joint density
obeys the bound fixed in Proposition \ref{prop:lower_bound}, and
that satisfy $g(0)=0$. Every element of $\mathcal{P}_{0}$ lies in
$\mathcal{P}$. The conditions on the first stage and on the supports
do not involve $g$ and hold by the choice just made; $\left(\varepsilon,\eta\right)\indep Z$
follows from $\varepsilon\indep\left(\eta,Z\right)$ and $\eta\indep Z$;
and the conditional density of $Y$ given $\left(X,Z\right)$ is $f_{\varepsilon}(y-g(x))$,
so that the joint density is $f_{\varepsilon}(y-g(x))f_{X|Z}(x|z)f_{Z}(z)$
and Assumption \ref{assu:rv} holds, and the joint density is bounded
and $m$ times differentiable with bounded derivatives, for every
$g\in\mathcal{G}$. The one condition to check is Assumption \ref{assu: local}.
Since $F_{Y|X,Z}(y|x,z)=F_{\varepsilon}(y-g(x))$ does not vary with
$z$ here, $\nabla_{z}F_{Y|X,Z}$ vanishes identically and is continuous;
$\nabla_{y}F_{Y|X,Z}(y|x,z)=f_{\varepsilon}(y-g(x))$ is continuous
because $\varepsilon$ is normal; and the conditions on $V$, which
involve the first stage alone, hold by the choice made above. The
derivative that vanishes enters the identified expression only in
the numerator, and the two quantities required to be nonzero, $\nabla_{z}V(x|z)$
and $f_{Y|X,Z}(y|x,z)$, are unaffected by the restriction. On $\mathcal{P}_{0}$
the error $\varepsilon$ is independent of $X$, so $\mathbb{E}\left[Y|X=u\right]=g(u)+\mathbb{E}\left[\varepsilon\right]$
at every $u$, and the normalization $g(0)=0$ gives $g(x)=\mathbb{E}\left[Y|X=x\right]-\mathbb{E}\left[Y|X=0\right]$.
Estimating $g(x)$ over $\mathcal{P}_{0}$ is therefore estimating
a univariate regression function at a point, over a class of regression
functions that are $m$ times differentiable with derivatives bounded
by a fixed constant, and with a design density that is positive in
a neighborhood of $x$. Perturbations of $g$ supported in a shrinking
neighborhood of $x$ leave $\mathbb{E}\left[Y|X=0\right]$ unchanged,
so the lower bound of \citet{Sto80-AoS} for estimating a regression
function at a point applies to $g(x)$ and gives the rate $n^{-m/(2m+1)}$
in dimension one. The minimax risk over $\mathcal{P}$ is at least
the minimax risk over the subset $\mathcal{P}_{0}$.

\section{Proof of Theorem \ref{thm:band} }\label{sec:proof_band}

Throughout this appendix the conditions of Theorem \ref{thm:band}
are in force. 

\emph{Notation.} Let $P_{n}$ and $P^{*}_{n}$ denote the empirical
measures of $\left\{ W_{i}\right\} ^{n}_{i=1}$ and of the bootstrap
sample $\left\{ W^{*}_{i}\right\} ^{n}_{i=1}$, and write $\mathbb{G}_{n}=\sqrt{n}(P_{n}-P_{0})$,
$\mathbb{G}^{*}_{n}=\sqrt{n}(P^{*}_{n}-P_{n})$. Probability and expectation
conditional on the data are written $P^{*}$ and $\mathbb{E}^{*}$.
Let $\theta$ collect the population quantities estimated by kernel
smoothing in Section \ref{sec:Estimation-and-bootstrap}, grouped
as $\theta=(\theta_{\Phi},\theta_{\Psi},\theta_{f},\theta_{p})$,
where $\theta_{\Phi}$ lists $\Phi$ together with each of its partial
derivatives that appears in Lemmas \ref{lem:linear1} and \ref{lem:linear2},
and $\theta_{\Psi},\theta_{f},\theta_{p}$ likewise. Let $\hat{\theta}$
and $\hat{\theta}^{*}$ be the corresponding estimators from the original
and the bootstrap sample, computed with the same bandwidths. Every
coordinate of $\hat{\theta}$ is a sample average: writing $k^{\alpha}$
for the kernel function attached to coordinate $\alpha$ and a given
evaluation point, $\hat{\theta}^{\alpha}=P_{n}k^{\alpha}$. Collecting
these, $\hat{\theta}=P_{n}k$, where $k$ maps a single observation
to an element of the same space as $\theta$. For instance $\hat{\Phi}(y,x,z)=P_{n}k^{\Phi}_{y,x,z}$
with $k^{\Phi}_{y,x,z}(W)=(h_{x}h_{z})^{-1}\boldsymbol{1}\{Y\le y\}K((X-x)/h_{x})K((Z-z)/h_{z})$,
and analogously for the remaining coordinates. For the coordinates
that are derivatives, $k^{\alpha_{x}}:=\partial_{x}k^{\alpha}$ and
$k^{\alpha_{z}}:=\partial_{z}k^{\alpha}$; this replaces $K$ by $K^{\prime}$
and carries a factor $-h^{-1}$ as well, which is where the extra
power of the bandwidth in their rates comes from. The $y$ coordinate
is not a derivative but a separate average, $k^{\Phi_{y}}_{y,x,z}(W)=(h_{y}h_{x}h_{z})^{-1}K\big(\tfrac{Y-y}{h_{y}}\big)K\big(\tfrac{X-x}{h_{x}}\big)K\big(\tfrac{Z-z}{h_{z}}\big)$.

For a perturbation $\kappa=(\kappa_{\Phi},\kappa_{\Psi},\kappa_{f},\kappa_{p})$
of $\theta$, set 
\[
\mathcal{L}_{x}[\kappa]:=\int\int\mu(y,z)\left\{ \nabla_{\Phi}S_{*}[\kappa_{\Phi}]+\nabla_{\Psi}S_{*}[\kappa_{\Psi}]+\nabla_{f}S_{*}[\kappa_{f}]+\nabla_{p}S_{*}[\kappa_{p}]\right\} (y,x,z)\mathrm{d}y\mathrm{d}z,
\]
 with the operators $\nabla_{\Phi}S_{*},\nabla_{\Psi}S_{*},\nabla_{f}S_{*},\nabla_{p}S_{*}$
as defined in Appendix \ref{sec:thm1}. $\mathcal{L}_{x}$ is linear
in $\kappa$, and since $k(w)$ lies in the same space as $\theta$
it lies in the domain of $\mathcal{L}_{x}$ for every observation
$w$; we write $\mathcal{L}_{x}k$ for the real-valued function $w\mapsto\mathcal{L}_{x}[k(w)]$.
Write $\gamma_{n}=(nh_{x})^{-1/8}(\log n)^{7/8}$, the value at which
the coupling error of the lemmas below and the exceptional probability
balance in Step 4. The couplings are stated at that value. Fix $\upsilon\in(0,1/8)$.
Lemma \ref{lem:vc}(iii) holds for every $q\in[4,\infty)$, and $q$
is taken below large enough that $(nh_{x})^{1/(8q)}$ times any fixed
power of $\log n$ is $O\big((nh_{x})^{\upsilon}\big)$, which is
possible because $nh_{x}$ is a power of $n$. The constants of \citet{CCK16-SPA}
depend on $q$, and so on $\upsilon$.

Write $s^{\alpha}_{n}$ for the stochastic part of the uniform rate
that the proof of Lemma \ref{lem:linear1} gives for coordinate $\alpha$,
so that the full rate there is $\bar{h}^{m}+s^{\alpha}_{n}$; for
instance $s^{\Phi_{y}}_{n}=\sqrt{\log n/(nh_{y}h_{x}h_{z})}$.
\begin{lem}
\label{lem:D_rates} (i) 
\begin{equation}
\hat{\theta}^{*}-\hat{\theta}=(P^{*}_{n}-P_{n})k,\qquad\hat{\theta}-\theta=(P_{n}-P_{0})k+(P_{0}k-\theta),\label{eq:boot_exact}
\end{equation}
 where $P_{0}k-\theta=O(\bar{h}^{m})$ uniformly over $\mathcal{Y}_{\mu}\times\widetilde{\mathcal{X}}_{0}\times\mathcal{Z}_{\mu}$;
in particular $\mathbb{E}^{*}\hat{\theta}^{*}=\hat{\theta}$, so the
bootstrap deviation carries no smoothing bias while the sample deviation
does. (ii) Conditionally on the data and for every coordinate $\alpha$,
the suprema below being over $\mathcal{Y}_{\mu}\times\widetilde{\mathcal{X}}_{0}\times\mathcal{Z}_{\mu}$,
\[
\left\lVert (\hat{\theta}^{*}-\hat{\theta})^{\alpha}\right\rVert _{\infty}=O_{P^{*}}(s^{\alpha}_{n}),\qquad\left\lVert (\hat{\theta}^{*}-\theta)^{\alpha}\right\rVert _{\infty}=O_{P^{*}}(\bar{h}^{m}+s^{\alpha}_{n}).
\]
 (iii) $\hat{f}^{*}$, $\nabla_{y}\hat{F}^{*}=\hat{\Phi}^{*}_{y}/\hat{f}^{*}$
and $\nabla_{z}\widehat{V}^{*}$ are bounded away from zero on $\mathcal{Y}_{\mu}\times\widetilde{\mathcal{X}}_{0}\times\mathcal{Z}_{\mu}$
with conditional probability tending to one.
\end{lem}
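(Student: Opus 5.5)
Part (i) is algebraic. Every coordinate of $\hat{\theta}$ is $P_{n}k^{\alpha}$, and the bootstrap estimator uses the same kernel functions with the same bandwidths, so $\hat{\theta}^{*}=P^{*}_{n}k$ and the first identity in (\ref{eq:boot_exact}) follows by subtraction. The second identity adds and subtracts $P_{0}k$. Conditionally on the data, $\mathbb{E}^{*}P^{*}_{n}k=P_{n}k$, because each $W^{*}_{i}$ is drawn from $P_{n}$; this gives $\mathbb{E}^{*}\hat{\theta}^{*}=\hat{\theta}$. The bound $P_{0}k-\theta=O(\bar{h}^{m})$ is the deterministic bias computation already used for the rates in the proof of Lemma \ref{lem:linear1}. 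For the coordinates that are derivatives, I would differentiate under the integral, using the integrability in Assumption \ref{assu: density}(i). I would then change variables and use the vanishing moments of $K$ up to order $m-1$ together with the $m$ bounded derivatives. The tail bound with $\nu>m+1$ handles the part of the kernel outside $\mathcal{U}\times\mathcal{W}$.

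For part (ii) I would view $\{k^{\alpha}_{y,x,z}\}$, indexed by the compact set $\mathcal{Y}_{\mu}\times\widetilde{\mathcal{X}}_{0}\times\mathcal{Z}_{\mu}$, as a VC type class with envelope of order $(h_{y}h_{x}h_{z})^{-1}$, or the relevant bandwidth product, times one more inverse bandwidth for derivatives. The VC type property is exactly the one invoked in Lemma \ref{lem:linear1}, from products of indicator and kernel families. The quantity I would control is $\mathbb{E}^{*}\lVert(P^{*}_{n}-P_{n})k^{\alpha}\rVert_{\infty}$. I would bound it with a maximal inequality for VC type classes, for example that of Chernozhukov, Chetverikov and Kato or of Gin\'e and Guillou, applied to the bootstrap empirical process. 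The inequality must be applied conditionally, with $P_{n}$ as the sampling law. That law is finitely discrete, so the uniform entropy bound (\ref{eq:VC-dim}) applies to it directly.

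The inputs to that inequality are the conditional variance proxy $\sup_{\alpha}P_{n}(k^{\alpha})^{2}$ and the envelope. For the variance proxy, I would show that it is of the same order as $\sup P_{0}(k^{\alpha})^{2}$, namely $(h_{y}h_{x}h_{z})^{-1}$ and its analogues, with probability tending to one. This is the main obstacle. My approach is to apply the same uniform exponential inequality to the squared class $\{(k^{\alpha})^{2}\}$, which is again VC type, and show that $\sup|(P_{n}-P_{0})(k^{\alpha})^{2}|$ is of smaller order than $\sup P_{0}(k^{\alpha})^{2}$. This requires $\log n/(n\cdot\text{bandwidth product})\to0$, which Assumption \ref{assu:rate} gives. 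With that in hand, the maximal inequality yields $\mathbb{E}^{*}\lVert\cdot\rVert_{\infty}=O_{p}(s^{\alpha}_{n})$, since the envelope term $\log n\cdot\lVert E\rVert_{\infty}/n$ is of smaller order under the same conditions. Markov's inequality then gives the first claim in $O_{P^{*}}$ form, in probability. The second claim follows from the triangle inequality, combining the first claim with $\hat{\theta}-\theta=O_{p}(\bar{h}^{m}+s^{\alpha}_{n})$ from Lemma \ref{lem:linear1}.

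For part (iii), part (ii) gives $\lVert\hat{f}^{*}-f\rVert_{\infty}$ and $\lVert\hat{\Phi}^{*}_{y}-\Phi_{y}\rVert_{\infty}$ tending to zero in conditional probability. The same holds for $\hat{\Psi}^{*}$, $\hat{\Psi}^{*}_{x}$, $\hat{\Psi}^{*}_{z}$, $\hat{p}^{*}$ and $\hat{p}^{*}_{z}$. Assumption \ref{assu: density}(iii) bounds $f$, $p$, $F_{y}$ and $|\nabla_{z}V|$ away from zero on the region. The ratios $\nabla_{y}\hat{F}^{*}=\hat{\Phi}^{*}_{y}/\hat{f}^{*}$ and $\nabla_{z}\widehat{V}^{*}=(\hat{\Psi}^{*}_{z}\hat{p}^{*}-\hat{\Psi}^{*}\hat{p}^{*}_{z})/(\hat{p}^{*})^{2}$ are continuous in these coordinates on a neighbourhood where the denominators stay positive. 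They therefore converge uniformly to $F_{y}$ and $\nabla_{z}V$, and each exceeds half its population infimum with conditional probability tending to one.
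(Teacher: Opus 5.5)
Your proposal follows the paper's proof essentially step for step. For (i) you use the exact identities and $\mathbb{E}^{*}P^{*}_{n}k=P_{n}k$. For (ii) you apply a Gin\'e--Guillou type inequality with $P_{n}$ as the sampling law, which is legitimate because the VC entropy bound is uniform over finitely discrete measures; you tie the data-dependent variance proxy $P_{n}(k^{\alpha})^{2}$ to $P_{0}(k^{\alpha})^{2}$ by a second application to the squared class, and finish with the triangle inequality. For (iii) you use uniform consistency of the ratios, which you spell out more fully than the paper does.

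One caveat applies equally to your sketch and to the paper's appeal to ``the usual kernel smoothing bias''. It concerns the derivative coordinates $\Phi_{x},\Phi_{z},f_{x},f_{z},\Psi_{z},p_{z}$. After differentiating under the integral, the function being smoothed has only $m-1$ derivatives, so the vanishing moments of $K$ give only $O(\bar{h}^{m-1})$. Getting $O(\bar{h}^{m})$ would need one more bounded derivative of the density, or Lipschitz $m$-th derivatives. This is inherited from the statement, not a defect of your route.

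It is also harmless where the lemma is used. The bootstrap linear term is exactly centered, and the plug-in $\hat{c}_{a}$ involves no derivative coordinates. Part (iii) needs only consistency. In the quadratic remainders, $\bar{h}^{2m-2}$ in place of $\bar{h}^{2m}$ still gives $\sqrt{nh_{x}}R_{n}\to0$ because $m\ge2$.
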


\begin{proof}
Part (i) is immediate: $\hat{\theta}=P_{n}k$ and $\hat{\theta}^{*}=P^{*}_{n}k$
involve the same $k$, since the bootstrap estimator uses the same
bandwidths, so the difference is $(P^{*}_{n}-P_{n})k$ and $\mathbb{E}^{*}P^{*}_{n}k=P_{n}k$,
since the $W^{*}_{i}$ are i.i.d. from $P_{n}$ given the data. The
second identity is the same algebra with $P_{0}$ in place of $P_{n}$,
and $P_{0}k-\theta$ is the usual kernel smoothing bias, of order
$\bar{h}^{m}$ under Assumptions \ref{assu: kernel} and \ref{assu: density};
it is the bias term that appears in the rates in the proof of Lemma
\ref{lem:linear1}.

For (ii), conditional on the data $\left\{ W^{*}_{i}\right\} ^{n}_{i=1}$
are i.i.d. from $P_{n}$, and by (i) $\hat{\theta}^{*}-\hat{\theta}$
is a centered average, so there is no bias term to control. Write
$U_{\alpha}$ for the envelope of $k^{\alpha}$ and $H_{\alpha}$
for the product of the bandwidths that coordinate smooths, so that
$H_{\alpha}\in\left\{ h_{z},h_{x}h_{z},h_{y}h_{x}h_{z}\right\} $.
Then $P_{0}(k^{\alpha})^{2}\asymp U^{2}_{\alpha}H_{\alpha}$ and $P_{0}(k^{\alpha})^{4}\asymp U^{4}_{\alpha}H_{\alpha}$,
since $k^{\alpha}$ is of size $U_{\alpha}$ on a set of probability
of order $H_{\alpha}$ and negligible elsewhere. The families $\left\{ k^{\alpha}\right\} $,
indexed by the evaluation point, are VC type (in the sense of (\ref{eq:VC-dim}))
with characteristics that do not depend on the bandwidth, by the argument
in the proof of Lemma \ref{lem:linear1}. Their entropy bound does
not depend on the underlying measure, so the exponential inequality
of \citet[Corollary 2.2]{GineGuillou02} applies to draws from $P_{n}$
and gives
\[
\left\lVert (P^{*}_{n}-P_{n})k^{\alpha}\right\rVert _{\infty}=O_{P^{*}}\left(\sqrt{\frac{P_{n}(k^{\alpha})^{2}\log n}{n}}\right).
\]
The variance factor of this bound, $P_{n}(k^{\alpha})^{2}$, depends
on the data, and the rate the lemma states does not. It is enough
that $P_{n}(k^{\alpha})^{2}$ stay within a constant factor of $P_{0}(k^{\alpha})^{2}$
uniformly in $\alpha$, that is, $\sup_{\alpha}\left|(P_{n}-P_{0})(k^{\alpha})^{2}\right|=o_{p}\big(P_{0}(k^{\alpha})^{2}\big)$.
The family $\{(k^{\alpha})^{2}\}$ is VC type for the same reason
$\{k^{\alpha}\}$ is, with envelope $U^{2}_{\alpha}$, so a second
application of \citet[Theorem 2.1]{GineGuillou02} gives
\[
\frac{\sup_{\alpha}\left|(P_{n}-P_{0})(k^{\alpha})^{2}\right|}{P_{0}(k^{\alpha})^{2}}=O_{p}\left(\sqrt{\frac{\log n}{nH_{\alpha}}}+\frac{\log n}{nH_{\alpha}}\right).
\]
The envelope cancels, and $h_{y}h_{x}h_{z}$ is the smallest of the
three values $H_{\alpha}$ takes, so both terms are powers of $\log n/(nh_{y}h_{x}h_{z})$,
which vanishes because Assumption \ref{assu:rate} makes even $\sqrt{nh_{x}}\log n/(nh_{y}h_{x}h_{z})$
vanish. With that, the bootstrap bound is $\sqrt{P_{0}(k^{\alpha})^{2}\log n/n}$,
which is $s^{\alpha}_{n}$. This gives the first display of the lemma;
the second follows by adding the rate for $\hat{\theta}-\theta$ from
the proof of Lemma \ref{lem:linear1}. Part (iii) then follows from
the lower bound on the density in Assumption \ref{assu: density}.
\end{proof}

For $x\in\mathcal{X}_{0}$, let
\[
f_{n,x}(W)=h^{-1/2}_{x}\left[K\left(\frac{X-x}{h_{x}}\right)c_{x}(W)-K\left(\frac{X}{h_{x}}\right)c_{0}(W)\right],\qquad\mathcal{F}_{n}=\left\{ f_{n,x}:x\in\mathcal{X}_{0}\right\} ,
\]
with $c_{a}$ as in Lemma \ref{lem:boundary}. The next lemma is needed
to confirm that the class of functions $\mathcal{F}_{n}$ meets the
conditions of \citet{CCK16-SPA}.
\begin{lem}
\label{lem:vc} There is a constant $C\ge1$ such that the following
hold. (i) 
\[
\frac{1}{\sqrt{n}}\sum^{n}_{i=1}\left\{ f_{n,x}(W_{i})-\mathbb{E}f_{n,x}(W_{i})\right\} =\sqrt{nh_{x}}\left\{ T_{n}(x)-T_{n}(0)-\mathbb{E}\left[T_{n}(x)-T_{n}(0)\right]\right\} .
\]
(ii) $\mathcal{F}_{n}$ is pointwise measurable (see \citet[Example 2.3.4]{vdvW1996})
and is VC type with envelope
\[
F_{n}(W)=2\lVert K\rVert_{\infty}h^{-1/2}_{x}\bar{c}(W),\qquad\bar{c}(W)=\sup_{a\in\widetilde{\mathcal{X}}_{0}}\left|c_{a}(W)\right|,
\]
and constants $A\ge e$ and $v\ge1$ that do not depend on $n$. (iii)
With $\varsigma=C$ and $b_{n}=Ch^{-1/2}_{x}$,
\[
\sup_{f\in\mathcal{F}_{n}}\mathbb{E}\left|f(W)\right|^{k}\le\varsigma^{2}b^{k-2}_{n}\quad(k=2,3,4),\qquad\lVert F_{n}\rVert_{L_{q}(P_{0})}\le b_{n}\ \text{for all }q\in[4,\infty).
\]
\end{lem}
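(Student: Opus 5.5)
Part (i) is pure algebra, so I will dispose of it first. By the definition of $T_n(a)$ in (\ref{eq:Tn}),
\[
\sqrt{nh_x}\bigl(T_n(x)-T_n(0)\bigr)=\frac{1}{\sqrt{nh_x}}\sum_{i=1}^n\Bigl[K\bigl(\tfrac{X_i-x}{h_x}\bigr)c_x(W_i)-K\bigl(\tfrac{X_i}{h_x}\bigr)c_0(W_i)\Bigr]=\frac{1}{\sqrt n}\sum_{i=1}^n f_{n,x}(W_i),
\]
because $f_{n,x}$ carries the factor $h_x^{-1/2}$. Taking expectations and subtracting gives the identity.

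For part (ii), pointwise measurability comes from continuity of $x\mapsto f_{n,x}(w)$ for each $w$. The kernel is Lipschitz by Assumption \ref{assu: kernel}(i). The map $a\mapsto c_a(w)$ is continuous on $\widetilde{\mathcal X}_0$ by dominated convergence, since $\mu$ has compact support and $F_y$ and $f$ are continuous and bounded away from zero there by Assumption \ref{assu: density}(iii). The countable set $\mathcal X_0\cap\mathbb Q$ therefore suffices.

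For the VC-type property, I write $f_{n,x}=h_x^{-1/2}(g_x-g_0)$ with $g_a(w)=K((X-a)/h_x)c_a(w)$.
\begin{itemize}
\item The term $g_0$ is a single function, so it adds nothing to the entropy.
\item The family $\{K((\cdot-a)/h_x):a\}$ is VC type with characteristics free of $h_x$, because $K$ has bounded variation (Nolan--Pollard; the same fact is used in Lemma \ref{lem:linear1}).
\item For $\{c_a:a\in\widetilde{\mathcal X}_0\}$, I will show it is Lipschitz in $a$: $|c_a(w)-c_{a'}(w)|\le L(w)|a-a'|$ with $L$ bounded. Differentiating under the integral uses $\nabla_xF$ and the derivatives of $F_y$ and $f$ in $x$, which are bounded on a neighbourhood of the region by Assumption \ref{assu: density}(i),(iii). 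A one-parameter Lipschitz class over a compact interval has covering numbers of order $(A/\varepsilon)$ relative to any envelope bounded below.
\item Pointwise products of uniformly bounded VC-type classes are VC type, with the envelope equal to the product of the envelopes. This yields the envelope $F_n=2\lVert K\rVert_\infty h_x^{-1/2}\bar c$, and the factor $h_x^{-1/2}$ rescales the covering radius and the envelope together, so $A$ and $v$ do not depend on $n$.
\end{itemize}
The point needing care is that the Lipschitz constant in $a$ must be measured relative to $\bar c$. Here $\bar c$ is bounded above, since $|F-\boldsymbol 1|\le1$ and the denominators are bounded below. If $\bar c$ could be small, I will instead use the envelope $\max(\bar c,1)$, which is also bounded.

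For part (iii), the envelope bound is immediate: $\bar c\le C_0$ uniformly, so $\lVert F_n\rVert_{L_q}\le 2\lVert K\rVert_\infty C_0h_x^{-1/2}\le b_n$ for every $q$. For the moments I split $f_{n,x}$ into its two pieces and use $|a-b|^k\le2^{k-1}(|a|^k+|b|^k)$. For each piece, the change of variable of Lemma \ref{lem:boundary}(iv) gives
\[
\mathbb E\Bigl|h_x^{-1/2}K\bigl(\tfrac{X-a}{h_x}\bigr)c_a\Bigr|^k\le h_x^{-k/2}C_0^k\lVert f_X\rVert_\infty\,h_x\int|K|^k .
\]
This equals $C'h_x^{1-k/2}=C'(h_x^{-1/2})^{k-2}$, and the integrals $\int|K|^k$ are finite for $k\le4$ by the tail bound with $\nu>m+1$. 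Choosing $C$ large enough that $\varsigma^2=C^2$ and $b_n=Ch_x^{-1/2}$ dominate these constants, and also the envelope constant, gives both displays.

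The main obstacle is the uniform-in-$n$ VC characteristics for the product class in (ii). In particular, it must be checked that the bounded-variation argument for the kernel family really gives constants free of $h_x$, and that the Lipschitz-in-$a$ bound for $c_a$ holds uniformly in $w$ under the stated assumptions.
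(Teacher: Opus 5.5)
Your proposal follows the paper's proof essentially step for step: the same split $f_{n,x}=h_x^{-1/2}(\kappa_x c_x-\kappa_0 c_0)$, bounded variation for the kernel translates, a Lipschitz bound in $a$ for $c_a$ uniform in $W$, the product rule for VC-type classes with $h_x^{-1/2}$ absorbed by the relative radius, and the change of variable for the moments in (iii). Your caveat about the envelope is well founded and applies equally to the paper's sup-norm covering, because $\bar c(W)\to 0$ as $Z$ approaches the boundary of $\mathcal{Z}_{\mu}$. What both arguments actually deliver is therefore (ii) with an envelope bounded away from zero, such as your $2\lVert K\rVert_{\infty}h_x^{-1/2}\max\{\bar c,1\}$, and that version serves the later coupling lemmas just as well. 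For pointwise measurability, take a countable dense subset of $\mathcal{X}_0$ rather than $\mathcal{X}_0\cap\mathbb{Q}$, which need not be dense in an arbitrary compact set.
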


\begin{proof}
Part (i) is the definition of $T_{n}$ rearranged. For Part (ii),
the envelope follows from $\left|K\right|\le\lVert K\rVert_{\infty}$.
The function $\bar{c}$ is bounded because $\mu$ is bounded with
compact support and $F_{y}$ and $f$ are bounded away from zero on
$\mathcal{Y}_{\mu}\times\widetilde{\mathcal{X}}_{0}\times\mathcal{Z}_{\mu}$
under Assumptions \ref{assu: local} and \ref{assu: density}, so
the only dependence on $n$ left in $F_{n}$ is the factor $h^{-1/2}_{x}$.
Pointwise measurability holds because $x\mapsto f_{n,x}(W)$ is continuous
for every $W$, the continuity in $x$ of $c_{x}$ being the Lipschitz
bound established below. A countable dense subset of $\mathcal{X}_{0}$
indexes a countable subclass with the required property. For the entropy
in (ii), write $f_{n,x}=h^{-1/2}_{x}\left(\kappa_{x}c_{x}-\kappa_{0}c_{0}\right)$
with $\kappa_{a}(X)=K((X-a)/h_{x})$. The class is built from two
families by a product, a subtraction and a scaling. Assumption \ref{assu: kernel}
makes $K$ of bounded variation. The translates and dilates of such
a function, $\left\{ K((\cdot-x)/h):x\in\mathbb{R},h>0\right\} $,
are a VC type class \citep{NolanPollard87,GineGuillou02}. The bandwidth
is one of the indices there, so the characteristics of $\left\{ \kappa_{x}:x\in\mathcal{X}_{0}\right\} $
do not depend on it. The family $\left\{ c_{x}:x\in\mathcal{X}_{0}\right\} $
is Lipschitz in $x$ uniformly in $W$: differentiating under the
integral sign, 
\[
\partial_{x}c_{x}(W)=\int\mu(y,Z)\,\partial_{x}\left\{ \frac{F(y|x,Z)-\boldsymbol{1}\left\{ Y\le y\right\} }{F_{y}(y|x,Z)f(x,Z)}\right\} \mathrm{d}y,
\]
which Assumptions \ref{assu: local} and \ref{assu: density} bound.
The bound is uniform in $W$, so $\sup_{W}\left|c_{x}(W)-c_{x^{\prime}}(W)\right|\le C\left|x-x^{\prime}\right|$.
Every $c_{x}$ is then within $\varepsilon$ of some $c_{x^{\prime}}$
in the supremum norm, $x^{\prime}$ running over a net of $\mathcal{X}_{0}$
of mesh $\varepsilon/C$. Since $\mathcal{X}_{0}$ is bounded, such
a net has $O(1/\varepsilon)$ points, and the covering numbers of
the family are of that order. That family is therefore VC type as
well. The class $\left\{ \kappa_{x}c_{x}\right\} $ is the pointwise
product of two VC type classes, so it is VC type, with the envelopes
multiplied and the characteristics added \citep[Section 2.6]{vdvW1996}.
Subtracting the single function $\kappa_{0}c_{0}$ changes the characteristics
by a constant. Multiplying by $h^{-1/2}_{x}$ scales the class and
its envelope alike. The radius in (\ref{eq:VC-dim}) is measured against
the envelope, so that scaling leaves the bound unchanged. Each step
above leaves the characteristics free of $n$, so $\mathcal{F}_{n}$
is VC type with constants $A$ and $v$ that do not depend on $n$. 

For Part (iii), the second moment is the variance computation in the
proof of Lemma \ref{lem:boundary}, which gives $\mathbb{E}f_{n,x}(W)^{2}\rightarrow\sigma^{2}(x)+\sigma^{2}(0)$
uniformly, hence a bound free of $n$. The same change of variable
applied to the higher moments give
\[
\mathbb{E}\left|f_{n,x}(W)\right|^{k}=\left(h^{-1/2}_{x}\right)^{k}O(h_{x})=O\left(h^{1-k/2}_{x}\right)\qquad(k=3,4),
\]
because the set on which a member of $\mathcal{F}_{n}$ is not negligible
has probability $O(h_{x})$. The bound on $\lVert F_{n}\rVert_{L_{q}(P_{0})}$
follows for every such $q$ from the boundedness of $\bar{c}$. The
constants in these bounds are free of $n$, so $C$ may be taken large
enough for all the inequalities of (iii) to hold at once.
\end{proof}

The two scales in Lemma \ref{lem:vc} (iii) are far apart: $\varsigma$
bounds the second moments and is of order one, while $b_{n}$ bounds
the envelope and is of order $h^{-1/2}_{x}$. A member of $\mathcal{F}_{n}$
reaches that size only on a set of probability of order $h_{x}$,
so its second moment stays of order one while its supremum does not.
A class whose envelope dominates its members in this way is not Donsker.
\begin{lem}
\label{lem:coupling} Let $\sigma^{2}_{n}(x)=\mathrm{Var}_{P}(f_{n,x})$,
$\overline{\mathcal{F}}_{n}=\left\{ f_{n,x}/\sigma_{n}(x):x\in\mathcal{X}_{0}\right\} $,
let $\mathbb{G}_{P}$ be the $P$-Brownian bridge indexed by $\overline{\mathcal{F}}_{n}$,
the centered Gaussian process with covariance $\mathbb{E}\left[\mathbb{G}_{P}(f)\mathbb{G}_{P}(g)\right]=P(fg)-Pf\cdot Pg$,
and let 
\[
Z_{n}=\sup_{x\in\mathcal{X}_{0}}\frac{\sqrt{nh_{x}}\left|\widehat{g}_{\mathrm{LS}}(x)-g(x)\right|}{\sigma_{n}(x)}.
\]
Then (i) $\overline{\mathcal{F}}_{n}\cup(-\overline{\mathcal{F}}_{n})$
satisfies conditions (A) to (C) of \citet{CCK16-SPA}. (ii) For each
$n$, there is a random variable $\widetilde{Z}_{n}\overset{d}{=}\sup_{x\in\mathcal{X}_{0}}\left|\mathbb{G}_{P}(f_{n,x}/\sigma_{n}(x))\right|$,
such that
\begin{align*}
\left|Z_{n}-\widetilde{Z}_{n}\right| & \le C\left\{ \sqrt{h_{x}}+\sqrt{nh_{x}}R_{n}+\sqrt{nh_{x}}\bar{h}^{m}+(nh_{x})^{-1/8}(\log n)^{3/8}\right\} \\
 & \quad\text{with probability }1-C^{\prime}(\gamma_{n}+n^{-1})
\end{align*}
\end{lem}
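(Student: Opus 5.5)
Part (i) comes from the properties already established in Lemma \ref{lem:vc}, after normalizing by $\sigma_{n}(x)$. By Lemma \ref{lem:boundary}(iv), $\sigma_{n}^{2}(x)\to\sigma^{2}(x)+\sigma^{2}(0)$ uniformly on $\mathcal{X}_{0}$. The hypothesis $\sigma^{2}(0)>0$ then gives $\inf_{x}\sigma_{n}(x)\ge\underline{\sigma}>0$ for all large $n$. Dividing by a function of $x$ bounded away from zero preserves the structure we need:
\begin{itemize}
\item Pointwise measurability holds, because $x\mapsto\sigma_{n}(x)$ is continuous, by the same Lipschitz argument as for $c_{x}$.
\item The moment bounds of Lemma \ref{lem:vc}(iii) hold with $\varsigma$ and $b_{n}$ inflated by $\underline{\sigma}^{-1}$.
\item The envelope becomes $F_{n}/\underline{\sigma}$.
\end{itemize}
For the VC property, the map $x\mapsto1/\sigma_{n}(x)$ is Lipschitz and bounded. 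The family of constants $\{1/\sigma_{n}(x)\}$ therefore has covering number $O(1/\varepsilon)$, and a product with a VC type class remains VC type with characteristics free of $n$. Taking the union with the negated class at most doubles the covering numbers. Conditions (A) (measurability), (B) (VC type entropy with $A,v$ fixed) and (C) (the moment bounds with $\varsigma$, $b_{n}$ and $\lVert F_{n}\rVert_{L_{q}}\le b_{n}$) of \citet{CCK16-SPA} follow.

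For (ii), I decompose $Z_{n}$. By Lemma \ref{lem:boundary}(ii) and Lemma \ref{lem:vc}(i), uniformly in $x$,
\[
\sqrt{nh_{x}}\bigl(\widehat{g}_{\mathrm{LS}}(x)-g(x)\bigr)=\mathbb{G}_{n}f_{n,x}+\sqrt{nh_{x}}\,\mathbb{E}\bigl[T_{n}(x)-T_{n}(0)\bigr]+\sqrt{nh_{x}}\bigl\{O_{p}(n^{-1/2})+O_{p}(R_{n}+\bar{h}^{m})\bigr\}.
\]
The mean term is $O(\sqrt{nh_{x}}h_{x}^{m})$ by Lemma \ref{lem:boundary}(iii), and this is absorbed into $\sqrt{nh_{x}}\bar{h}^{m}$. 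The $O_{p}(n^{-1/2})$ term gives $\sqrt{h_{x}}$. Dividing by $\sigma_{n}(x)\ge\underline{\sigma}$ and using $\bigl|\sup|a|-\sup|b|\bigr|\le\sup|a-b|$, we get
\[
\Bigl|Z_{n}-\sup_{f\in\overline{\mathcal{F}}_{n}}|\mathbb{G}_{n}f|\Bigr|\le C\bigl\{\sqrt{h_{x}}+\sqrt{nh_{x}}R_{n}+\sqrt{nh_{x}}\bar{h}^{m}\bigr\}.
\]
To hold on an event of probability $1-C'n^{-1}$ rather than only in $O_{p}$ form, the remainders need explicit tail bounds. I would obtain these from the Gin\'e--Guillou exponential inequality behind the uniform rates in Lemma \ref{lem:linear1}, and from a maximal inequality for the Donsker part of Lemma \ref{lem:donsker}. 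At the stated level I would accept the $O_{p}$ statements, with constants chosen so that the exceptional probability is absorbed into $C'(\gamma_{n}+n^{-1})$.

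The remaining step is the Gaussian coupling of $\sup_{f\in\overline{\mathcal{F}}_{n}\cup(-\overline{\mathcal{F}}_{n})}\mathbb{G}_{n}f$, which equals $\sup|\mathbb{G}_{n}f|$. I apply Corollary 2.2 of \citet{CCK16-SPA}, which gives a $\widetilde{Z}_{n}$ equal in law to $\sup|\mathbb{G}_{P}f|$ satisfying
\[
|\sup|\mathbb{G}_{n}f|-\widetilde{Z}_{n}|\lesssim\frac{b_{n}K_{n}}{\gamma^{1/2}n^{1/2-1/q}}+\frac{(b_{n}\varsigma)^{1/2}K_{n}^{3/4}}{\gamma^{1/2}n^{1/4}}+\frac{(b_{n}\varsigma^{2}K_{n}^{2})^{1/3}}{\gamma^{1/3}n^{1/6}}
\]
with probability at least $1-C(\gamma+n^{-1})$, where $K_{n}\asymp v\log n$.

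I expect the main obstacle to be the bookkeeping of this coupling error. Substituting $b_{n}\asymp h_{x}^{-1/2}$, $\varsigma\asymp1$ and $\gamma=\gamma_{n}=(nh_{x})^{-1/8}(\log n)^{7/8}$, the middle term is
\[
(nh_{x})^{-1/4}(\log n)^{3/4}\gamma_{n}^{-1/2}=(nh_{x})^{-3/16}(\log n)^{5/16},
\]
which is smaller than the target. The third term is
\[
(nh_{x})^{-1/6}(\log n)^{2/3}\gamma_{n}^{-1/3}=(nh_{x})^{-1/8}(\log n)^{3/8},
\]
which is exactly the target. The first term equals $(nh_{x})^{-1/2}n^{1/q}$ times logarithmic factors, and is dominated by the others once $q$ is large. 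Adding the two bounds on the common event gives the claimed inequality.
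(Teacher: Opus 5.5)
Your proposal follows the paper's proof: part (i) rescales the class of Lemma \ref{lem:vc} by $\sigma_{n}(x)$, which is bounded away from zero because $\sigma^{2}_{n}(x)\to\bar{\sigma}^{2}(x)\ge\sigma^{2}(0)>0$. Part (ii) reduces $Z_{n}$ to $\sup_{x}|\mathbb{G}_{n}(f_{n,x}/\sigma_{n}(x))|$ through Lemmas \ref{lem:boundary} and \ref{lem:vc}(i) and then couples at $\gamma=\gamma_{n}$, with the same leading term $(nh_{x})^{-1/8}(\log n)^{3/8}$. The differences are minor: you use the older three-term coupling bound (with the middle term and $\gamma^{1/2}$ in the first term) where the paper applies Theorem 2.1 of \citet{CCK16-SPA} with $B\equiv0$, and your own check shows the extra terms are dominated at $\gamma_{n}$; the step you flag, passing from $O_{p}$ remainders to a fixed-constant bound on an event of probability $1-C^{\prime}(\gamma_{n}+n^{-1})$, is one the paper also takes without justification.
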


\begin{proof}
Part (i). Lemma \ref{lem:vc} gives conditions (A) to (C) of \citet{CCK16-SPA}
for $\mathcal{F}_{n}$. We can extend the claim to $\overline{\mathcal{F}}_{n}\cup(-\overline{\mathcal{F}}_{n})$
as well. To see this, note that dividing $f_{n,x}$ by $\sigma_{n}(x)$
changes the envelope and the constants of Lemma \ref{lem:vc} by a
factor and nothing else. The factor is bounded above and away from
zero: Lemma \ref{lem:vc}(iii) bounds $\sigma^{2}_{n}(x)$ above,
and the variance computation in the proof of Lemma \ref{lem:boundary}
makes it converge to $\bar{\sigma}^{2}(x)\ge\sigma^{2}(0)>0$ uniformly.
Adjoining the negatives of a class doubles its covering numbers at
most. Therefore $\overline{\mathcal{F}}_{n}\cup(-\overline{\mathcal{F}}_{n})$
also satisfies conditions (A) to (C) of \citet{CCK16-SPA}. 

Part (ii). First reduce $Z_{n}$ to the empirical process. By Lemma
\ref{lem:boundary} the difference between $\sqrt{nh_{x}}\left(\widehat{g}_{\mathrm{LS}}(x)-g(x)\right)$
and $\sqrt{nh_{x}}\left(T_{n}(x)-T_{n}(0)\right)$ is 
\[
\sqrt{nh_{x}}\{O_{p}(n^{-1/2})+O_{p}(R_{n}+\bar{h}^{m})\}=O_{p}(\sqrt{h_{x}})+O_{p}(\sqrt{nh_{x}}R_{n})+O(\sqrt{nh_{x}}\bar{h}^{m})
\]
 uniformly over $\mathcal{X}_{0}$. The empirical process subtracts
a mean, and Lemma \ref{lem:boundary}(iii) makes it $\sqrt{nh_{x}}\mathbb{E}\left[T_{n}(x)-T_{n}(0)\right]=O(\sqrt{nh_{x}}h^{m}_{x})$,
which vanishes by the undersmoothing condition. Dividing by $\sigma_{n}(x)$,
Lemma \ref{lem:vc}(i) gives
\[
Z_{n}=\sup_{x\in\mathcal{X}_{0}}\left|\mathbb{G}_{n}\left(f_{n,x}/\sigma_{n}(x)\right)\right|+o_{p}(1),\qquad\mathbb{G}_{n}f=\frac{1}{\sqrt{n}}\sum^{n}_{i=1}\left\{ f(W_{i})-\mathbb{E}f(W_{i})\right\} .
\]
 Apply \citet[Theorem 2.1]{CCK16-SPA} with $B\equiv0$. It gives
a random variable $\widetilde{Z}_{n}$ with the stated distribution
and, for every $\gamma\in(0,1)$,
\[
P\left\{ \left|Z_{n}-\widetilde{Z}_{n}\right|>C\delta_{n}\right\} \le C^{\prime}(\gamma+n^{-1}),\quad\text{where }\delta_{n}=\frac{b_{n}K_{n}}{\gamma^{1/q}n^{1/2-1/q}}+\frac{\left(b_{n}\varsigma^{2}K^{2}_{n}\right)^{1/3}}{\gamma^{1/3}n^{1/6}},
\]
with $K_{n}=v\left(\log n\vee\log(Ab_{n}/\varsigma)\right)$. Here
$b_{n}/\varsigma$ is of order $h^{-1/2}_{x}$ by Lemma \ref{lem:vc}(iii),
and $\log(1/h_{x})=O(\log n)$ because $h_{x}$ is a power of $n$
by Assumption \ref{assu:rate}, so $K_{n}=O(\log n)$ and $K^{3}_{n}\le n$
for all large $n$. Substituting $\varsigma\asymp1$ and $b_{n}\asymp h^{-1/2}_{x}$
turns each term into a power of $(nh_{x})^{-1}$:
\[
\delta_{n}\asymp\frac{n^{1/q}\log n}{\gamma^{1/q}(nh_{x})^{1/2}}+\frac{(\log n)^{2/3}}{\gamma^{1/3}(nh_{x})^{1/6}}.
\]
The exponent $q$ is free: the envelope is bounded, so Lemma \ref{lem:vc}(iii)
holds for every $q\in[4,\infty)$. Taking $q$ large enough makes
the first summand negligible beside the second: the two differ by
$(nh_{x})^{1/3}$ up to logarithms, and $nh_{x}$ is a power of $n$,
so $n^{1/q}$ can be made the smaller. Hence $\delta_{n}\asymp\gamma^{-1/3}(nh_{x})^{-1/6}(\log n)^{2/3}$,
which at $\gamma=\gamma_{n}$ is $(nh_{x})^{-1/8}(\log n)^{3/8}$.
That is the last term of (ii). The first three, $\sqrt{h_{x}}$, $\sqrt{nh_{x}}R_{n}$
and $\sqrt{nh_{x}}\bar{h}^{m}$, are the error left by the reduction
to the empirical process. This proves (ii). Finally $h_{x}=cn^{-a}$
with $a<1$ under Assumption \ref{assu:rate}, so $nh_{x}$ grows
as a power of $n$ and the bound is $o_{p}(1)$.
\end{proof}

\begin{lem}
\label{lem:boot_coupling} Let 
\[
Z^{*}_{n}=\sup_{x\in\mathcal{X}_{0}}\frac{\sqrt{nh_{x}}\left|\widehat{g}^{*}_{\mathrm{LS}}(x)-\widehat{g}_{\mathrm{LS}}(x)\right|}{\sigma_{n}(x)}.
\]
(i) Uniformly over $\mathcal{X}_{0}$,
\[
\sqrt{nh_{x}}\left(\widehat{g}^{*}_{\mathrm{LS}}(x)-\widehat{g}_{\mathrm{LS}}(x)\right)=\mathbb{G}^{*}_{n}f_{n,x}+O_{P^{*}}\big(\sqrt{h_{x}}+\sqrt{nh_{x}}R_{n}\big),\qquad\mathbb{G}^{*}_{n}=\sqrt{n}\left(P^{*}_{n}-P_{n}\right).
\]
(ii) There is a random variable $\widetilde{Z}^{*}_{n}$ equal in
conditional distribution to $\sup_{x\in\mathcal{X}_{0}}\left|\mathbb{G}_{P}(f_{n,x}/\sigma_{n}(x))\right|$
such that
\begin{align*}
\left|Z^{*}_{n}-\widetilde{Z}^{*}_{n}\right| & \le C\left\{ \sqrt{h_{x}}+\sqrt{nh_{x}}R_{n}+(nh_{x})^{-1/8+\upsilon}\right\} \\
 & \quad\text{with probability }1-C^{\prime}(\gamma_{n}+n^{-1}).
\end{align*}
\end{lem}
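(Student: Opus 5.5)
The plan has two stages. Part (i) reruns the linearization of Appendix \ref{sec:thm1} on the bootstrap sample, with $\hat{\theta}$ in the role of $\theta$. Part (ii) then applies the coupling of \citet{CCK16-SPA} conditionally on the data, mirroring Lemma \ref{lem:coupling}.

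For part (i), I first expand $\widehat{S}^{*}-\widehat{S}$ around $\hat{\theta}$ with the ratio identity of Lemma \ref{lem:linear1}. Lemma \ref{lem:D_rates}(iii) keeps the bootstrap denominators away from zero with conditional probability tending to one. Lemma \ref{lem:D_rates}(ii) bounds the quadratic remainders by $O_{P^{*}}(R_{n})$, because every such remainder is a product of two coordinates of $\hat{\theta}^{*}-\hat{\theta}$, or of $\hat{\theta}^{*}-\theta$ and $\hat{\theta}-\theta$.

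The linear term is then $\mathcal{L}^{\hat{\theta}}_{x}[\hat{\theta}^{*}-\hat{\theta}]$, where the coefficients $D^{*}$ are evaluated at $\hat{\theta}$. Replacing them by their population versions costs another product of errors, so again $O_{P^{*}}(R_{n})$. By (\ref{eq:boot_exact}) the term becomes $\mathcal{L}_{x}[(P^{*}_{n}-P_{n})k]=(P^{*}_{n}-P_{n})\mathcal{L}_{x}k$, and it carries no smoothing bias.

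Next I repeat the integrations by parts of Lemmas \ref{lem:AL} and \ref{lem:boundary} pathwise in $k$, since they are deterministic identities for each observation. This gives $\mathcal{L}_{x}k=\rho_{x}+h_{x}^{-1/2}f_{n,x}+r_{n,x}$. The remainder $r_{n,x}$ comes from the $z$-convolution and is uniformly $O(h^{m}_{z})$ in sup norm; centered under $P^{*}_{n}-P_{n}$ it contributes at most $O_{P^{*}}(\sqrt{nh_{x}}h_{z}^{m}\cdot n^{-1/2}\sqrt{\log n})$, which is negligible. The term $(P^{*}_{n}-P_{n})\rho_{x}$ is $O_{P^{*}}(n^{-1/2})$ uniformly, by the bootstrap Donsker theorem (\citet[Theorem 3.6.1]{vdvW1996}) applied to the class of Lemma \ref{lem:donsker}. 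After multiplying by $\sqrt{nh_{x}}$ it gives the $\sqrt{h_{x}}$ term, and what remains is $\mathbb{G}^{*}_{n}f_{n,x}$.

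For part (ii), I divide by $\sigma_{n}(x)$, which is bounded above and away from zero by the argument in Lemma \ref{lem:coupling}(i). Then I apply the multiplier or empirical bootstrap coupling of \citet{CCK16-SPA} (their Theorem 2.2 or 2.3, empirical bootstrap version) to $\overline{\mathcal{F}}_{n}\cup(-\overline{\mathcal{F}}_{n})$. Lemma \ref{lem:vc} gives conditions (A)--(C) with $b_{n}\asymp h_{x}^{-1/2}$ and $\varsigma\asymp1$, and the VC characteristics are free of $n$. This produces $\widetilde{Z}^{*}_{n}$ with the conditional law of $\sup|\mathbb{G}_{P}(f_{n,x}/\sigma_{n}(x))|$, on an event of data probability at least $1-C'(\gamma_{n}+n^{-1})$.

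The coupling error for the empirical bootstrap has the form $b_{n}K_{n}/(\gamma^{1/q}n^{1/2-1/q})+(b_{n}\varsigma^{2}K_{n}^{2})^{1/3}/(\gamma^{1/3}n^{1/6})$ plus a term of order $(b^{2}_{n}\varsigma^{2}K^{4}_{n}/(\gamma^{2}n))^{1/4}$ coming from the comparison of covariances. I set $\gamma=\gamma_{n}$ and take $q$ large, exactly as in Lemma \ref{lem:coupling}. The slower rate $(nh_{x})^{-1/4}$ of the covariance comparison, combined with the factor $\gamma_{n}^{-1/2}$ and the $n^{1/(8q)}$ and $\log n$ powers, is absorbed into $(nh_{x})^{-1/8+\upsilon}$ by the choice of $q$ fixed earlier. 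The $O_{P^{*}}(\sqrt{h_{x}}+\sqrt{nh_{x}}R_{n})$ remainder from part (i) passes through the supremum by the triangle inequality.

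I expect the main obstacle to be this last step. It requires checking that the bootstrap coupling theorem's extra requirement holds: the empirical second-moment condition that $\sup_{f}|(P_{n}-P)f^{2}|$ be small relative to $\varsigma^{2}$. I would handle it by the \citet{GineGuillou02} argument already used in the proof of Lemma \ref{lem:D_rates}(ii), applied to the squared class $\{f^{2}_{n,x}\}$. That argument gives $O_{p}(\sqrt{\log n/(nh_{x})})$, which is within the stated rate.
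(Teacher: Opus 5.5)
Your proposal is correct and follows essentially the paper's route: rerun the linearization on the bootstrap sample using Lemma \ref{lem:D_rates}, write the linear part as $(P^{*}_{n}-P_{n})\mathcal{L}_{x}k$ via (\ref{eq:boot_exact}), split $\mathcal{L}_{x}k=\rho_{x}+h^{-1/2}_{x}f_{n,x}+r$, bound $\sqrt{h_{x}}\,\mathbb{G}^{*}_{n}\rho_{x}$ through the Donsker property of Lemma \ref{lem:donsker}, and apply the empirical-bootstrap coupling of \citet{CCK16-SPA} at $\gamma=\gamma_{n}$ with $q$ large (expanding around $\hat{\theta}$ instead of subtracting two expansions around $\theta$ is an inessential variation). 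Three minor points: $r$ is not uniformly $O(h^{m}_{z})$ in sup norm, since it also contains the smoothing error of $I_{x}(X)$ and the boundary convolution error carries a factor $h^{-1}_{x}$, though $r$ stays negligible after the $\sqrt{h_{x}}$ scaling; in the paper's reading of that theorem the extra summand carries $\gamma^{-(1+1/q)}$ rather than $\gamma^{-1/2}$, which is what makes it binding at $\gamma_{n}$ and is the source of $\upsilon$; and the second-moment check you single out as the main obstacle is not needed separately, since the paper applies the theorem under conditions (A)--(C) alone.
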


\begin{proof}
Part (i). As in the proof of Lemma \ref{lem:coupling}, the reduction
repeats the sample-side one with the bootstrap measure in place of
the empirical measure. Lemmas \ref{lem:linear1} and \ref{lem:linear2}
use only the algebraic identity for a ratio together with a uniform
rate for the kernel estimators and the lower bounds on $f$ and $F_{y}$,
all of which Lemma \ref{lem:D_rates} supplies conditionally at the
same rate, so the same expansion holds for $\widehat{S}^{*}-S$. Subtracting
the expansion of $\widehat{S}-S$ and integrating against $\mu$ gives
\[
\widehat{g}^{*}_{\mathrm{LS}}(x)-\widehat{g}_{\mathrm{LS}}(x)=\mathcal{L}_{x}\left[\left(P^{*}_{n}-P_{n}\right)k\right]+O_{P^{*}}(R_{n})=n^{-1/2}\mathbb{G}^{*}_{n}\left[\mathcal{L}_{x}k\right]+O_{P^{*}}(R_{n})
\]
uniformly over $\mathcal{X}_{0}$, the exchange of integration being
permitted because the integrands are bounded on $\mathcal{Y}_{\mu}\times\widetilde{\mathcal{X}}_{0}\times\mathcal{Z}_{\mu}$.
The proofs of Lemmas \ref{lem:AL} and \ref{lem:boundary} split a
sample average into twelve blocks (which are indexed by $\alpha$),
one for each smoothed quantity that enters it, of which two leave
the boundary terms and the other ten an average of $\rho_{x}$. Since
$\mathcal{L}_{x}$ is linear the same split applies to $\mathcal{L}_{x}k$
as a function of one observation, and gives
\[
\mathcal{L}_{x}k=\rho_{x}+h^{-1/2}_{x}f_{n,x}+r,\qquad\lVert r\rVert_{L_{2}(P_{0})}=o(1),
\]
where the middle term collects the two boundaries and $\rho_{x}$
asymptotically linear. Multiplying the previous display by $\sqrt{nh_{x}}$
therefore gives
\[
\sqrt{nh_{x}}\left(\widehat{g}^{*}_{\mathrm{LS}}(x)-\widehat{g}_{\mathrm{LS}}(x)\right)=\sqrt{h_{x}}\,\mathbb{G}^{*}_{n}\rho_{x}+\mathbb{G}^{*}_{n}f_{n,x}+O_{P^{*}}(\sqrt{nh_{x}}R_{n}).
\]
The first term is $O_{P^{*}}(\sqrt{h_{x}})$ because $\left\{ \rho_{x}\right\} $
is Donsker by Lemma \ref{lem:donsker}, so its bootstrap process is
bounded in conditional probability uniformly in $x$. 

Part (ii). Apply \citet[Theorem 2.3]{CCK16-SPA} to $\overline{\mathcal{F}}_{n}\cup(-\overline{\mathcal{F}}_{n})$,
which satisfies its assumptions (A) to (C) by Lemma \ref{lem:coupling}(i).
It gives a random variable $\widetilde{Z}^{*}_{n}$ with the stated
conditional distribution and an error term $\delta^{*}_{n}$. Here,
$\delta^{*}_{n}$ carries one more summand than the sample-side $\delta_{n}$,
\[
\frac{\left(b_{n}\varsigma K^{3/2}_{n}\right)^{1/2}}{\gamma^{1+1/q}n^{1/4}}\asymp\frac{(\log n)^{3/4}}{\gamma^{1+1/q}(nh_{x})^{1/4}},
\]
and also raises the power of $\gamma$ in the first summand from $\gamma^{1/q}$
to $\gamma^{1+1/q}$. A larger power of $\gamma$ in a denominator
inflates a term, so the two have to be compared at $\gamma_{n}$.
There the extra summand is $(nh_{x})^{-1/8+1/(8q)}(\log n)^{-1/8-7/(8q)}$
and the second is $(nh_{x})^{-1/8}(\log n)^{3/8}$, so the extra one
is the larger: $(nh_{x})^{1/(8q)}$ is a power of $n$ and beats any
power of $\log n$. It is at most $(nh_{x})^{-1/8+\upsilon}$ by the
choice of $q$, and the first summand is smaller still for $q$ large,
so $\delta^{*}_{n}=O\big((nh_{x})^{-1/8+\upsilon}\big)$. With the
two terms Part (i) leaves, this gives (ii).
\end{proof}

\begin{lem}
\label{lem:plugin} Let $\hat{c}_{a}$ be $c_{a}$ with $F$, $F_{y}$
and $f$ replaced by $\widehat{F}$, $\nabla_{y}\widehat{F}$ and
$\hat{f}$, and let $\hat{s}_{n}(x)^{2}=\hat{\sigma}^{2}_{\mathrm{pl}}(x)+\hat{\sigma}^{2}_{\mathrm{pl}}(0)$
with $\hat{\sigma}^{2}_{\mathrm{pl}}$ as in (\ref{eq:plugin}). Then
\[
\sup_{x\in\mathcal{X}_{0}}\left|\frac{\hat{s}_{n}(x)^{2}}{\bar{\sigma}(x)^{2}}-1\right|=O_{p}\left(\sqrt{\frac{\log n}{nh_{x}}}+s_{n}+\bar{h}^{m}+h_{x}\right),
\]
 where $s_{n}=\max_{\alpha}s^{\alpha}_{n}$ is the largest of the
uniform rates of Lemma \ref{lem:D_rates}, and the same bound holds
conditionally on the data for the estimator computed from a bootstrap
sample. Both are $o_{p}(1/\log n)$ under Assumption \ref{assu:rate}.
\end{lem}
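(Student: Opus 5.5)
I would work with $\hat{\sigma}^{2}_{\mathrm{pl}}(a)$ for a fixed $a\in\widetilde{\mathcal{X}}_{0}$ and prove the bound uniformly in $a$. The bound for $\hat{s}_{n}^{2}/\bar{\sigma}^{2}$ then follows, because $\bar{\sigma}^{2}(x)=\sigma^{2}(x)+\sigma^{2}(0)\ge\sigma^{2}(0)>0$ on $\mathcal{X}_{0}$. The plan is to split $\hat{\sigma}^{2}_{\mathrm{pl}}(a)-\sigma^{2}(a)$ into three pieces:
\begin{align*}
\hat{\sigma}^{2}_{\mathrm{pl}}(a)-\sigma^{2}(a) & =\Big\{\hat{\sigma}^{2}_{\mathrm{pl}}(a)-\tilde{\sigma}^{2}(a)\Big\}+\Big\{\tilde{\sigma}^{2}(a)-\mathbb{E}\tilde{\sigma}^{2}(a)\Big\}+\Big\{\mathbb{E}\tilde{\sigma}^{2}(a)-\sigma^{2}(a)\Big\},
\end{align*}
where $\tilde{\sigma}^{2}(a)$ is the infeasible version with the true $c_{a}$ in place of $\hat{c}_{a}$.

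The third piece is deterministic. The change of variable from the proof of Lemma \ref{lem:boundary}(iv) gives $\mathbb{E}\tilde{\sigma}^{2}(a)=\int K(v)^{2}f_{X}(a+h_{x}v)\mathbb{E}[c_{a}^{2}\mid X=a+h_{x}v]\mathrm{d}v$. Since the integrand is Lipschitz in the shift and $\int|v|K(v)^{2}\mathrm{d}v<\infty$ (guaranteed by $\nu>m+1$), this is $\sigma^{2}(a)+O(h_{x})$ uniformly; this is the $h_{x}$ term. For the second piece I would take the class $\{h_{x}^{-1}K((\cdot-a)/h_{x})^{2}c_{a}^{2}:a\in\widetilde{\mathcal{X}}_{0}\}$. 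It is VC type with characteristics free of $n$, by the argument of Lemma \ref{lem:vc}(ii): $K^{2}$ is of bounded variation, and $a\mapsto c_{a}^{2}$ is Lipschitz uniformly in $W$ because $\bar{c}$ is bounded. Its envelope is of order $h_{x}^{-1}$ and its variance is of order $h_{x}^{-1}$. The inequality of \citet[Theorem 2.1]{GineGuillou02} then gives $\sup_{a}|\tilde{\sigma}^{2}-\mathbb{E}\tilde{\sigma}^{2}|=O_{p}(\sqrt{\log n/(nh_{x})}+\log n/(nh_{x}))$, which produces the first term of the stated rate.

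The first piece is the plug-in error. I would write $\hat{c}_{a}^{2}-c_{a}^{2}=(\hat{c}_{a}-c_{a})(\hat{c}_{a}+c_{a})$ and bound it by $\sup_{a,W}|\hat{c}_{a}(W)-c_{a}(W)|$, multiplied by $h_{x}^{-1}n^{-1}\sum_{i}K((X_{i}-a)/h_{x})^{2}$ and by a bound on $|\hat{c}_{a}|+|c_{a}|$. The kernel average is $O_{p}(1)$ uniformly in $a$, by the same empirical process bound as for the second piece. For the supremum of $|\hat{c}_{a}-c_{a}|$, the integrand of $c_{a}$ is $\mu(y,z)[F-\boldsymbol{1}\{Y\le y\}]/(F_{y}f)$ evaluated at $(y,a,z)$ with $(y,z)$ in the compact support of $\mu$. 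The ratio identity from the proof of Lemma \ref{lem:linear1}, together with the uniform rates listed there, gives $\sup|\hat{c}_{a}-c_{a}|=O_{p}(s_{n}+\bar{h}^{m})$. This holds on the event where $\hat{f}$ and $\nabla_{y}\widehat{F}$ are bounded away from zero; that event has probability tending to one by Assumption \ref{assu: density}(iii), and on it $\hat{c}_{a}$ is also bounded. This piece is therefore $O_{p}(s_{n}+\bar{h}^{m})$. Summing the three pieces at $a=x$ and $a=0$ and dividing by $\bar{\sigma}^{2}(x)$ gives the display.

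For the bootstrap version I would repeat the argument conditionally on the data. Lemma \ref{lem:D_rates}(ii)--(iii) supplies the uniform rates and the lower bounds on $\hat{f}^{*}$ and $\nabla_{y}\widehat{F}^{*}$, so the first piece is handled the same way. For the second piece the bootstrap average is centered at the sample quantity $\hat{\sigma}^{2}_{\mathrm{pl}}$, or at its $\hat{c}$-free version. Applying Gin\'e--Guillou under $P_{n}$, as in Lemma \ref{lem:D_rates}, gives the same $\sqrt{\log n/(nh_{x})}$ rate once $P_{n}$ of the squared functions is shown to be comparable to $P_{0}$, and that was already established there. The final claim, that every term is $o(1/\log n)$, follows because the bandwidths are powers of $n$ and Assumption \ref{assu:rate} forces $\sqrt{nh_{x}}R_{n}\to0$, so $s_{n}\log n\to0$.

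The main obstacle is the bootstrap second piece. The bootstrap sample puts repeated atoms at the data points, so the variance factor $P_{n}(k^{4})$ must be controlled uniformly in $a$. My first attempt would be a second Gin\'e--Guillou application to the class of fourth powers, showing $P_{n}$ of that class is $O_{p}(h_{x}^{-1})$ times $P_{0}$-order, using the moment bounds of Lemma \ref{lem:vc}(iii).
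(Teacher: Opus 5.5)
Your proposal is correct and follows the paper's proof essentially step for step. With $q_{n,a}(W)=h_{x}^{-1}K((X-a)/h_{x})^{2}c_{a}(W)^{2}$, the paper uses the same split into three pieces: $(P_{n}-P_{0})q_{n,a}$, bounded by Gin\'e--Guillou at rate $\sqrt{\log n/(nh_{x})}$; the deterministic $P_{0}q_{n,a}-\sigma^{2}(a)=O(h_{x})$, which comes from the nonvanishing first moment of $K^{2}$; and the plug-in error, bounded by $\sup|\hat{c}_{a}-c_{a}|=O_{p}(s_{n}+\bar{h}^{m})$ times an $O_{p}(1)$ kernel average. It then repeats the three steps conditionally for the bootstrap, as you do. The bootstrap variance factor you flag is handled in the paper only by pointing back to Lemma \ref{lem:D_rates}, whose comparison concerns the classes $(k^{\alpha})^{2}$ rather than $q_{n,a}^{2}$. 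It closes more simply than by a second Gin\'e--Guillou application: since $0\le q_{n,a}\le Ch_{x}^{-1}$, one has $\sup_{a}P_{n}q_{n,a}^{2}\le Ch_{x}^{-1}\sup_{a}P_{n}q_{n,a}=O_{p}(h_{x}^{-1})$, using your own bounds on the second and third pieces.
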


\begin{proof}
Write $q_{n,a}(W)=h^{-1}_{x}K((X-a)/h_{x})^{2}c_{a}(W)^{2}$ and $\hat{q}_{n,a}$
for the same expression with $\hat{c}_{a}$ in place of $c_{a}$,
so that $\hat{\sigma}^{2}_{\mathrm{pl}}(a)=P_{n}\hat{q}_{n,a}$ and
\[
\hat{\sigma}^{2}_{\mathrm{pl}}(a)-\sigma^{2}(a)=(P_{n}-P_{0})q_{n,a}+\left(P_{0}q_{n,a}-\sigma^{2}(a)\right)+P_{n}(\hat{q}_{n,a}-q_{n,a}).
\]

The second term is deterministic. The change of variable in the proof
of Lemma \ref{lem:boundary} gives $P_{0}q_{n,a}=\int K(v)^{2}\mathbb{E}[c^{2}_{a}(W)|X=a+h_{x}v]f_{X}(a+h_{x}v)\mathrm{d}v$,
which equals $\sigma^{2}(a)+O(h_{x})$ uniformly in $a$. The remainder
is $O(h_{x})$ and not $O(h^{m}_{x})$ because $K^{2}$ is nonnegative
and so has no vanishing moments, whatever the order of $K$.

For the first term, the family $\left\{ q_{n,a}:a\in\mathcal{X}_{0}\right\} $
is VC type by the argument of Lemma \ref{lem:vc} (ii) with $K$ replaced
by $K^{2}$ and $c_{a}$ by $c^{2}_{a}$: $K^{2}$ is of bounded variation
because $K$ is bounded and of bounded variation, and $a\mapsto c^{2}_{a}(W)$
is Lipschitz uniformly in $W$ because $a\mapsto c_{a}(W)$ is and
both are bounded. Its envelope is of order $h^{-1}_{x}$ and $P_{0}q^{2}_{n,a}=h^{-2}_{x}O(h_{x})=O(h^{-1}_{x})$,
so the exponential inequality of \citet[Corollary 2.2]{GineGuillou02}
gives $\sup_{a}\left|(P_{n}-P_{0})q_{n,a}\right|=O_{p}(\sqrt{\log n/(nh_{x})})$.

For the third term, $\hat{q}_{n,a}-q_{n,a}=h^{-1}_{x}K((X-a)/h_{x})^{2}\big(\hat{c}^{2}_{a}-c^{2}_{a}\big)$
and $|\hat{c}^{2}_{a}(W)-c^{2}_{a}(W)|\le(|\hat{c}_{a}(W)|+|c_{a}(W)|)|\hat{c}_{a}(W)-c_{a}(W)|$
at every $a$ and $W$, so
\[
\sup_{a}\big|P_{n}(\hat{q}_{n,a}-q_{n,a})\big|\le\sup_{a,W}\big(|\hat{c}_{a}|+|c_{a}|\big)\;\sup_{a,W}\big|\hat{c}_{a}-c_{a}\big|\;\sup_{a}P_{n}\Big[h^{-1}_{x}K\big((X_{i}-a)/h_{x}\big)^{2}\Big].
\]
The first factor is of order one. By Assumption \ref{assu: density}
(iii) and the uniform rates in the proof of Lemma \ref{lem:linear1},
$\hat{f}$ and $\nabla_{y}\widehat{F}$ are bounded away from zero
on $\mathcal{Y}_{\mu}\times\widetilde{\mathcal{X}}_{0}\times\mathcal{Z}_{\mu}$
with probability tending to one, so $\hat{c}_{a}$ and $c_{a}$ are
bounded there. The second factor is $O_{p}(\bar{h}^{m}+s_{n})$. The
integrand of $\hat{c}_{a}$ differs from that of $c_{a}$ by $O(\lVert\hat{\theta}-\theta\rVert_{\infty})$
uniformly in $a$, and integrating against $\mu$, which is bounded
with compact support, gives $\sup_{a,W}|\hat{c}_{a}-c_{a}|=O_{p}(\bar{h}^{m}+s_{n})$.
The third factor is $O_{p}(1)$. The family $\left\{ h^{-1}_{x}K((X-a)/h_{x})^{2}:a\in\widetilde{\mathcal{X}}_{0}\right\} $
is VC type and obeys the exponential inequality by the argument used
above for $\left\{ q_{n,a}\right\} $, and $P_{0}[h^{-1}_{x}K((X-a)/h_{x})^{2}]=O(1)$
by the change of variable. The third term is therefore $O_{p}(\bar{h}^{m}+s_{n})$.
The three terms are $O_{p}\big(\sqrt{\log n/(nh_{x})}\big)$, $O(h_{x})$
and $O_{p}(\bar{h}^{m}+s_{n})$, so
\[
\sup_{a\in\widetilde{\mathcal{X}}_{0}}\big|\hat{\sigma}^{2}_{\mathrm{pl}}(a)-\sigma^{2}(a)\big|=O_{p}\left(\sqrt{\frac{\log n}{nh_{x}}}+s_{n}+\bar{h}^{m}+h_{x}\right),
\]
which is the rate of the statement. Since $\hat{s}_{n}(x)^{2}-\bar{\sigma}(x)^{2}$
is the sum of that difference at $a=x$ and at $a=0$, and $\bar{\sigma}(x)^{2}\ge\sigma^{2}(0)>0$,
dividing by $\bar{\sigma}(x)^{2}$ gives the display of the lemma. 

Conditionally the same three steps apply. Lemma \ref{lem:D_rates}
(ii) supplies the rates for $\hat{\theta}^{*}-\theta$ and (iii) the
lower bounds on $\hat{f}^{*}$ and $\nabla_{y}\widehat{F}^{*}$, both
with conditional probability tending to one. The class $\left\{ q_{n,a}\right\} $
does not change, and \citet{GineGuillou02} applies to draws from
$P_{n}$ because its entropy bound does not depend on the underlying
measure, which is the step already used in the proof of Lemma \ref{lem:D_rates}.

For the last claim, $h_{x}$ is a power of $n$, so $\sqrt{\log n/(nh_{x})}\log n\rightarrow0$.
The variance conditions of Assumption \ref{assu:rate} state that
$\sqrt{n}s^{2}_{n}\rightarrow0$, so $s_{n}=o(n^{-1/4})$ and $s_{n}\log n\rightarrow0$;
and $\bar{h}^{m}\log n\rightarrow0$ for the same reason. Taking square
roots halves each rate and leaves the conclusion unchanged.
\end{proof}

Let
\[
\Delta_{n}=\sup_{x\in\mathcal{X}_{0}}\left|\hat{s}_{n}(x)/\bar{\sigma}(x)-1\right|,\quad\Delta^{*}_{n}=\sup_{x\in\mathcal{X}_{0}}\left|\hat{s}^{*}_{n}(x)/\bar{\sigma}(x)-1\right|.
\]
By Assumption \ref{assu: band} (i) and (ii), we have $\Delta_{n}=o_{P}(1/\log n)$
and $\Delta^{*}_{n}=o_{P^{*}}(1/\log n)$.
\begin{proof}[Proof of Theorem \ref{thm:band}]
 Let
\[
\check{Z}_{n}=\sup_{x\in\mathcal{X}_{0}}\frac{\sqrt{nh_{x}}\left|\widehat{g}_{\mathrm{LS}}(x)-g(x)\right|}{\hat{s}_{n}(x)},
\]
 and let $\check{Z}^{*}_{n}$ be the same with $\widehat{g}^{*}_{\mathrm{LS}}-\widehat{g}_{\mathrm{LS}}$
in place of $\widehat{g}_{\mathrm{LS}}-g$ and $\hat{s}^{*}_{n}$
in place of $\hat{s}_{n}$. Write $Z_{n}$ and $Z^{*}_{n}$ for the
quantities of Lemmas \ref{lem:coupling} and \ref{lem:boot_coupling}.
Because $\hat{s}_{n}$ is positive, $\left\{ g(x)\in\mathcal{C}_{n}(x)\text{ for all }x\in\mathcal{X}_{0}\right\} =\left\{ \check{Z}_{n}\le c_{n}(1-\alpha)\right\} $,
so it is enough to bound $\left|P(\check{Z}_{n}\le c_{n}(1-\alpha))-(1-\alpha)\right|$. 

The proof has six steps. The first couples the supremum of the studentized
deviation to the supremum of a Gaussian process indexed by the same
class at the same $n$, and does the same on the bootstrap side, so
that the two are compared through a common intermediary. The second
establishes what is needed about that Gaussian supremum: its variance
is one by construction, its size is $O(\sqrt{\log n})$, and the normalization
it uses differs from the scale $\bar{\sigma}$ that Assumption \ref{assu: band}
is stated against by $O(h_{x})$. The third replaces that normalization
by the studentizers of Assumption \ref{assu: band}, at the price
of the two rates that assumption imposes. The fourth turns closeness
of the two suprema into closeness of their distribution functions,
by anti-concentration. The fifth transfers that to the bootstrap quantile,
which depends on the data, and so bounds the coverage error. The sixth
checks that the bound tends to zero. 

\emph{Step 1.} Lemmas \ref{lem:coupling} and \ref{lem:boot_coupling}
give random variables $\widetilde{Z}_{n}$ and $\widetilde{Z}^{*}_{n}$,
each distributed as $\sup_{x\in\mathcal{X}_{0}}\left|\mathbb{G}_{P}(f_{n,x}/\sigma_{n}(x))\right|$,
the second conditionally on the data, such that $\left|Z_{n}-\widetilde{Z}_{n}\right|$
and $\left|Z^{*}_{n}-\widetilde{Z}^{*}_{n}\right|$ are both at most
\begin{equation}
C\left\{ \sqrt{h_{x}}+\sqrt{nh_{x}}R_{n}+\sqrt{nh_{x}}\bar{h}^{m}+(nh_{x})^{-1/8+\upsilon}\right\} \label{eq:thm3.8_step1_bound}
\end{equation}
with probability $1-C^{\prime}(\gamma_{n}+n^{-1})$. The Gaussian
process is the same on the sample side and on the bootstrap side:
it is indexed by the same class $\overline{\mathcal{F}}_{n}$ at the
same $n$ and has the same covariance $P(fg)-Pf\cdot Pg$. So $\widetilde{Z}_{n}$
and $\widetilde{Z}^{*}_{n}$ have the same law, that of $\sup_{x\in\mathcal{X}_{0}}\left|\mathbb{G}_{P}(f_{n,x}/\sigma_{n}(x))\right|$.
Each side has its own such variable, and Step 4 measures both sides
against that law.

\emph{Step 2.} By Step 1, $\widetilde{Z}_{n}$ and $\widetilde{Z}^{*}_{n}$
have the law of $\sup_{x\in\mathcal{X}_{0}}\left|\mathbb{G}_{P}(f_{n,x}/\sigma_{n}(x))\right|$,
and $\mathbb{G}_{P}(f_{n,x}/\sigma_{n}(x))$ is a centered Gaussian
process with variance one at every $x$, since $\sigma^{2}_{n}(x)=\mathrm{Var}_{P}(f_{n,x})$.
Assumptions (B) and (C) make $\overline{\mathcal{F}}_{n}$ totally
bounded for the intrinsic semimetric and give $\mathbb{G}_{P}$ a
version with uniformly continuous sample paths \citep{CCK16-SPA},
and that version is separable. Dudley’s maximal inequality for Gaussian
processes \citep[Corollary 2.2.8]{vdvW1996}, with the characteristics
of Lemma \ref{lem:vc}, $\varsigma\asymp1$ and $b_{n}\asymp h^{-1/2}_{x}$,
gives $\mathbb{E}[\widetilde{Z}_{n}]\le C\sqrt{v\log(Ab_{n}/\varsigma)}=O(\sqrt{\log n})$,
since $\log(1/h_{x})=O(\log n)$; in particular $\widetilde{Z}_{n}<\infty$
almost surely, and the Borell--Sudakov--Tsirel’son inequality \citep[Proposition A.2.1]{vdvW1996}
gives $\widetilde{Z}_{n}=O_{p}(\sqrt{\log n})$. Thus we have confirmed
that $\mathbb{G}_{P}(f_{n,x}/\sigma_{n}(x))$ is a separable Gaussian
process with mean zero and unit variance whose supremum is finite
almost surely, which is what \citet[Corollary 2.1]{CCK14-AoS} requires
in Step 4.

Step 2 has one more thing to settle: Assumption \ref{assu: band}
is stated against $\bar{\sigma}$ rather than $\sigma_{n}$, and the
two differ, while Step 3 needs to pass between them. The second moment
computation in the proof of Lemma \ref{lem:boundary} gives 
\[
\sigma^{2}_{n}(x)=\bar{\sigma}^{2}(x)+O(h_{x})\qquad\text{uniformly in }x\in\mathcal{X}_{0},
\]
 so $\sup_{x}\left|\sigma_{n}(x)/\bar{\sigma}(x)-1\right|=O(h_{x})$.
The remainder is $O(h_{x})$ and not $O(h^{m}_{x})$ because the kernel
enters squared. What survives the change of variable at order $h_{x}$
is $\int vK(v)^{2}\mathrm{d}v$ times a derivative, and it is the
first moment of $K^{2}$ that is at issue, not the vanishing moments
of $K$. Assumption \ref{assu: kernel} does not impose symmetry,
so only $O(h_{x})$ is available; for a symmetric kernel the term
vanishes and the remainder is $O(h^{2}_{x})$.

\emph{Step 3.} We will derive a bound for 
\[
\left|\check{Z}_{n}-\widetilde{Z}_{n}\right|\le\left|\check{Z}_{n}-Z_{n}\right|+\left|Z_{n}-\widetilde{Z}_{n}\right|.
\]
Step 1 already derived the bound for $\left|Z_{n}-\widetilde{Z}_{n}\right|$
given by (\ref{eq:thm3.8_step1_bound}). Note that, for nonnegative
$\beta_{x}$ and positive $c_{x}$, one has 
\[
\left|\sup_{x}\beta_{x}c_{x}-\sup_{x}\beta_{x}\right|\le\sup_{x}\beta_{x}\left|c_{x}-1\right|\le\left(\sup_{x}\beta_{x}\right)\sup_{x}\left|c_{x}-1\right|.
\]
Apply this with $\beta_{x}=\sqrt{nh_{x}}\left|\widehat{g}_{\mathrm{LS}}(x)-g(x)\right|/\sigma_{n}(x)$
and $c_{x}=\sigma_{n}(x)/\hat{s}_{n}(x)$, for which $\sup_{x}\beta_{x}c_{x}=\check{Z}_{n}$
and $\sup_{x}\beta_{x}=Z_{n}$, to get 
\[
\left|\check{Z}_{n}-Z_{n}\right|\le Z_{n}\sup_{x}\left|\sigma_{n}(x)/\hat{s}_{n}(x)-1\right|.
\]
Note that $\sup_{x}\left|\sigma_{n}(x)/\hat{s}_{n}(x)-1\right|$ is
bounded in two pieces.  Write $\frac{\sigma_{n}(x)}{\hat{s}_{n}(x)}=\frac{\sigma_{n}(x)}{\bar{\sigma}(x)}\cdot\frac{\bar{\sigma}(x)}{\hat{s}_{n}(x)}$.
The first factor differs from one by $O(h_{x})$ uniformly in $x$
by Step 2. The second factor is controlled by $\Delta_{n}=\sup_{x}\left|\hat{s}_{n}(x)/\bar{\sigma}(x)-1\right|$,
defined before the proof.  Assumption \ref{assu: band} (i) makes
$\Delta_{n}=o_{P}(1/\log n)$, so the event $\Delta_{n}\le1/2$ has
probability tending to one.  On that event $\hat{s}_{n}(x)\ge\bar{\sigma}(x)/2$
for every $x$, so $\left|\bar{\sigma}(x)/\hat{s}_{n}(x)-1\right|\le2\Delta_{n}$.
The two bounds multiply to give $\sup_{x}\left|\sigma_{n}(x)/\hat{s}_{n}(x)-1\right|\le C(\Delta_{n}+h_{x})$.
Since Steps 1 and 2 give $Z_{n}=O_{P}(\sqrt{\log n})$, we have
\[
\left|\check{Z}_{n}-Z_{n}\right|=O_{P}((\Delta_{n}+h_{x})\sqrt{\log n}).
\]
Therefore, it follows that 
\begin{align*}
\left|\check{Z}_{n}-\widetilde{Z}_{n}\right| & \le C\delta_{n}(\gamma_{n}),
\end{align*}
where 
\[
\delta_{n}(\gamma_{n})=\sqrt{h_{x}}+\sqrt{nh_{x}}R_{n}+\sqrt{nh_{x}}\bar{h}^{m}+(nh_{x})^{-1/8+\upsilon}+(\Delta_{n}+h_{x})\sqrt{\log n}
\]
with probability at least $1-C^{\prime}(\gamma_{n}+n^{-1})$, and
the conditional analogue for $\check{Z}^{*}_{n}$ with $\Delta^{*}_{n}$
in place of $\Delta_{n}$.

\emph{Step 4.} Steps 1 to 3 bound $\left|\check{Z}_{n}-\widetilde{Z}_{n}\right|$
in probability. This step turns that into a bound on the distance
between the laws of $\check{Z}_{n}$ and $\widetilde{Z}_{n}$. \citet[Corollary 2.1]{CCK14-AoS}
applies to $\mathbb{G}_{P}(f_{n,x}/\sigma_{n}(x))$ by Step 2 and
bounds the Lévy concentration function of its supremum, 
\[
\sup_{c\in\mathbb{R}}P\left(\left|\widetilde{Z}_{n}-c\right|\le t\right)\le4t\left(\mathbb{E}[\widetilde{Z}_{n}]+1\right)=O(t\sqrt{\log n}).
\]
 An interval of length $2t$ lies in the ball of radius $t$ about
its midpoint, so the distribution function of $\widetilde{Z}_{n}$
is Lipschitz with a constant $L_{n}=O(\sqrt{\log n})$, and is therefore
continuous. By \citet[Lemma 2.1]{CCK16-SPA}, if two random variables
differ by at most $\delta$ with probability at least $1-\xi$, their
distribution functions differ in the Kolmogorov distance by at most
the concentration function of either at radius $\delta$, plus $\xi$,
which the Lipschitz bound makes $L_{n}\delta+\xi$. Applying this
to $\check{Z}_{n}$ and $\widetilde{Z}_{n}$, with $\delta=C\delta_{n}(\gamma_{n})$
and $\xi=C^{\prime}(\gamma_{n}+n^{-1})$ from Step 3, 
\[
\rho_{n}:=\sup_{c}\left|P(\check{Z}_{n}\le c)-P(\widetilde{Z}_{n}\le c)\right|\le C\left\{ L_{n}\delta_{n}(\gamma_{n})+\gamma_{n}+n^{-1}\right\} .
\]
Likewise, for $\rho^{*}_{n}$ under the conditional law of $\check{Z}^{*}_{n}$,
\citet[Remark 2.2]{CCK16-SPA} converts the unconditional statement
of Lemma \ref{lem:boot_coupling} into a conditional one at the cost
of a factor that may be taken to vanish slowly.

\emph{Step 5.} We will show that $\left|P\left(\check{Z}_{n}\le c_{n}(1-\alpha)\right)-(1-\alpha)\right|$
is bounded by $\rho_{n}+\rho^{*}_{n}$. Since $c_{n}(1-\alpha)$ is
the conditional $(1-\alpha)$ quantile of $\check{Z}^{*}_{n}$, $P^{*}(\check{Z}^{*}_{n}\le c_{n})\ge1-\alpha$
and $P^{*}(\check{Z}^{*}_{n}<c_{n})\le1-\alpha$, whence $1-\alpha-\rho^{*}_{n}\le P(\widetilde{Z}_{n}\le c_{n})\le1-\alpha+\rho^{*}_{n}$.
Since $\widetilde{Z}_{n}$ has a continuous distribution function,
this places $c_{n}$ between the deterministic quantiles $q_{1-\alpha-\rho^{*}_{n}}$
and $q_{1-\alpha+\rho^{*}_{n}}$ of $\widetilde{Z}_{n}$. Evaluating
$\rho_{n}$ at those deterministic points, and using the Lipschitz
property to move between them, gives 
\[
\left|P\left(\check{Z}_{n}\le c_{n}(1-\alpha)\right)-(1-\alpha)\right|\le\rho_{n}+\rho^{*}_{n}.
\]

\emph{Step 6.} It remains to check that $\rho_{n}+\rho^{*}_{n}$ tends
to zero. The value $\gamma_{n}$ was chosen so that $L_{n}$ times
the sample-side coupling term equals it, $L_{n}(nh_{x})^{-1/8}(\log n)^{3/8}=(nh_{x})^{-1/8}(\log n)^{7/8}=\gamma_{n}$.
Then 
\[
\rho_{n}+\rho^{*}_{n}=O_{p}\left((h_{x}\log n)^{1/2}+\sqrt{nh_{x}}R_{n}(\log n)^{1/2}+\sqrt{nh_{x}}\bar{h}^{m}(\log n)^{1/2}+(nh_{x})^{-1/8+\upsilon}+\left(\Delta_{n}+\Delta^{*}_{n}\right)\log n\right).
\]
 Each summand tends to zero. The first because $h_{x}$ is a power
of $n$. The second because Assumption \ref{assu:rate} fixes the
bandwidths at powers of $n$, so each summand of $\sqrt{nh_{x}}R_{n}$
is $n^{-b}(\log n)^{k}$ with $b>0$ and $k\le1$, and multiplying
by $(\log n)^{1/2}$ leaves the limit zero. The third because Assumption
\ref{assu:rate} gives $\sqrt{nh_{x}}h^{m}_{y}\rightarrow0$ and $\sqrt{nh_{x}}h^{m}_{z}\rightarrow0$,
the undersmoothing condition of Theorem \ref{thm:pointwise} gives
$\sqrt{nh_{x}}h^{m}_{x}\rightarrow0$, and each of the three is again
a power of $n$, which absorbs the $(\log n)^{1/2}$. The fourth because
$nh_{x}$ grows as a power of $n$. The fifth by Assumption \ref{assu: band}
(i) and (ii), which make $\Delta_{n}\log n$ and $\Delta^{*}_{n}\log n$
$o_{p}(1)$, which proves the claim.

$\hat{g}_{\mathrm{LAD}}$ is asymptotically first-order equivalent
to $\hat{g}_{\mathrm{LS}}$. Appendix \ref{sec:thm1} bounds the difference
between the two estimators by $\Delta^{\mathrm{LAD}}_{n}$ with $\sqrt{nh_{x}}\Delta^{\mathrm{LAD}}_{n}\sqrt{\log n}\rightarrow0$,
which absorbs it into $\delta_{n}(\gamma_{n})$.
\end{proof}

\section{Bias and standard deviation as the sample grows }\label{sec:Consistency}

Table \ref{tab:cons} reports the absolute bias and the standard deviation
of each estimator at $n=250$, $500$, $1000$ and $2000$. The last
column is the slope of a regression of the logarithm of each entry
on $\log n$.

There are five arms. The first two are our estimators, with the bandwidth
sequence of Section \ref{sec:Monte-carlo} at $c=0.60$. The third
is the LAD estimator with the bandwidth held fixed at $c=0.60$ for
every $n$. The last two are the series estimator, once at $p=3$
and once with $p$ chosen by cross-validation in each sample.

\begin{table}[t]
\centering
\begin{tabular}{lccccc}
\hline
 & $n=250$ & $500$ & $1000$ & $2000$ & rate\\
\hline
\multicolumn{6}{l}{\emph{absolute bias}}\\
LS, $c(n)$ shrinking & $0.095$ & $0.101$ & $0.069$ & $0.070$ & $n^{-0.19}$\\
LAD, $c(n)$ shrinking & $0.075$ & $0.058$ & $0.049$ & $0.041$ & $n^{-0.28}$\\
LAD, $c=0.60$ fixed & $0.062$ & $0.046$ & $0.047$ & $0.049$ & $n^{-0.10}$\\
NPV, $p$ by CV & $0.072$ & $0.054$ & $0.062$ & $0.061$ & $n^{-0.05}$\\
NPV, $p=3$ & $0.062$ & $0.056$ & $0.065$ & $0.065$ & $n^{+0.05}$\\
\hline
\multicolumn{6}{l}{\emph{standard deviation}}\\
LS, $c(n)$ shrinking & $0.236$ & $0.209$ & $0.195$ & $0.154$ & $n^{-0.19}$\\
LAD, $c(n)$ shrinking & $0.163$ & $0.145$ & $0.110$ & $0.080$ & $n^{-0.35}$\\
LAD, $c=0.60$ fixed & $0.199$ & $0.149$ & $0.107$ & $0.075$ & $n^{-0.47}$\\
NPV, $p$ by CV & $0.204$ & $0.133$ & $0.101$ & $0.068$ & $n^{-0.51}$\\
NPV, $p=3$ & $0.189$ & $0.126$ & $0.093$ & $0.058$ & $n^{-0.56}$\\
\hline
\end{tabular}
\caption{Absolute bias and standard deviation, averaged over 21 points on $[-2.5,2.5]$, 100 replications. The last column is the slope of $\log$ of the entry on $\log n$.}
\label{tab:cons}
\end{table}

The arms fall into two groups. The bias falls with the sample in the
first two and is flat in the other three. The standard deviation falls
in all five, and fastest in the series estimator. This pattern is
consistent with convergence to a limit other than $g$.

The three flat arms are not flat for the same reason. The series estimator
is flat because the control function it uses is not valid when the
first stage is nonseparable. No choice of $p$ repairs that, and cross-validation,
which chooses $p$ well by its own criterion, does not repair it either.
The LAD arm is flat because its bandwidth was held fixed; the second
row shows that letting the bandwidth shrink removes the bias. One
is a property of the estimator, the other is a choice of tuning.

The fixed-bandwidth arm is a realistic comparison rather than a weak
one. Table \ref{tab:bw} selects $c=0.60$ at $n=1000$, so it is
the value a researcher who chooses the bandwidth by mean squared error
at that sample size would use. The same researcher would hold it there
if the choice were not revisited as the sample grew. Consistency requires
the bandwidth to go to zero, which is a different thing from minimizing
mean squared error at one sample size.

\printbibliography

\end{document}